\newif\ifdraft \draftfalse
\newtheorem{theorem}{Theorem}
\newtheorem{definition}[theorem]{Definition}
\newtheorem{fact}[theorem]{Fact}
\newtheorem{corollary}[theorem]{Corollary}
\newtheorem{proposition}[theorem]{Proposition}
\newtheorem{lemma}[theorem]{Lemma}
\newtheorem{claim}[theorem]{Claim}
\newcommand{\calM}{\ensuremath{\mathcal{M}}}
\newcommand{\calN}{\ensuremath{\mathcal{N}}}
\newcommand{\calP}{\ensuremath{\mathcal{P}}}
\newcommand{\calQ}{\ensuremath{\mathcal{Q}}}
\newcommand{\calD}{\ensuremath{\mathcal{D}}}
\newcommand{\calX}{\ensuremath{\mathcal{X}}}
\newcommand{\LR}{\texttt{LR}}
\renewcommand{\Pr}{\mathbb{P}}
\newcommand{\E}{\ensuremath{ \mathbb{E} }}
\newcommand{\cut}[1]{}
\title{Locally Private Mean Estimation: $Z$-test and Tight Confidence Intervals}
\author{Marco Gaboardi\\Dept. of Computer Science\\University of Buffalo\\{\tt gaboardi@buffalo.edu} \and Ryan Rogers\ \\{\tt rrogers386@gmail.com}\and Or Sheffet\\Dept. of Computing Science\\Univresity of Alberta\\{\tt osheffet@ualberta.ca}}
\begin{document}

\maketitle

\begin{abstract}
This work provides tight upper- and lower-bounds for the problem of mean estimation under $\epsilon$-differential privacy in the local model, when the input is composed of $n$ i.i.d. drawn samples from a normal distribution with variance $\sigma$. Our algorithms result in a $(1-\beta)$-confidence interval for the underlying distribution's mean $\mu$ of length  
$\tilde O\left( \nicefrac{\sigma \sqrt{\log(\nicefrac 1 \beta)}}{\epsilon\sqrt n} \right)$. 
In addition, our algorithms leverage binary search using local differential privacy for quantile estimation, a result which may be of separate interest.
Moreover, we prove a matching lower-bound (up to poly-log factors), showing that any one-shot (each individual is presented with a single query) local differentially private algorithm must return an interval of length $\Omega\left( \nicefrac{\sigma\sqrt{\log(1/\beta)}}{\epsilon\sqrt{n}}\right)$.
\end{abstract}

\section{Introduction}
\label{sec:intro}

In the last decade, differential privacy~\citep{DMNS06} has become the de-facto gold standard of privacy-preserving data analysis. Moreover, in recent years, there has been a growing interest in devising differentially private techniques for statistical inference (see Related Work below). However, by and large, these works have focused on the centralized model, where the dataset in its entirety is given to a trusted curator who has direct access to the data. This is in contrast to the trust-free \emph{local} model~\citep{Warner65,KLNRS08}, in which each individual perturbs her own data and broadcasts the noisy (and privacy preserving) outcome.  The local model is growing in popularity in recent years with practical, large scale deployments (see \cite{EPK14,appleDP2016}). 
Yet only a handful of works~\citep{DJW13FOCS, DJW13NIPS, GR18, Sheffet18} examine differentially private statistical inference techniques in the local-model.

This work focuses on the task of \emph{mean estimation} in the local-model. The problem is composed of $n$ i.i.d samples drawn from a Gaussian $X_1,...,X_n \stackrel{\rm i.i.d.}{\sim} \calN(\mu,\sigma^2)$ such that $\mu\in [-R,R]$ for some known bound $R$, and $\sigma$ is either provided as an input (known variance case) or left unspecified (unknown variance case). We point out that the privacy analysis in our algorithms hold even if the assumption of normal data is not satisfied, whereas our utility analysis relies on this assumption.  The goal of our algorithms is to provide an estimation of $\mu$, which may be represented in multiple forms. The classical approach in statistical inference is to represent the likelihood that each point on the real line is $\mu$ with a probability distribution~--- where in the case of known variance  ($Z$-test) the output is a Gaussian distribution, and in the case of unknown variance ($T$-test) the output is a $t$-distribution. This distribution allows an analyst to estimate a \emph{confidence interval} $I$ based on the random sample of data s.t. $\prob{\mu\in I}\geq 1-\beta$, where non-privately it holds that $|I|=O(\nicefrac \sigma {\sqrt n})$ (assuming $\beta$ is a constant). Based on confidence intervals, one is able to reject (or fail-to-reject) certain hypotheses about $\mu$, such as the hypothesis that $\mu=0$ or that the means of two (or more) separate collections of samples ($X_1,...,X_n$ and $ Y_1,...,Y_m$) are identical.

\paragraph{Our Contribution.} The goal of this work is to provide upper- and lower-bounds for the problem of mean-estimation under $(\epsilon,\delta)$-local differentially private (LDP) algorithms
assuming the data is drawn from an unknown Gaussian. For our upper bounds in the case of known variance, we design a $\epsilon$-LDP algorithm,
which yields a confidence interval of length $O(\sigma \cdot \nicefrac {\sqrt{\log(n)}}{\epsilon\sqrt n})$ provided that $n=\Omega(\frac{\log(\nicefrac R \sigma)}{\epsilon^2})$.  In the case of unknown variance we give an algorithm that returns a valid confidence interval of similar length assuming we have a lower bound on the value of the unknown $\sigma$. For our lower-bounds, we prove that any $\epsilon$-LDP algorithm must return an interval whose length is $\Omega(\nicefrac {\sigma}{\epsilon\sqrt n})$, proving the optimality of our technique up to a $\sqrt{\log(n)}$-factor. {In the known variance case, our algorithm results in a private $Z$-test, which we also assess empirically.} 

\subsection{Our Techniques: Overview}
\label{sec:techniques_overview}

\paragraph{Basic Tools.}
In our algorithms, we use two basic LDP canonical algorithms of \emph{Randomized Response}~\citep{Warner65, KLNRS08} and \emph{Bit Flipping} (in its various versions)~\citep{EPK14,BS15,BNST17}. The mechanisms are known, and, for completeness, in Section~\ref{sec:preliminaries} we provide utility bounds for these building blocks under randomly drawn input. 
%
%


\paragraph{The Known Variance Case.}
In the known variance case, our approach is a direct LDP implementation of the ideas behind the algorithm of~\citet{KV18} who provide a private confidence interval in the centralized model. We equipartition the interval where $\mu$ is assumed to be between $[-R,R]$ into $d =\left\lceil \frac{2R}{\sigma} \right\rceil$ sub-intervals of length $\sigma$, and use the above-mentioned Bit Flipping mechanism to find the most likely interval. 
The most common interval must be within distance $2\sigma$ from the mean (with high probability) of the underlying Gaussian distribution. This allows us to narrow in on an interval $I$ of length $O\left(\sigma\sqrt{\log(n/\beta)}\right)$ which should hold $n$ new points from the same distribution with probability at least $1-\beta$. 


Once we have found this interval, we merely project each datapoint onto $I$ 
\ifdefined \toggleLaplace
add Laplace noise with the appropriate scale $O(\sigma\sqrt{\log(n/\beta)} / \epsilon)$ to ensure $\epsilon$-LDP, and average the outcomes.
We know that with probability $2\beta$ that the average of all the Laplace random variables will be at most $O\left( \frac{\sigma\sqrt{\log(n/\beta)\log(1/\beta)}}{\sqrt{n}\epsilon} \right)$ and the $n$ Gaussian terms will all be within $I$.
\else
and add Gaussian noise of $\calN\left( 0, \frac {2|I|^2 \log(2/\delta)}{\epsilon^2} \right)$ 
to the projection, and then average the outcomes. 
This implies we have $n$ i.i.d sample points for a Gaussian of mean $\mu$ and variance $O\left( \frac{\sigma^2 \log(n/\beta)\log(1/\delta)}{\epsilon^2} \right)$.\footnote{Actually, this is an approximation of the distribution, since we clip the original Gaussian. However, since the probability mass we remove is $<\frac \beta n$, the TV-dist to this distribution is $<1/n$.} Thus, $\tilde \mu$, the average of these $n$ noisy datapoints, is also sampled from a Gaussian, whose variance is $\tilde \sigma^2 = O\left( \frac{\sigma^2 \log(n/\beta)\log(1/\delta)}{\epsilon^2n} \right)$. We can thus represent the likelihood that each point on $\mathbb{R}$ is the mean by using a Gaussian $\calN(\tilde \mu, \tilde \sigma^2)$, which is our analog to the $Z$-test. Moreover, the interval of length $2\tilde\sigma\sqrt{\log(4/\beta)}$ centered at $\tilde{\mu}$ is a  $(1-\beta)$-confidence interval. Details appear in Section~\ref{sec:known_variance}, where in Section~\ref{subsec:Ztest} we also present some empirical assessment of our $Z$-test.
\fi

\paragraph{The Unknown Bounded Variance Case.}
We now consider the case of unknown variance, where instead of knowing $\sigma$ we are provided bounds on the smallest and largest (resp.) values of the variance: $\sigma_{\min}, \sigma_{\max}$. First, we illustrate our algorithm in the case where we know $\sigma_{\max} \leq 2R$. This is of course the more natural case, as we think of $R$ as large and $\sigma$ as reasonable. Later, we discuss how to deal with the case of general unknown variance.

In this case, the approach of \citet{KV18} is to estimate the variance using the pairwise differences of the datapoints. That is due to the property of Gaussians where the difference between two i.i.d samples is $\calN(0,2\sigma^2)$. This however is an approach that only works in the centralized model, where one is able to observe two datapoints without noise. In the local model, we are forced to use a different approach. 

The approach we follow is to do binary search for different quantiles of the Gaussian, an approach which has appeared before in certain testers, and in particular in the work of~\citet{Feldman17}. 
Given a quantile $p\in (0,1)$, a continuous and smooth distribution $\calP$, our goal is to find the threshold point $t$ such that $p - \lambda\leq \Prob{X\sim\calP}{X<t} \leq p + \lambda$ for a given tolerance parameter $\lambda>0$. In each iteration $j$, we hold an interval $I^{(j)}$ which is guaranteed to hold $t$, and we use the middle point of this interval as our current guess. Denoting $t^{(j)}$ as the current interval's mid-point, we use enough of the dataset to estimate $\Prob{X\sim\calP}{X<t^{(j)}}$ up to error $\lambda$, and then either halt (if the estimated probability is approximately $p$) or recurse on either the left- or right-half of the interval. Since our initial interval is $[-(R+\sigma_{\max}),R+\sigma_{\max}]$ (of length $<6R$) and we must halt when we reach an interval of length $\Omega(\sigma_{\min})$ (we treat $\lambda$ as a constant), then the number of iterations overall is $T = O(\log(R/\sigma_{\min}))$.

And so, we first run binary search till we find a point $t_1$ for which we estimate that $\Prob{X\sim \calN(\mu,\sigma^2)}{X<t_1} \approx 50\%$. We then find a point $t_2$ for which we estimate that $\Prob{X\sim \calN(\mu,\sigma^2)}{X<t_2} \approx 81.4\%$. Due to the properties of a Gaussian, $t_1\approx \mu$ and $t_2-t_1\approx \sigma$. Of course, we do not have access to the actual quantiles, but rather just an estimation of them, but we are still able to show that w.p. $\geq 1-\beta$ it holds that $0.5\sigma <t_2-t_1 < 2\sigma$. (These bounds explain why taking $\lambda$ as a constant, say $\lambda=0.05$, suffice for our needs.) We can thus run the algorithm for the known variance case with this estimation of the variance on the remainder of the dataset. The full details of our algorithm appear in Section~\ref{sec:unknown_variance_algorithm}.

\paragraph{The General Unknown Variance Case.}
In the general case, where $\sigma_{\max}$ isn't known, we begin by testing to see if the variance is $\geq R$ or $<2R$ by estimating the probability that a new datapoint falls inside the interval $[-2R,2R]$. If this probability is large then we have that $\sigma<2R$ and we can use the previous algorithm for unknown bounded variance; whereas if this probability is small, it must be that $\sigma>R$, and we run a very different algorithm. Instead of binary search, we merely estimate $q_1 \stackrel{\rm def}= \Prob{X\sim\Normal{\mu}{\sigma^2}}{X<-R}$ using the first half of the points, and then estimate $q_2 \stackrel{\rm def}= \Prob{X\sim\Normal{\mu}{\sigma^2}}{X<R}$ using the remaining half of the points. 
Denoting $t_1$ and $t_2$ as the points on the real line for which the {\sf CDF} of a standard normal $\Normal{0}{1}$ equals $q_1$ and $q_2$ respectively, we can now interpolate a Gaussian curve that matches $-R$ to $t_1$ and $R$ to $t_2$ and infer its mean and variance accordingly. 
The key point is that both $-R$ and $R$ are within distance $<2\sigma$ of the true mean $\mu$; so by known properties of the Gaussian distribution, estimating $q_1$ and $q_2$ up to an error of $O(\nicefrac1 {\epsilon\sqrt{n}})$ implies a similar error guarantee in estimating~$\mu$. This approach is discussed in Section~\ref{sec:large_unknown_variance}.

\paragraph{Lower Bounds}
Lastly, we give bounds on any $\epsilon$-LDP algorithm that approximates the mean of a Gaussian distribution. Formally, we say an algorithm \emph{$(\beta,\tau)$-solves the mean-estimation problem} if its input is a sample of $n$ points drawn i.i.d from a Gaussian distribution $\Normal{\mu}{\sigma^2}$ with $\mu\in[-R,R]$ for some given parameter $R$, and its output is an interval $I$ such that $\mu\in I$ w.p. $\geq 1-\beta$ and furthermore $\ex{|I|} \leq \tau$. Note that the probability is taken over \emph{both} the sample draws and the coin-tosses of the algorithm.
We prove that any one-shot, where each datapoint is queried only once, $\epsilon$-locally differentially private algorithm $\calM$  that $(\beta,\tau)$-solves that mean estimation problem must have that $\tau \in \Omega\left( \tfrac{\sigma\sqrt{\log(1/\beta)}}{\epsilon\sqrt{n}}\right)$ and also hold that $n\in\Omega\left( \tfrac 1 {\epsilon^2} \log( \tfrac R{\beta\cdot\tau} )\right)$. 

In addition, we also provide lower bounds for any one-shot $\epsilon$-LDP algorithm that approximates the quantile of a given distribution $\calP$ using i.i.d samples from $\calP$. Our bounds show that dependency on certain parameters ($R, \sigma_{\min}$) is necessary. In particular, if $R$ (or $\sigma_{\min}$) is left unspecified (namely $R=\infty$ and $\sigma_{\min}=0$), then no LDP algorithm can $(\beta,\tau)$-solve the mean-estimation problem.

Note that our upper-bounds are given by $(\epsilon,\delta)$-LDP algorithms, yet our lower bounds deal only with $\epsilon$-LDP algorithms. However, a recent result of~\cite{BNS18} shows that in the local model (as opposed to the centralized model) any $(\epsilon,\delta)$-LDP is equivalent to a $\epsilon$-LDP algorithm. Further details appear in the Preliminaries.

\subsection{Related Work}
\label{subsec:related_work}

Several works have studied the intersection of differential privacy and statistics~\citep{DworkL09, Smith11, ChaudhuriH12, DJW13FOCS, DJW13NIPS, DSZ15} mostly focusing on robust statistics; but only a handful of works study rigorously the significance and power of hypotheses testing under differential privacy~\citep{VuS09, USF13, WLK15, GLRV16, KR17, CDK17, Sheffet17, KV18}. \citet{VuS09} looked at the sample size for privately testing the bias of a coin. \citet{JS13}, \citet{USF13} and~\citet{YFSU14} focused on the Pearson $\chi^2$-test, showing that the noise added by differential privacy vanishes asymptotically as the number of datapoints goes to infinity, and propose a private $\chi^2$-based test which they study empirically. \citet{WLK15}, \citet{GLRV16}, and~\citet{KR17} then revised the asymptotic statistical tests themselves to incorporate the additional noise due to privacy as well as the randomness in the data sample.  \citet{AcharyaSuZh18} and \citet{AliakbarpourDR18} give sample complexity upper and lower bounds in identity and equivalence testing. 
\citet{CDK17} gives a private identity tester based on noisy $\chi^2$-test over large bins. \citet{Sheffet17} studies private Ordinary Least Squares using the JL transform.  All of these works however deal with the centralized-model of differential privacy.

Few additional works are highly related to this work. 
\citet{KV18} give matching upper- and lower-bounds on the confidence intervals for the mean of a population, also in the centralized model.  See also work from \citet{KamathLiSiUl18}, which gives a multivariate extension of estimating Gaussians in the central privacy model.
\citet{DJW13FOCS, DJW13NIPS} give matching upper- and lower-bound on robust estimators in the local model, and in particular discuss mean estimation. However, their bounds are related to minimax bounds rather than mean estimation or $Z$-tests. \citet{GR18} and~\citet{Sheffet18} study the asymptotic power and the sample complexity (respectively) of a variety of $\chi^2$-squared based hypothesis tests in the local model.  Recent work from \citet{AcharyaCaFrTy18} improves on the sample complexity bounds in \citet{Sheffet18} and consider the setting where access to public randomness is allowed.  There is also work from \citet{KairouzBoRa16} on estimating distributions for categorical data in the local privacy model.  Finally, we mention the related work of~\citet{Feldman17} who also discusses mean estimation using a version of a statistical query oracle which is thus related to LDP. Similar to our approach, \citet{Feldman17} also uses the folklore approach of binary search in the case the input variance is significantly smaller than the given bounding interval.

\section{Preliminaries}
\label{sec:preliminaries}

We will write the dataset $\bX \stackrel{i.i.d.}{\sim} \Normal{\mu}{\sigma^2}$ where $\bX = (X_1, \cdots, X_n)$.  Our goals is to develop confidence intervals for the mean $\mu$ subject to local differential privacy in two settings: (1) known variance, (2) unknown variance.  We assume that the mean $\mu$ is in some finite interval $\mu \in [-R,R]$ and similarly for the standard deviation $\sigma \in [\sigma_{\min}, \sigma_{\max}]$, if it is not known a priori.   We first present the definition of differential privacy in the \emph{curator} model, where the algorithm takes a single element from universe $\cX$ as input.
\begin{definition}[\cite{DMNS06,DKMMN06}]
An algorithm $\cM: \cX \to \cY$ is $(\epsilon,\delta)$-differentially private (DP) if for all $x, x' \in \cX$ and for all outcomes $S \subseteq \cY$, we have
$$
\prob{\cM(x) \in S} \leq e^\epsilon\prob{\cM(x') \in S} + \delta.
$$
\end{definition}
We then define \emph{local} differential privacy, formalized by \cite{KLNRS08}, which does not require individuals to release their raw data to some curator, but rather each data entry is perturbed to prevent the true entry from being stored. 

\begin{definition}[LR Oracle]
Given a dataset $\bbx$, a \emph{local randomizer oracle} $\LR_{\bbx}(\cdot, \cdot)$ takes as input an index $i \in[n]$ and an $(\epsilon,\delta)$-DP algorithm $R$, and outputs $y \in \cY$ chosen according to the distribution of $R(x_i)$, i.e. $\LR_{\bbx}(i,R) = R(x_i)$.  
\end{definition}

\begin{definition}[\citet{KLNRS08}]
An algorithm $\alg: \cX^N \to \cY$ is $(\epsilon,\delta)$-\emph{local differentially private} (LDP) if it accesses the input database $\bbx \in \cX^n$ via the LR oracle $\LR_{\bbx}$ with the following restriction: if $\LR(i,R_j)$ for $j \in [k]$ are the $\alg$'s invocations of $\LR_{\bbx}$ on index $i$, then each $R_j$ for $j \in [k]$ is  $(\epsilon_j,\delta_j)$- DP and $\sum_{j=1}^k \epsilon_j \leq \epsilon$, $\sum_{j=1}^k \delta_j \leq \delta$.
\end{definition}

In this work we present and prove bounds regarding \emph{one-shot} mechanisms, where an algorithm is allowed to only query a user once and then she is never queried again.  

\begin{definition}
	\label{def:one_shot_local_DP}
	We say a randomized mechanism $\calM$~is a \emph{one-shot local differentially private} if for any dataset input $D$, $\calM$ interacts with datum $x_i$ by first choosing a single differentially private mechanism $\calM_i$, applying $\calM_i(x_i)$ and then only post-processes the resulting output without any further interaction with $x_i$. In other words, $\calM$ has only one-round of interaction with any datapoint. As a result $\calM(D)$ is merely post-processing of the length $n$ vector of outputs $( \calM_1(x_1), \calM_2(x_2),..., \calM_n(x_n) )$.
\end{definition}
Note that the definition of a one-shot mechanism does not rule out choosing the separate mechanisms \emph{adaptively}~--- it is quite possible that $\calM_i$ depends on previous outcomes $\calM_{j}(x_j)$ for $j < i$. The definition only rules out the possibility of $\calM$ revisiting the datum of an individual based on prior responses from this datum.

We now present a result from \cite{BNS18}, which shows that \emph{approximate} differential privacy, i.e. $(\epsilon,\delta)$-DP where $\delta>0$, cannot provide more accurate answers than \emph{pure}-differential privacy, i.e. $\delta =0$, in the local setting.  This is another significant difference between the local and central model due to the fact that \emph{approximate}-DP answers can be significantly more accurate than \emph{pure}-DP answers in the central model.  
\begin{theorem}[\citet{BNS18}]
Fix parameter $\eta >0$.  Let $\calM$ be $(\epsilon,\delta)$-LDP with $\epsilon \leq 1/4$ and $\delta \leq \tfrac{\epsilon\eta}{48 n \log(2n/\eta)} $.  Then there exists an algorithm $\calM'$ that is $10\epsilon$-LDP and $\calM'(\bbx)$ has total variation distance of at most $\eta$ from $\calM(\bbx)$ for any input $\bbx$.  
\label{thm:approx2pure}
\end{theorem}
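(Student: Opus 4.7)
My plan is to replace each local randomizer invoked by $\calM$ by a pure-DP analog that is close in total variation distance, and then aggregate the errors. Since $\calM$ is LDP, $\calM(\bbx)$ is, up to post-processing, the tuple $(R_1(x_1), \dots, R_n(x_n))$, where each $R_i$ is a data-independent $(\epsilon_i, \delta_i)$-DP mechanism (possibly chosen adaptively by $\calM$ from the previous responses) with $\sum_i \epsilon_i \leq \epsilon$ and $\sum_i \delta_i \leq \delta$.

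The crux is a single-mechanism lemma: any $(\epsilon', \delta')$-DP randomizer $R$ with $\epsilon' \leq 1/4$ admits a replacement $R'$ that is pure $10\epsilon'$-DP and satisfies $d_{\mathrm{TV}}(R(x), R'(x)) \leq \alpha(\epsilon', \delta')$ for every input $x$, where $\alpha(\epsilon', \delta') = O(\delta' \log(1/\delta')/\epsilon')$. I would construct $R'$ by fixing a reference input $x_0 \in \cX$ and, for each $x$, defining a "typical" output set $T_x = \{y : R(x)(y) \leq e^{c \epsilon'} R(x_0)(y)\}$ for a suitably chosen $c$. Applying the $(\epsilon', \delta')$-DP guarantee to the set $T_x^c$ (and rearranging, since on $T_x^c$ the density ratio is large) shows that $R(x)$ places at most $\alpha(\epsilon', \delta')$ mass outside $T_x$; the $\log(1/\delta')$ factor arises from a dyadic decomposition of the privacy-loss random variable $\log(R(x)(y)/R(x_0)(y))$. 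Then $R'(x)$ is obtained by truncating $R(x)$ to $T_x$ and redistributing the lost mass according to the reference $R(x_0)$; the pointwise bound $R'(x)(y) \leq e^{c\epsilon'} R(x_0)(y)$ together with a symmetric lower bound (ensured by the redistribution) gives pure $O(\epsilon')$-DP, with constants chosen so the factor is $10$.

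Applying this lemma to each $R_i$ yields $R_i'$ that is pure $10\epsilon_i$-DP with $d_{\mathrm{TV}}(R_i(x), R_i'(x)) \leq \alpha_i := O(\delta_i \log(1/\delta_i)/\epsilon_i)$. Define $\calM'$ to apply the same post-processing to $(R_1'(x_1), \dots, R_n'(x_n))$; this is $10\epsilon$-LDP since $\sum_i 10\epsilon_i \leq 10\epsilon$. The chain rule for TV distance along the adaptive interaction, combined with the data-processing inequality, gives $d_{\mathrm{TV}}(\calM(\bbx), \calM'(\bbx)) \leq \sum_i \alpha_i$. The hypothesis $\delta \leq \epsilon\eta/(48 n \log(2n/\eta))$ is precisely what is needed to bound each $\log(1/\delta_i)$ by $O(\log(2n/\eta))$ and then $\sum_i \alpha_i \leq \eta$.

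The main obstacle I anticipate is the single-mechanism construction: to obtain \emph{pure} DP one needs both upper and lower pointwise density-ratio bounds between $R'(x)$ and $R'(x')$ for every pair $x, x'$, not just against the reference $x_0$. The redistribution via $R(x_0)$ must be engineered so that these pairwise bounds survive renormalization, which requires a careful choice of the truncation threshold (giving the $\log(1/\delta')$ factor in $\alpha$) and delicate bookkeeping when splicing the truncated distribution together with the reference. Once the single-mechanism step is in place, the rest of the argument is standard coupling and composition.
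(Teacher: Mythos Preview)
The paper does not prove this theorem; it is quoted verbatim as a result of \citet{BNS18} and used as a black box (see the corollary immediately following Theorem~\ref{thm:known_varCI}). There is therefore no ``paper's own proof'' to compare your proposal against.

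For what it is worth, your sketch is the right shape and matches the argument in the cited reference: one converts each individual $(\epsilon_i,\delta_i)$-DP local randomizer into a pure $O(\epsilon_i)$-DP randomizer by truncating the privacy-loss random variable relative to a fixed reference input and redistributing the excised mass, incurring a per-randomizer TV error of order $\delta_i\log(1/\delta_i)/\epsilon_i$; the adaptive composition of TV errors then gives the $\eta$ bound under the stated constraint on $\delta$. The difficulty you flag---getting two-sided pointwise density-ratio control after truncation and renormalization so that the result is genuinely pure DP for \emph{every} pair of inputs, not just against the reference---is exactly the delicate part of that construction, and your proposal correctly identifies it without fully resolving it. If you want to carry this through, that is the step requiring real care; everything else is routine.
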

This result will prove to be useful in showing that our local private confidence interval widths are tight up to polylogarithmic terms.  Note that this result was extended to other values of $\epsilon$ by \citet{CSUZZ18}.  

We next define our utility goal, which is to find confidence intervals that contain the mean parameter $\mu$ with high probability, where the probability is over the sample and the randomness of the LDP algorithm.

\begin{definition}[Confidence Interval]
An algorithm $\cM: \R^N \to 2^\R$ produces a valid $(1-\beta)$-confidence interval for the mean of the underlying Gaussian distribution $\Normal{\mu}{\sigma^2}$
if the following holds
$$
\Prob{\bX \stackrel{i.i.d.}{\sim} \Normal{\mu}{\sigma^2}, \cM(\bX)}{\mu \in \cM(\bX)} \geq 1- \beta
$$
\end{definition}
Our primary objective is to design an algorithm that is $(\epsilon,\delta)$-LDP that also produces a valid $(1-\beta)$-confidence interval.

\paragraph{Useful Bounds.} Throughout this paper, we use several concentration bounds, especially for Gaussians, where it is known that for any $\beta \in (0,\nicefrac 1 2)$ we have
\[ \Prob{X\sim\Normal{\mu}{\sigma^2}}{|X-\mu| > \sigma\sqrt{2\log(\nicefrac 2 \beta)}} \leq\beta \]

A useful tool in our analysis is the following well-known variation of  McDiarmid's inequality. The Hoeffding inequality is a direct result of it, in the case all random variables are distributed i.i.d.
\begin{fact}
	\label{fact:McDiarmid}[McDiarmid's Inequality]
	Let $X_1,...X_n$ be $n$ independent random variables. Denote $B_1,...,B_n$ and $\mu_1,...,\mu_n$ such that $\forall i, |X_i|\leq B_i$ and $\E[X_i]=\mu_i$. Then for any $t>0$ we have
	\[  \prob{ \left|\sum_i X_i - \sum_i\mu_i\right| > t    } \leq 2\exp\left( -\frac{2t^2}{\sum_i B_i^2} \right) \]
\end{fact}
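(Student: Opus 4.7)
The plan is to prove this via the classical Chernoff / exponential moment method, handling each tail separately and then combining with a union bound. Center the variables by setting $Y_i = X_i - \mu_i$, so each $Y_i$ is mean-zero, independent, and bounded (since $|X_i|\le B_i$ forces $|\mu_i|\le B_i$, hence $|Y_i|\le 2B_i$).

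First, I would fix a parameter $\lambda>0$ and apply Markov's inequality to the exponentiated sum:
\[
\Pr\!\left[\sum_i Y_i > t\right] \;\le\; e^{-\lambda t}\, \E\!\left[e^{\lambda \sum_i Y_i}\right] \;=\; e^{-\lambda t}\prod_i \E\!\left[e^{\lambda Y_i}\right],
\]
where the factorization uses independence. The next step is to bound each factor $\E[e^{\lambda Y_i}]$ using Hoeffding's lemma: any mean-zero random variable supported in an interval of length $L$ satisfies $\E[e^{\lambda Y}]\le \exp(\lambda^2 L^2/8)$. Applied here with $L\le 2B_i$ (or the appropriate range one obtains directly from $|X_i|\le B_i$), this gives $\E[e^{\lambda Y_i}]\le \exp(\lambda^2 B_i^2/2)$.

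Third, I would combine and optimize. Substituting yields $\Pr[\sum_i Y_i > t] \le \exp\!\big({-}\lambda t + \tfrac{\lambda^2}{2}\sum_i B_i^2\big)$, and minimizing in $\lambda$ (at $\lambda^* = t/\sum_i B_i^2$) produces a sub-Gaussian tail of the form $\exp\!\big({-}t^2/(2\sum_i B_i^2)\big)$. An identical argument applied to $-Y_i$ controls the lower tail, and a union bound over the two tails delivers the stated two-sided inequality, up to the constant in the exponent that appears in the paper's formulation (which can be sharpened by carefully using the tightest available range for each $Y_i$ inside Hoeffding's lemma).

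The only non-routine ingredient is Hoeffding's lemma itself; its proof uses convexity of $x\mapsto e^{\lambda x}$ to dominate $e^{\lambda Y}$ by an affine function on the bounded interval, takes expectations, and then does a second-order Taylor expansion of the resulting log-moment-generating function to extract the $\lambda^2 L^2/8$ bound. I expect this to be the most delicate step to write out precisely, whereas the Chernoff-bound assembly and union bound are entirely mechanical. Note also that independence of the $X_i$'s is used only to factor the MGF; this is strictly weaker than what McDiarmid's inequality proper requires (a bounded-differences martingale argument), so for this specific sum-of-bounded-variables form one does not actually need the full martingale machinery.
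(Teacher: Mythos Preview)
The paper does not actually prove this statement: it is presented as a ``Fact'' in the preliminaries and cited as a well-known variation of McDiarmid's inequality, with no argument given. So there is no paper proof to compare against; your Chernoff/Hoeffding-lemma route is the standard textbook derivation and is entirely appropriate here.

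One point worth flagging: your computation is correct, and it yields the exponent $-t^2/\big(2\sum_i B_i^2\big)$, not the $-2t^2/\sum_i B_i^2$ that the paper writes. Under the stated hypothesis $|X_i|\le B_i$, each $X_i$ ranges over an interval of length $2B_i$, so the sharp Hoeffding constant is $2t^2/\sum_i(2B_i)^2 = t^2/\big(2\sum_i B_i^2\big)$; this cannot be improved to the paper's constant without a stronger assumption (e.g.\ $X_i\in[0,B_i]$ or, in the bounded-differences formulation, that changing coordinate $i$ moves the function by at most $B_i$). So your hedge that the constant ``can be sharpened'' to match the paper is not quite right: the discrepancy is on the paper's side, not yours. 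In the paper's applications (Claims~\ref{clm:RR_utility_random_input} and~\ref{clm:BF_utility_random_input}) the quantity plugged in as $B_i$ is in fact the bounded-difference constant rather than a bound on $|X_i|$, which is consistent with the constant $2$; the statement of the Fact is simply loosely worded.
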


\subsection{Existing Locally Private Mechanisms}
\label{subsec:basic_tools}

A basic approach to preserve differential privacy is to use additive random noise. Suppose each datum is sampled from an interval $I$ of length $\ell$. Then adding random noise taken from $\Lap(\nicefrac \ell \epsilon)$ to each datum (independently) guarantees $\epsilon$-differential privacy~\citep{DMNS06}; and adding random noise taken from $\Normal{0}{\nicefrac {2\ell^2\log(\nicefrac{2}{\delta})} {\epsilon^2}}$ to each datum (independently) guarantees $(\epsilon,\delta)$-differential privacy~\citep{DKMMN06}.

Another canonical $\epsilon$-local differentially private algorithm is the \emph{randomized response} algorithm \citep{Warner65}. In this mechanism, each datum is a bit       $\{0,1\}$ and on each datum we operate independently, applying $\RR:\{0,1\} \to \{0,1\}$ where
$$
\RR(b) = \left\{ \begin{array}{lr}
			b & \text{ w.p. } \frac{e^\epsilon}{1+e^\epsilon} \\
			1-b & \text{ else}
\end{array}
\right.
$$
It is straight-forward to see that on an input composed of $m$ many $1$s and $n-m$ many $0$s, the expected number of $1$s in the output is \[ m\cdot \frac{e^\epsilon}{1+e^\epsilon} + (n-m) \frac{1}{1+e^\epsilon} = \frac{n}{1+e^\epsilon} + m\cdot \frac{e^\epsilon-1}{1+e^\epsilon}  \]
and so the na\"ive estimator for the number of $1$s in the input is
\begin{equation}
 \hat{\theta_{\rm RR}} \stackrel{\rm def}{=}
~~~\frac{e^\epsilon +1 }{e^\epsilon-1}\sum_{i = 1}^n \RR(b_i) - \frac n {e^\epsilon-1} 
\label{eq:RR_estimator}
\end{equation}

The following claim summarizes a folklore result about input chosen i.i.d from a distribution. This will be useful in the sequel for our results. 

\begin{claim}
	\label{clm:RR_utility_random_input}
	Let $\calX$ be a domain and let $\calD$ be a distribution over this domain. Given a predicate $\phi:\calX \to\{0,1\}$, we denote $p = \Ex{X\sim \calD}{\phi(X)}$.
	Given $n$ i.i.d draws $\bX$ from $\calD$, denote by $\hat{\theta_{\rm RR}}(n,\phi)$ the randomized response estimator in~\eqref{eq:RR_estimator} applied to the $n$ bits $\phi(X_1), \phi(X_2), ..., \phi(X_n)$. 
	Fix any $\alpha,\beta \in (0,\tfrac 1 2)$. Then if $n\geq \tfrac 2 {\alpha^2} \left( \tfrac{e^\epsilon+1}{e^\epsilon-1}\right)^2\log(\tfrac 4 \beta)$ then we have that \[ \prob{ \left|\tfrac 1 n \cdot \hat{\theta_{\rm RR}}(n,\phi)- p\right|\leq \alpha  }\geq 1-\beta  \]
\end{claim}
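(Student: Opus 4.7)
The plan is to decompose the total deviation into a sampling-noise piece and a randomized-response-noise piece and apply Fact~\ref{fact:McDiarmid} (Hoeffding) separately to each. Specifically, I would write
\[
\tfrac{1}{n}\hat{\theta_{\rm RR}}(n,\phi) - p \;=\; \Bigl(\tfrac{1}{n}\hat{\theta_{\rm RR}}(n,\phi) - \tfrac{1}{n}\textstyle\sum_{i=1}^n\phi(X_i)\Bigr) \;+\; \Bigl(\tfrac{1}{n}\textstyle\sum_{i=1}^n\phi(X_i) - p\Bigr),
\]
so that it suffices to bound each of the two summands by $\alpha/2$, each with probability at least $1-\beta/2$, and then union-bound the failure events.

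For the second (sampling) term, the bits $\phi(X_1),\ldots,\phi(X_n)$ are i.i.d.\ $\mathrm{Bernoulli}(p)$ random variables supported in $[0,1]$, so Fact~\ref{fact:McDiarmid} directly yields that $|\tfrac{1}{n}\sum_i\phi(X_i)-p|\le \alpha/2$ with probability at least $1-\beta/2$ whenever $n\ge \tfrac{2}{\alpha^2}\log(\tfrac{4}{\beta})$, which is strictly weaker than the hypothesis on $n$ in the claim.

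For the first (RR-noise) term, I would condition on the realized bits $\phi(X_1),\ldots,\phi(X_n)$ and work only over the coins of the $\RR$ mechanism, which are independent of the draw of $\bX$. A one-line calculation verifies conditional unbiasedness: for every $b\in\{0,1\}$,
\[
\E\!\left[\tfrac{e^\epsilon+1}{e^\epsilon-1}\,\RR(b) - \tfrac{1}{e^\epsilon-1}\right] \;=\; \tfrac{e^\epsilon+1}{e^\epsilon-1}\cdot\tfrac{1+b(e^\epsilon-1)}{e^\epsilon+1} - \tfrac{1}{e^\epsilon-1} \;=\; b,
\]
so $\E[\hat{\theta_{\rm RR}}(n,\phi)\mid \phi(X_1),\ldots,\phi(X_n)]=\sum_i\phi(X_i)$. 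Moreover, each summand $Y_i := \tfrac{e^\epsilon+1}{e^\epsilon-1}\RR(\phi(X_i))-\tfrac{1}{e^\epsilon-1}$ lies in an interval of length $\tfrac{e^\epsilon+1}{e^\epsilon-1}$, so Fact~\ref{fact:McDiarmid} applied conditionally to the independent $Y_i$'s gives $|\tfrac{1}{n}\hat{\theta_{\rm RR}}(n,\phi) - \tfrac{1}{n}\sum_i\phi(X_i)|\leq \alpha/2$ with probability at least $1-\beta/2$ as soon as $n\ge \tfrac{2}{\alpha^2}\bigl(\tfrac{e^\epsilon+1}{e^\epsilon-1}\bigr)^2\log(\tfrac{4}{\beta})$, which is exactly the hypothesis and is the dominant of the two sample-size requirements.

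There is no substantive obstacle here; the proof is essentially bookkeeping. The only things to take care with are the small algebraic identity showing that the $\RR$ debiasing yields $b$ in expectation, and the fact that the $\tfrac{e^\epsilon+1}{e^\epsilon-1}$ scaling gets squared inside Hoeffding's exponent, which is precisely what forces the $\bigl(\tfrac{e^\epsilon+1}{e^\epsilon-1}\bigr)^2$ factor in the claimed sample complexity.
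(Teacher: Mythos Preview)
Your proposal is correct and essentially identical to the paper's own proof: the paper also splits the error as $(\tfrac{1}{n}\hat{\theta_{\rm RR}} - \tfrac{m}{n}) + (\tfrac{m}{n} - p)$ with $m=\sum_i\phi(X_i)$, bounds the sampling term by Hoeffding, and bounds the RR-noise term (conditionally on the realized bits) by Fact~\ref{fact:McDiarmid} using that each coordinate affects $\hat{\theta_{\rm RR}}$ by at most $\tfrac{e^\epsilon+1}{e^\epsilon-1}$. The only cosmetic difference is that the paper phrases the second step as a bounded-differences application to the function $\hat{\theta_{\rm RR}}$, while you phrase it as Hoeffding on the independent summands $Y_i$; these are the same inequality here.
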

\begin{proof}
	The proof applies both the Hoeffding and the McDiarmid inequality. Denoting $m$ as the number of $1$s in the sampled input, we argue that when $n$ is large enough we have that
	\[ \prob{ \left|p-\tfrac m n\right| > \tfrac \alpha 2  }\leq \tfrac \beta 2 \qquad{\rm and}\qquad \prob{ \left| \hat{\theta_{\rm RR}}(n,\phi)-  m \right|> \tfrac {\alpha n} 2  }\leq \tfrac \beta 2  \]
	The first of the two inequalities is an immediate consequence of the Hoeffding bound, stating that in the process of sampling the $n$ entries from the distribution, $\prob{ |p-\tfrac m n| > \tfrac \alpha 2  }\leq 2\exp(-2n\alpha^2/4) \leq \tfrac \beta 2$ since $n\geq  \tfrac 2 {\alpha^2} \log(\nicefrac 4 \beta)$. Having fixed the input to have exactly $m$ ones, it is evident that $\hat{\theta_{\rm RR}}$ is a function of the $n$-bit input $\phi(x_1),...,\phi(x_n)$, with $\E[\hat{\theta_{\rm RR}}] = m$ and where each datum can affect its value by at most $\tfrac{e^\epsilon+1}{e^\epsilon-1}$. McDiarmid's inequality thus states that 
	\begin{align*} \prob{ \left|\hat{\theta_{\rm RR}}-m\right|> \tfrac {\alpha n} 2   } &\leq 2 \exp\left( -\tfrac {2\alpha^2 n^2 / 4} {n \cdot\left( \frac{e^\epsilon+1}{e^\epsilon-1}\right)^2}\right) 
	\cr &= 2\exp\left( -n \frac{\alpha^2} {2\left( \tfrac{e^\epsilon+1}{e^\epsilon-1}\right)^2}  \right) 
	\leq \tfrac \beta 2 \end{align*} as $n\geq \tfrac 2 {\alpha^2} \left( \tfrac{e^\epsilon+1}{e^\epsilon-1}\right)^2\log(\nicefrac 4\beta) $.
\end{proof}

Another useful local differentially private algorithm is the bit flipping algorithm~\citep{EPK14, BS15}.  Let $\cX$ be a domain and let $\phi:\cX \to \{1,2,..,d\}$ be a partition of $\cX$ into $d$ types. This allows us to identify each datum $x_i$ in our dataset with a $d$-dimensional vector indicating the type $\phi(x_i)$ using a standard basis vector, or one-hot vector. The Bit Flipping mechanism now runs $d$ independent randomized response mechanism for each coordinate separately, where the privacy-loss for each coordinate is set as $\nicefrac \epsilon 2$. Therefore, per datum we output a vector $\pmb V_i\in\{0,1\}^d$, and seeing as each coordinate is slightly skewed towards $0$ or $1$, then de-biasing with the following estimator is likely to produce a good approximation of the true histogram for the input dataset:
\begin{equation}
\label{eq:bit_flippiing_estimator}
\hat{\theta_{\rm BF}} \stackrel{\rm def}{=}\sum_i \frac{e^{\epsilon/2} + 1}{e^{\epsilon/2}-1} \cdot \left(\pmb {V}_i - \frac{1}{1+e^{\epsilon/2}}\cdot \pmb{1}\right)
\end{equation} 
Again, our focus is on the performance of the bit flipping mechanism over random input. Specifically, in the sequel we will used the following property.

\begin{claim}
	\label{clm:BF_utility_random_input}
	Let $\calX$ be a domain and let $\calD$ be a distribution over this domain. Given a domain partition $\phi:\calX \to\{1,2,..,d\}$, we denote $\pmb p$ as the vector whose $j$th entry is $p_j = \Ex{X\sim \calD}{\phi(X)=j}$.
	Given $X_1,...,X_n \stackrel{i.i.d.}{\sim}\calD$, we denote the bit-flipping histogram applied to the $n$ $d$-dimensional standard-basis vectors  $\pmb e_{\phi(X_1)}, \pmb e_{\phi(X_2)},...,\pmb e_{\phi(X_n)}$.	
	Fix any $\alpha,\beta \in (0,\tfrac 1 2)$. Then if $n\geq \tfrac 2 {\alpha^2} \left( \tfrac{e^{\epsilon/2}+1}{e^{\epsilon/2}-1}\right)^2\log(\nicefrac {4d}\beta) $ then we have that \[ \prob{ \|\tfrac 1 n \cdot \hat{\theta_{\rm BF}}(n,\phi)- \pmb p\|_\infty \leq \alpha  }\geq 1-\beta  \]
\end{claim}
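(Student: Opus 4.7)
The plan is to reduce the claim to $d$ parallel applications of the previous randomized-response analysis (Claim~\ref{clm:RR_utility_random_input}) and then union bound over the coordinates. Observe that the $j$th coordinate of $\hat\theta_{\rm BF}$ from~\eqref{eq:bit_flippiing_estimator} is exactly the randomized-response estimator from~\eqref{eq:RR_estimator}, instantiated with privacy parameter $\epsilon/2$, applied to the $n$ bits $b_i^{(j)} = \mathbbm{1}[\phi(X_i)=j]$. Moreover, $p_j = \Prob{X\sim\calD}{\phi(X)=j}$ is the true expectation of each $b_i^{(j)}$. Hence the $\ell_\infty$ event $\|\tfrac 1 n \hat\theta_{\rm BF} - \pmb p\|_\infty \le \alpha$ is equivalent to the per-coordinate event $|\tfrac 1 n \hat\theta_{{\rm BF},j} - p_j|\le \alpha$ holding simultaneously for every $j\in[d]$.

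For a fixed coordinate $j$, I would repeat the two-step argument from the proof of Claim~\ref{clm:RR_utility_random_input} with $\epsilon$ replaced by $\epsilon/2$ and with target confidence $\beta/d$. Let $m_j = |\{i : \phi(X_i)=j\}|$. First, the Hoeffding bound over the random sample draw gives $\Pr[|m_j/n - p_j| > \alpha/2] \le 2\exp(-n\alpha^2/2)$, which is at most $\beta/(2d)$ whenever $n\ge \tfrac{2}{\alpha^2}\log(4d/\beta)$. Second, after conditioning on $m_j$, the $j$th bit-flipping output is a function of $n$ independent randomized-response coin flips whose de-biased contributions are each bounded by $\tfrac{e^{\epsilon/2}+1}{e^{\epsilon/2}-1}$ in absolute value, and whose expectation equals $m_j$; McDiarmid (Fact~\ref{fact:McDiarmid}) then yields
\[
\Pr\left[|\hat\theta_{{\rm BF},j} - m_j| > \alpha n/2\right] \le 2\exp\!\left(-\frac{n\alpha^2}{2\bigl((e^{\epsilon/2}+1)/(e^{\epsilon/2}-1)\bigr)^2}\right),
\]
which is again at most $\beta/(2d)$ under the hypothesized lower bound on $n$. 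Combining the two estimates via the triangle inequality and a union bound yields $|\tfrac 1 n\hat\theta_{{\rm BF},j} - p_j|\le \alpha$ with probability at least $1-\beta/d$.

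A final union bound over the $d$ coordinates produces the claimed bound with probability at least $1-\beta$, which explains the $\log(4d/\beta)$ factor (rather than $\log(4/\beta)$) in the sample-size requirement; the $(e^{\epsilon/2}+1)/(e^{\epsilon/2}-1)$ factor is the privacy-loss price of running randomized response per coordinate with budget $\epsilon/2$. I expect no real obstacle here: the only subtleties to double-check are that the $d$ coordinates of the bit-flipping mechanism can indeed be analyzed in parallel (they can, since the per-coordinate randomizers use independent coins), and that the McDiarmid boundedness constant is correctly propagated through the linear de-biasing step in~\eqref{eq:bit_flippiing_estimator}.
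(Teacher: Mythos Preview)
Your proposal is correct and follows essentially the same approach as the paper: Hoeffding for the sampling error, McDiarmid for the randomized-response noise, with a union bound over the $d$ coordinates to account for the $\log(4d/\beta)$ term. The only cosmetic difference is that you organize the union bound coordinate-first (show each coordinate fails with probability $\leq \beta/d$, then sum), whereas the paper organizes it step-first (union over coordinates inside the Hoeffding step to get $\beta/2$, then again inside the McDiarmid step to get $\beta/2$); the arithmetic is identical.
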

\begin{proof}
	The proof is similar to the proof of Claim~\ref{clm:RR_utility_random_input}, replacing the naive bounds with a union bound.
		We apply both the Hoeffding and the McDiarmid inequality. Denote the empirical histogram over the $d$ types specified by $\phi$ over the drawn $n$ inputs as $\pmb q$. We argue that when $n$ is large enough we have that $\prob{\|\pmb p- \pmb q\|_\infty > \tfrac \alpha 2  }\leq \tfrac \beta 2 $ and
	\[  \prob{ \| \hat{\theta_{\rm BF}}(n,\phi)-  \pmb q \|_\infty> \tfrac {\alpha n} 2  }\leq \tfrac \beta 2  \]
	The first of the two inequalities is an immediate consequence of a union bound along with the Hoeffding bound, stating that in the process of sampling the $n$ entries from the distribution, \begin{align*}
	\prob{ \exists j, |p_j- q_j| > \tfrac \alpha 2  }&\leq \sum_{j=1}^d \prob{|p_j- q_j| > \tfrac \alpha 2 }
	\cr &\leq  2d\exp(-2n\alpha^2/4) \leq \tfrac \beta 2
	\end{align*} since $n\geq  \tfrac 2 {\alpha^2} \log(\nicefrac {4d}\beta )$. Having fixed the input to have exactly $n\cdot q_j$ entries of each type $j$, it is evident that $\hat{\theta_{\rm BF}}$ is a function of the input $\phi(x_1),...,\phi(x_n)$ composed of $n$ standard basis vectors in $d$-dimensions. Our unbiased estimator thus satisfies that $\E[\hat{\theta_{\rm BF}}] = n\cdot \pmb q$, and moreover, each datum can affect the value of $\hat{\theta_{\rm BF}}$ by at most $\tfrac{e^{\epsilon/2}+1}{e^{\epsilon/2}-1}$. Applying a union bound along with McDiarmid's inequality, we get that 
	\begin{align*} &\prob{ \exists j, ~~ \left|\hat{\theta_{\rm BF}}_j- nq_j\right|> \tfrac {\alpha n} 2  }\leq \sum_{j=1}^d \prob{\left| \hat{\theta_{\rm BF}}_j- nq_j \right|> \tfrac {\alpha n} 2   } 
	\cr &~~~~~~\leq  2d \exp\left( -\tfrac {2\alpha^2 n^2 / 4} {n \cdot\left(\frac{e^{\epsilon/2}+1}{e^{\epsilon/2}-1}\right)^2}\right) 
	\cr &~~~~~~= 2d\exp\left( -n \frac{\alpha^2} {2\left( \tfrac{e^{\epsilon/2}+1}{e^{\epsilon/2}-1}\right)^2}  \right) \leq \tfrac \beta 2 \end{align*} as $n\geq \tfrac 2 {\alpha^2} \left( \tfrac{e^{\epsilon/2}+1}{e^{\epsilon/2}-1}\right)^2\log(\nicefrac {4d}\beta) $.
\end{proof}

\section{Confidence Intervals for the Mean with Known Variance}
\label{sec:known_variance}

In this section we assume that $\sigma$ is known and we want to estimate a confidence interval for $\mu$ based on a sample of $n$ users, subject to local differential privacy.  As in \citet{KV18}, we will break the algorithm into two parts.  First, we discretize the interval $[-R- \nicefrac \sigma 2,R+ \nicefrac \sigma 2]$ into bins of width $\sigma$, so that we have a collection of $d \stackrel{\rm def}{=} 2\lceil R/\sigma \rceil + 1$ disjoint intervals.
\begin{equation}
\cS(\sigma) = \cS_{-\lceil R/\sigma \rceil}(\sigma) \cup \cS_{-\lceil R/\sigma \rceil+1}(\sigma) \cup \cdots \cup \cS_{\lceil R/\sigma \rceil}(\sigma)
\label{eq:intervals}
\end{equation} where $\cS_i(\sigma) = [(i-\nicefrac 1 2)\cdot \sigma, (i+\nicefrac 1 2) \cdot \sigma]$. 
Denote $\phi:\mathbb{R}\to \{\pmb 0, \pmb e_1, \pmb e_2,..,\pmb e_d\}$ as the function that maps each $x$ to the indicating vector of the bin it resides in, and assigns any point outside the $[-R- \nicefrac \sigma 2,R+ \nicefrac \sigma 2]$ interval the all-$0$ vector, we can now apply the Bit Flipping mechanism to estimate  the histogram over the $d$ bins. Next, we find  the bin with the largest count, denoted $j^*$, and argue this bin is close up to two standard deviations to the true population mean $\mu$. We then move to the second part of the algorithm, where we place an interval $I$ of length $|I|=\tilde O(\sigma)$ around the $j^*$-th bin which is likely to hold all remaining points (a point outside this interval is projected onto the nearest point in $I$). Adding Gaussian noise to each point suffices to make the noisy result $(\epsilon,\delta)$-differentially private, and yet we can still sum over all points and obtain an estimation of the population mean which is close up to $\tilde O(\nicefrac \sigma {\sqrt n})$. Details are given in Algorithm~$\KnownBF$. We comment that we could replace the noise in the latter part by Laplace noise (rather than Gaussian) and obtain a $\epsilon$-LDP; this however would prevent us from (na\"ively) using the algorithm for the purpose of $Z$-test.
\begin{algorithm}
\caption{Known Variance Case: $\KnownBF$}
\label{alg:KnownBF}
\begin{algorithmic}[1]
\REQUIRE Data $\{ x_1, \cdots, x_{n}\}$; $\sigma$, $\beta$, $\epsilon$, $\delta$, $R$.
\STATE Set $n_1= 800\left( \frac{e^{\epsilon/2} +1}{e^{\epsilon/2}-1}\right)^2\log\left( \frac{8d}{\beta} \right)$ and $n_2= n-n_1$ where $d=2\lceil R /\sigma \rceil+1 $.
\STATE Partition the input into $\cU_1 = \{1,\cdots, n_1 \}$ and $\cU_2 = \{n_1+1, \cdots, n \}$.
\STATE Denote $\phi$ as the partition of the real-line into the $d$ bins as in \eqref{eq:intervals}.
\STATE Apply bit flipping on $\cU_1$: \mbox{$\tilde{\bp} \gets \tfrac 1 {n_1}\hat{\theta_{\rm BF}}(n_1,\phi)$} and let $j^*$ be the largest coordinate of $\tilde{\bp}$.
\STATE Set $\Delta = 2 \sigma + \sigma \sqrt{2 \log\left( \nicefrac{8n}{\beta} \right)}$. Denote the interval 
\begin{equation}
[s_1, s_2] = [j^*\sigma - \Delta, ~~j^*\sigma+\Delta]
\label{eq:end_points}
\end{equation}
and denote $\pi_{[s_1,s_2]}(x)=  \min \{ s_2, \max\{s_1,x\}  \}$, namely the projection of $x$ onto $[s_1,s_2]$.
\STATE Set $\hat\sigma^2 =  8\Delta^2 \log(\nicefrac 2 \delta) / \epsilon^2$.
\STATE \textbf{foreach} $i\in \cU_2$ \\
~~~~set $\tilde{x}_i = \pi_{[s_1,s_2]}(x_i) + N_i$ where $N_i \sim \Normal{0}{\hat{\sigma}^2 }$.
\STATE Set $\tilde \mu = \frac{1}{n_2}\sum\limits_{i \in \cU_2} \tilde{x}_i$
, and $\tau = \sqrt{\frac{\sigma^2+\hat{\sigma}^2}{n_2}} \cdot \Phi^{-1}(1-\nicefrac\beta{8})$
\ENSURE $I=[\tilde \mu - \tau, \tilde \mu+\tau]\cap [-R,R]$
\end{algorithmic}
\end{algorithm}

The following two theorems prove that Algorithm~$\KnownBF$ satisfies the required privacy and utility results.
\begin{theorem}
	\label{thm:known_var_is_epsilon_LDP}
$\KnownBF$ is $(\epsilon,\delta)$-LDP.
\end{theorem}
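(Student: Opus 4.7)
The plan is to verify that every datum $x_i$ is touched by exactly one local randomizer and that each such randomizer is itself $(\epsilon,\delta)$-DP; by the definition of LDP (with $k=1$ for each index), the composition is immediate.

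First, I will observe that the algorithm partitions $[n]$ into two disjoint sets $\cU_1$ and $\cU_2$, and that each $x_i$ is queried only once: either through the bit-flipping step (if $i\in\cU_1$) or through the Gaussian-noise step (if $i\in\cU_2$). Crucially, although the thresholds $s_1,s_2$ used in the $\cU_2$ step depend on the output of the $\cU_1$ step, they do not depend on any $x_i$ with $i\in\cU_2$; hence this adaptive choice of mechanism is permitted by Definition~\ref{def:one_shot_local_DP}, and the guarantee we need is simply that each \emph{per-datum} randomizer is $(\epsilon,\delta)$-DP.

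Second, for $i\in\cU_1$, the randomizer encodes $\phi(x_i)$ as a standard basis vector in $\{0,1\}^d$ and applies an independent randomized-response mechanism of privacy parameter $\epsilon/2$ to each of the $d$ coordinates. Because two one-hot vectors differ in at most two coordinates, only two of these $d$ randomized responses can distinguish neighboring inputs; hence by basic composition the bit-flipping randomizer is $(2\cdot \tfrac{\epsilon}{2},0)=(\epsilon,0)$-DP, which is stronger than the $(\epsilon,\delta)$ guarantee we need.

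Third, for $i\in\cU_2$, the randomizer outputs $\pi_{[s_1,s_2]}(x_i)+N_i$ with $N_i\sim\Normal{0}{\hat\sigma^2}$. The projection $\pi_{[s_1,s_2]}:\mathbb{R}\to[s_1,s_2]$ has global $\ell_2$-sensitivity at most $s_2-s_1=2\Delta$. Plugging sensitivity $\ell=2\Delta$ into the Gaussian mechanism noise scale from the preliminaries, i.e.\ $\Normal{0}{2\ell^2\log(2/\delta)/\epsilon^2}$, gives exactly the variance $\hat\sigma^2=8\Delta^2\log(2/\delta)/\epsilon^2$ chosen in the algorithm, so this randomizer is $(\epsilon,\delta)$-DP. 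Combining the two cases yields $(\epsilon,\delta)$-LDP for $\KnownBF$. There is no real obstacle here; the only item worth checking carefully is that the $\cU_2$ randomizer truly depends on $x_i$ only through the projection (so the sensitivity bound is valid), which holds because $s_1,s_2$ and $\hat\sigma^2$ are functions of the $\cU_1$ outputs and the public parameters alone.
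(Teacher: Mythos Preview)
Your proposal is correct and follows the same approach as the paper's proof: each datum is touched by exactly one local randomizer, either bit flipping (which is $\epsilon$-LDP) for $i\in\cU_1$ or the Gaussian mechanism (which is $(\epsilon,\delta)$-LDP) for $i\in\cU_2$. Your version is more detailed---you spell out the two-coordinate composition argument for bit flipping and verify the Gaussian noise scale against the sensitivity $2\Delta$---whereas the paper simply cites these as known facts, but the structure is identical.
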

\begin{proof}
This follows from the fact that Algorithm~$\KnownBF$ applies one of two locally differentially private mechanisms to each datum~--- either bit flipping (which is known to be $\epsilon$-LDP) or additive random noise using Gaussian noise (a $(\epsilon,\delta)$-LDP algorithm).  
\end{proof}
\begin{theorem}
	\label{thm:known_varCI}
		Let $ \bX \stackrel{i.i.d.}{\sim} \Normal{\mu}{\sigma^2}$ and $I = \KnownBF\left( \bX; \sigma, \beta, \epsilon, \delta, n, R\right)$. Set \mbox{$d = 2\lceil\nicefrac {R}\sigma \rceil$ + 1}. If we have \mbox{$n \geq 1600 \left( \frac{e^{\epsilon/2} +1}{e^{\epsilon/2}-1}\right)^2 \log\left( \frac{8d}{\beta} \right)$}, then 
	$
	\Prob{\bX, \KnownBF}{\mu \in I} \geq 1- \beta
	$.
	Furthermore, 
	$$
	|I| = O \left( \sigma \cdot  \frac{\sqrt{\log\left(n / \beta \right)\cdot  \log\left(1/\beta \right) \cdot \log(1/\delta)  }   }{\epsilon \sqrt{n}} \right)
	$$

\end{theorem}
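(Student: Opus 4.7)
The plan is to decompose the analysis into three high-probability events, each controlled by a fraction of the total failure budget $\beta$, and then verify that the interval width matches the stated order. Specifically, I would aim for: (A) the bit-flipping step returns an index $j^*$ with $|j^*\sigma - \mu|\leq 2\sigma$, with failure probability $\leq \beta/2$; (B) every sample in $\cU_2$ satisfies $x_i\in [s_1,s_2]$, so the projection $\pi_{[s_1,s_2]}$ acts as the identity on the dataset, with failure probability $\leq \beta/4$; and (C) conditional on (A) and (B), the $Z$-type confidence interval $[\tilde\mu-\tau,\tilde\mu+\tau]$ actually covers $\mu$, with failure probability $\leq \beta/4$. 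A union bound then yields the $1-\beta$ coverage guarantee.

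For (A), Claim~\ref{clm:BF_utility_random_input} applied with $\alpha=1/20$ and failure parameter $\beta/2$ requires exactly $n_1 \geq 800\bigl(\tfrac{e^{\epsilon/2}+1}{e^{\epsilon/2}-1}\bigr)^2 \log(8d/\beta)$ samples, which matches the algorithm's setting (and the theorem's lower bound on $n$ ensures $n_2\ge n/2$). This gives $\|\tilde{\pmb p}-\pmb p\|_\infty \leq 1/20$ where $\pmb p$ is the true Gaussian mass on the $d$ bins. Letting $j_\mu$ be the bin that contains $\mu$, a direct calculation on a standard normal shows $p_{j_\mu}\geq \Phi(0)-\Phi(-1) > 0.34$ (worst case: $\mu$ at a bin edge), while any bin $j$ with $|j\sigma-\mu|\geq 2\sigma$ has $p_j \leq \Phi(2.5)-\Phi(1.5) < 0.07$. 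Since $\tilde p_{j^*}\geq \tilde p_{j_\mu}\geq p_{j_\mu}-1/20$ and $\tilde p_{j^*}\leq p_{j^*}+1/20$, we obtain $p_{j^*}\geq 0.34 - 2\cdot(1/20) > 0.07$, forcing $|j^*\sigma-\mu|\leq 2\sigma$.

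For (B), the standard Gaussian tail $\Pr[|X-\mu|>\sigma\sqrt{2\log(2/\beta')}]\leq \beta'$ with $\beta' = \beta/(4n)$, combined with a union bound over the $n_2\leq n$ samples, shows that every $x_i$ lies within $\sigma\sqrt{2\log(8n/\beta)}$ of $\mu$. Combined with (A), the choice $\Delta = 2\sigma + \sigma\sqrt{2\log(8n/\beta)}$ guarantees $[s_1,s_2]\supseteq [\mu-\sigma\sqrt{2\log(8n/\beta)},\mu+\sigma\sqrt{2\log(8n/\beta)}]$, so no clipping occurs. For (C), on this good event $\tilde x_i = x_i + N_i$ with $x_i\sim \mathcal{N}(\mu,\sigma^2)$ and $N_i\sim\mathcal{N}(0,\hat\sigma^2)$ independent, so $\tilde\mu\sim \mathcal{N}(\mu,(\sigma^2+\hat\sigma^2)/n_2)$ and the two-sided Gaussian quantile bound gives $\Pr[|\tilde\mu-\mu|>\tau]\leq 2\cdot(\beta/8) = \beta/4$. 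The only subtle point here is that conditioning on event (B) changes the distribution of $x_i$ slightly (it becomes a truncated Gaussian); however the TV distance is $\leq \beta/4$ from the untruncated law, which is already absorbed into the union bound.

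Finally, for the width, $|I|\leq 2\tau = 2\Phi^{-1}(1-\beta/8)\sqrt{(\sigma^2+\hat\sigma^2)/n_2}$. Substituting $\hat\sigma^2 = 8\Delta^2\log(2/\delta)/\epsilon^2 = O(\sigma^2 \log(n/\beta)\log(1/\delta)/\epsilon^2)$, using $\Phi^{-1}(1-\beta/8)=O(\sqrt{\log(1/\beta)})$, and noting $n_2 = \Theta(n)$, yields the claimed $O\bigl(\sigma\sqrt{\log(n/\beta)\log(1/\beta)\log(1/\delta)}/(\epsilon\sqrt{n})\bigr)$ bound. I expect the main obstacle to be a clean statement of step (A): one needs the constant $\alpha=1/20$ chosen in the bit-flipping step to be simultaneously (i) small enough that the gap between $p_{j_\mu}$ and the mass of any bin more than $2\sigma$ away exceeds $2\alpha$ in the worst-case position of $\mu$ inside its bin, and (ii) consistent with the exact constant $800$ in the algorithm's setting of $n_1$. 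The remaining steps are routine Gaussian concentration and $Z$-test quantile manipulations.
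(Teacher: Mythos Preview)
Your proposal is correct and follows essentially the same three-event decomposition as the paper's proof (their Lemmas~\ref{lem:claim1} and~\ref{lem:claim3} are exactly your events (A) and (B), with the same failure budgets $\beta/2$ and $\beta/4$, and the remaining $\beta/4$ spent on the Gaussian quantile in (C)). The only cosmetic difference is in step~(A): the paper, following \citet{KV18}, phrases it as ``the gap between the largest and third-largest entries of the true bin-mass vector is at least $0.1$,'' whereas you compute the worst-case mass of the bin containing $\mu$ ($>0.34$) and of any bin whose center is $\geq 2\sigma$ away ($<0.07$) directly; both arguments yield the same constant $\alpha=0.05$ and hence the same $n_1=800(\cdot)^2\log(8d/\beta)$.
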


The utility analysis of our algorithm follows a similar analysis to Lemma 2.3 in \cite{KV18}. 
First note that Claim~\ref{clm:BF_utility_random_input} assures us that if $n\geq \tfrac 2 {\alpha^2} \left( \tfrac{e^\epsilon+1}{e^\epsilon-1}\right)^2\log(\tfrac 8 \beta)$, then each coordinate of $\tilde{\bp}$ is $\alpha$-close coordinate-wise to the true population histogram over the $d$ bins.
We show that for $n$ sufficiently large, selecting $j^*$ to be the largest coordinate of $\tilde{\bp}$ implies that we are close to $\mu$ within a constant multiple of the standard deviation $\sigma$.
\begin{lemma}
\label{lem:claim1}
Let $\{X_i\}_{i=1}^n \stackrel{i.i.d.}{\sim} \Normal{\mu}{ \sigma^2}$ and $\sigma$ be known and $\mu \in [-R,R]$.  Let $d = 2\lceil \frac{R}{\sigma} \rceil + 1$.  If $n \geq 800 \cdot \left( \frac{e^{\epsilon/2} +1}{e^{\epsilon/2}-1}\right)^2 \cdot \log\left( \frac{8d}{\beta} \right)$, then selecting $j^*$ as the largest coordinate of the histogram $\tilde{\bp}$ we have that w.p. $\geq 1-\nicefrac \beta 2$ the following holds
$$
|\mu - j^*\sigma | \leq 2\sigma
$$
\end{lemma}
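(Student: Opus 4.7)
The plan is to combine the utility bound of the bit flipping estimator (Claim~\ref{clm:BF_utility_random_input}) with simple Gaussian anti-concentration / concentration estimates to show that, once the histogram is estimated to sufficient $L_\infty$ accuracy, the true argmax bin cannot be more than $2\sigma$ away from $\mu$.

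First, I would invoke Claim~\ref{clm:BF_utility_random_input} with accuracy parameter $\alpha=\nicefrac{1}{20}$ and failure probability $\nicefrac{\beta}{2}$. Matching $n\geq \tfrac{2}{\alpha^2}\bigl(\tfrac{e^{\epsilon/2}+1}{e^{\epsilon/2}-1}\bigr)^2\log(\nicefrac{4d}{(\beta/2)}) = 800\bigl(\tfrac{e^{\epsilon/2}+1}{e^{\epsilon/2}-1}\bigr)^2\log(\nicefrac{8d}{\beta})$ yields exactly the hypothesis of the lemma. So with probability at least $1-\nicefrac{\beta}{2}$, writing $\pmb{p}$ for the true population histogram with $p_j = \Pr_{X\sim \Normal{\mu}{\sigma^2}}[X\in\cS_j(\sigma)]$, we have $\|\tilde{\pmb p}-\pmb p\|_\infty \leq \nicefrac{1}{20}$. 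Condition on this event for the rest of the argument.

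Next, define the ``central bin'' $j_\mu$ as the unique integer with $\mu \in \cS_{j_\mu}(\sigma)=[(j_\mu-\nicefrac 1 2)\sigma,(j_\mu+\nicefrac 1 2)\sigma]$; so in particular $|\mu-j_\mu \sigma|\le \nicefrac{\sigma}{2}$. I would lower-bound $p_{j_\mu}$ by the worst case where $\mu$ sits at an endpoint of the length-$\sigma$ bin, giving $p_{j_\mu}\geq \Phi(1)-\tfrac{1}{2} > 0.34$. For any ``far'' bin $j$ with $|\mu-j\sigma|>2\sigma$, every point of $\cS_j(\sigma)$ lies at distance at least $2\sigma - \nicefrac{\sigma}{2} = \nicefrac{3\sigma}{2}$ from $\mu$ and entirely on one side of $\mu$, hence $p_j \leq 1-\Phi(\nicefrac{3}{2}) < 0.07$.

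Combining, $\tilde p_{j_\mu} \geq 0.34 - \nicefrac{1}{20} = 0.29$ while $\tilde p_j \leq 0.07 + \nicefrac{1}{20} < 0.12$ for every far bin $j$, so the argmax $j^*$ of $\tilde{\pmb p}$ must satisfy $|\mu - j^* \sigma|\leq 2\sigma$, as claimed. Essentially no step is a real obstacle; the only care needed is to pick $\alpha$ small enough that the bin-probability gap $(\Phi(1)-\tfrac12) - (1-\Phi(\tfrac32)) \approx 0.27$ dominates $2\alpha$, and then to match the universal constants so that the sample-size requirement of Claim~\ref{clm:BF_utility_random_input} reproduces the $800\cdot \log(\nicefrac{8d}{\beta})$ in the hypothesis.
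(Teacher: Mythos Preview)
Your proposal is correct and takes essentially the same approach as the paper: both invoke Claim~\ref{clm:BF_utility_random_input} with accuracy $\alpha=0.05$ (which is exactly what produces the constant $800$ and the $\log(8d/\beta)$ term), and both argue via a gap in the true bin probabilities that the noisy argmax cannot land more than two bins from $\mu$. The only cosmetic difference is that the paper phrases the gap as ``largest minus third-largest $\geq 0.1$'' (following \citet{KV18}), whereas you compute the explicit bounds $p_{j_\mu}\geq \Phi(1)-\tfrac12>0.34$ and $p_j\leq 1-\Phi(\tfrac32)<0.07$ for far bins; your version is arguably cleaner and more self-contained.
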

\begin{proof}
The proof follows from the analysis done in Claim 1 of~\cite{KV18}.  We order the entries of the histogram $\tilde{\bp}$ in a non-ascending order as $\tilde{p}_{(1)} \geq \tilde{p}_{(2)} \geq  \cdots \geq \tilde{p}_{(d)}$.  We then have the following difference between the largest bin and the 3rd largest bin (note that the largest and second largest bin might have equal counts in the extreme case where the mean lies precisely between the two bins, but in any case the 3rd largest bin will be at least one standard deviation from the mean and must have noticeably smaller count)
$$
g = \tilde{p}_{(1)} - \tilde{p}_{(3)} \geq 0.1.
$$
If $\max_j\{ |\tilde{p}_j - p_j |\} \leq g/2$, then the index $j^*$ for the corresponding largest entry of $\tilde{\bp}$ will be within $2$ of the ratio $\mu/\sigma$.  Since each bin width is $\sigma$, we have $| j^* \sigma - \mu | \leq 2 \sigma$.
All that is left is to apply Claim~\ref{clm:BF_utility_random_input} with accuracy parameter $\alpha$ set as $g/2 =0.05$ and $\beta/2$.  This completes the proof.
\end{proof}

Next, conditioned on finding $j^*$ such that $|\mu-j^*\sigma|\leq 2\sigma$, we argue that the interval $[s_1,s_2]$ is sufficiently large so that w.h.p the projection $\pi_{[s_1,s_2]}$ onto this interval does not alter even a single one of the $n_2$ datapoints in $\cU_2$.
\begin{lemma}
\label{lem:claim3}
Suppose $j^*$ is an index satisfying the result of Lemma~\ref{lem:claim1}. Fix $n_2 \leq n$, and let $\{X_i\}_{i=1}^{n_2} \stackrel{i.i.d.}{\sim} \Normal{\mu}{\sigma^2}$. Then 
$$
\prob{\exists i \textrm{ s.t. }~ \{ |X_i - j^* \sigma |\} > \Delta  }   \leq  \nicefrac\beta 4
$$
\end{lemma}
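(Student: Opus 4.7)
The plan is to combine the triangle inequality with the standard Gaussian tail bound and a union bound; the proof is essentially immediate given how $\Delta$ is defined.

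First I would observe, using the hypothesis $|\mu - j^*\sigma| \leq 2\sigma$ from Lemma~\ref{lem:claim1}, that for every $i$ we have
\[
|X_i - j^*\sigma| \;\leq\; |X_i - \mu| + |\mu - j^*\sigma| \;\leq\; |X_i - \mu| + 2\sigma.
\]
So it suffices to show that with probability at least $1 - \beta/4$, no $X_i$ deviates from $\mu$ by more than $\sigma\sqrt{2\log(8n/\beta)}$, since then $|X_i - j^*\sigma|$ would be bounded by $2\sigma + \sigma\sqrt{2\log(8n/\beta)} = \Delta$.

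Next, applying the Gaussian tail bound stated in the Preliminaries with failure probability $\beta/(4n)$, for each fixed $i$ we get
\[
\Prob{X_i \sim \Normal{\mu}{\sigma^2}}{|X_i - \mu| > \sigma\sqrt{2\log(8n/\beta)}} \;\leq\; \frac{\beta}{4n}.
\]
Finally, a union bound over the $n_2 \leq n$ samples in $\cU_2$ yields
\[
\prob{\exists i \in \cU_2 : |X_i - j^*\sigma| > \Delta} \;\leq\; n_2 \cdot \frac{\beta}{4n} \;\leq\; \frac{\beta}{4},
\]
as required. There is no real obstacle here: the only subtlety is keeping track of constants so that the two summands in the definition of $\Delta$ (the $2\sigma$ slack absorbing the error $|\mu - j^*\sigma|$ and the $\sigma\sqrt{2\log(8n/\beta)}$ term absorbing Gaussian tails) line up with $n_2 \leq n$ union-bounded events at confidence $\beta/4$.
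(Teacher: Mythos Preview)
Your proof is correct and follows essentially the same approach as the paper: triangle inequality using the bound $|j^*\sigma-\mu|\leq 2\sigma$ from Lemma~\ref{lem:claim1}, the Gaussian tail bound at level $\beta/(4n)$, and a union bound over $n_2\leq n$ events.
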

\begin{proof}
We use the inequality $|X_i - j^* \sigma |  \leq | X_i - \mu| + | j^* \sigma - \mu|$. Lemma~\ref{lem:claim1} bounds $| j^* \sigma - \mu|~\leq~2\sigma$. Known concentration bounds for Gaussians give that $\prob{ |X_i-\mu| > \sigma\sqrt{2\log(\nicefrac{8n}{\beta})}  } ~\leq~\tfrac \beta {4n}$. Applying a union bound over $n_2 \leq n$ bad events concludes the proof.
\end{proof}

We can now provide the full utility analysis of Algorithm~$\KnownBF$. Namely, we argue that we indeed obtain a locally differentially private estimate for the mean of our data in the known variance case.  We advise the reader to compare this result to Theorem 4.1 in \cite{KV18} where the dependency on $\epsilon$ is (mainly) additive rather than multiplicative.

\begin{proof}[Proof of Theorem~\ref{thm:known_varCI}]
	Subject to Lemmas~\ref{lem:claim1} and~\ref{lem:claim3} holding, we have that w.p. $\geq 1- \tfrac{3\beta}4$ all of the latter $n_2$ datapoints in $\cU_2$ are not altered by $\pi_{[s_1,s_2]}$.  As each $X_i \sim\Normal{\mu}{\sigma^2}$ is added independent noise $N_i \sim\Normal{0}{\hat\sigma^2}$, conditioned on $\pi_{[s_1,s_2]}(X_i)=X_i$ we have that $\tilde x_i = \pi_{[s_1,s_2]}(X_i) + N_i = X_i + N_i \sim \Normal{\mu}{\sigma^2 + \hat{\sigma}^2}$. It thus follows that 
	\[ \tilde \mu \sim \Normal{\mu}{\tfrac{\sigma^2 + \hat{\sigma}^2} {n_2}} = \mu + \sqrt{\tfrac{\sigma^2 + \hat{\sigma}^2} {n_2}}\cdot \Normal{0}{1} \]
	By definition, we have that $\Prob{X\sim \Normal{0}{1}}{X > \Phi^{-1}(1-\nicefrac \beta 8)} = \nicefrac \beta 8$, and by the symmetry of the Gaussian {\sf PDF} we have that $\Prob{X\sim \Normal{0}{1}}{|X| > \Phi^{-1}(1-\nicefrac \beta 8)}=\nicefrac \beta 4$. Therefore, subject to Lemmas~\ref{lem:claim1} and~\ref{lem:claim3} holding, $\prob{ |\tilde \mu-\mu|  > \sqrt{\tfrac{\sigma^2 + \hat{\sigma}^2} {n_2}}\cdot \Phi^{-1}(1-\nicefrac \beta 8)} \leq \nicefrac \beta 4$. Thus we have that w.p. $\geq 1-\beta$ it holds that the output $\tilde \mu$ of our algorithm satisfies ${ |\tilde \mu-\mu|  > \sqrt{\tfrac{\sigma^2 + \hat{\sigma}^2} {n_2}}\cdot \Phi^{-1}(1-\nicefrac \beta 8)}$ proving the first part of the theorem.
	
	The second part of the theorem follows for standard bounds on the Normal distribution, we state that $\Phi^{-1}(1-\tfrac \beta 8) \leq \sqrt{2\log(\tfrac{2\cdot 8}\beta)} = O(\sqrt{\log(\nicefrac 1 \beta)})$. The remainder follows from the definition of $\hat{\sigma}$ and $\Delta$ in Algorithm~\ref{alg:KnownBF}, and the fact that when $n\geq 2n_1$ then $n_2 = n-n_1 \geq n/2$.
\cut{	
Let $Y_i $ be the clipped value of $X_i$, that is $Y_i = \max\{ s_2, \min\{s_1, X_i \} \}$.  We then have $\tilde{X}_i = Y_i + \Lap(b)$ for $b = \frac{s_2 - s_1}{\epsilon}$.  Consider the following error
\begin{align*}
\left| \frac{1}{n} \sum_{i=n+1}^{2n} \tilde{X}_i - \mu \right| & = \left| \frac{1}{n} \sum_{i=n+1}^{2n} X_i  - \mu + \frac{1}{n} \cdot \sum_{i=n+1}^{2n}  \left( Y_i - X_i \right) + \frac{1}{n} \sum_{i=1}^nZ_i \right| \qquad \text{ where } Z_i \sim \Lap(b) \\
& \leq  \left| \frac{1}{n} \sum_{i=n+1}^{2n} X_i  - \mu \right|  + \frac{1}{n} \cdot \left| \sum_{i=n+1}^{2n}  \left( Y_i - X_i \right) \right|+ \frac{1}{n}\cdot \left| \sum_{i=1}^nZ_i \right|
\end{align*}
We then analyze each term.  The first term is comparing the empirical average of gaussians to its true mean, which 
$$
\prob{ \left| \frac{1}{n} \sum_{i=n+1}^{2n} X_i  - \mu \right| \geq \frac{\sigma}{\sqrt{n} } \Phi^{-1}(1-\beta_1/2) } \leq \beta_1
$$
Moving on to the second term, we have 
$$
\prob{\left| \sum_{i=n+1}^{2n}  \left( Y_i - X_i \right) \right| \geq 0 } \leq 1 - \prob{\forall i \in \{n+1, \cdots, 2n \}, \quad s_1 \leq X_i \leq s_2}
$$
From Lemma~\ref{lem:claim3}, we have 
$$
\prob{\max_{n<i\leq 2n} \{ |X_i - j^* \sigma |\} \leq \sigma \sqrt{2 \log\left(\frac{4n}{\beta_2} \right)} + 2 \sigma   }   \geq  1-  \beta_2.
$$
Hence, for $[s_1, s_2]$ set as in \eqref{eq:end_points}, we have
$$
\prob{\left| \sum_{i=n+1}^{2n}  \left( Y_i - X_i \right) \right| \geq 0 } \leq 3/4 \cdot \beta
$$
To bound the last term, we use a concentration bound for the sum of Laplace random variables from \cite{CSS11}, so that if $n \geq \log(2/\beta_3) b^2$, then 
$$
\prob{\frac{1}{n}\cdot \left| \sum_{i=1}^nZ_i \right| \geq \sqrt{\frac{8 \log(2/\beta_3)}{n}} \cdot b }\leq \beta_3
$$
Note that $b = \frac{s_2 - s_1}{\epsilon}  = \sigma \cdot \frac{4 + 2 \sqrt{2 \log(8n / \beta) }}{\epsilon}$.
Putting this all together, we have with probability at least $1 - \left( \beta_1 + 3/4 \beta + \beta_3 \right)$
$$
\left| \frac{1}{n} \sum_{i=n+1}^{2n} \tilde{X}_i - \mu \right| \leq \frac{\sigma}{\sqrt{n} } \Phi^{-1}(1-\beta_1/2) + 0 + \sqrt{\frac{8 \log(2/\beta_3)}{n}} \cdot b
$$
Hence, setting $\beta_1 = \beta_3 = \beta/8$ gives the result. 
}
\end{proof}

We can now apply Theorem~\ref{thm:approx2pure}, where we pick $\beta = \eta$ in the Theorem statement, along with Theorem~\ref{thm:known_varCI} to obtain a valid confidence interval subject to pure $\epsilon$-DP.

\begin{corollary}
Fix $\epsilon < 1$, set $d = 2\lceil\nicefrac {R}\sigma \rceil + 1 $, and let $n \geq 1600 \left( \frac{e^{\epsilon/20} +1}{e^{\epsilon/20}-1}\right)^2 \log\left( \frac{16d}{\beta} \right)$.  There exists an algorithm that returns a valid $1-\beta$-confidence interval $I$ that is $\epsilon$-LDP and 
$$
|I| = O\left( \sigma \cdot \frac{\sqrt{\log(n/\beta) \cdot \log(1/\beta) \cdot \log\left( \frac{n \log(n/\beta)}{\epsilon \beta} \right) }}{\epsilon \sqrt{n}} \right)
$$
\end{corollary}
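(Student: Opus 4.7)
The plan is to compose Theorem~\ref{thm:known_varCI} with the BNS18 reduction of Theorem~\ref{thm:approx2pure}. Specifically, I will instantiate $\KnownBF$ with an $(\epsilon',\delta')$-LDP parameter choice that is compatible with the hypothesis of Theorem~\ref{thm:approx2pure}, apply that theorem to obtain a pure $\epsilon$-LDP algorithm that is statistically close in total variation, and then absorb the TV-gap into the failure probability of the confidence interval.

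Concretely, set $\epsilon' = \epsilon/10$ and $\eta = \beta/2$ in Theorem~\ref{thm:approx2pure}, and choose $\delta' = \tfrac{\epsilon' \eta}{48 n \log(2n/\eta)} = \Theta\!\left( \tfrac{\epsilon\,\beta}{n \log(n/\beta)}\right)$ so that the hypothesis of that theorem is satisfied. Run $\KnownBF$ with parameters $(\epsilon',\delta')$ and confidence level $\beta/2$, which by Theorem~\ref{thm:known_varCI} is $(\epsilon',\delta')$-LDP and returns a valid $(1-\beta/2)$-confidence interval, provided $n$ satisfies the sample lower bound of Theorem~\ref{thm:known_varCI} with $\beta$ replaced by $\beta/2$. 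Plugging in $\epsilon' = \epsilon/10$ and $\beta/2$ gives precisely the stated bound $n \geq 1600 \left( \tfrac{e^{\epsilon/20}+1}{e^{\epsilon/20}-1}\right)^2 \log(16 d/\beta)$.

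Now apply Theorem~\ref{thm:approx2pure} to this algorithm: it yields an $\epsilon$-LDP algorithm $\calM'$ whose output distribution is within TV-distance $\beta/2$ of $\KnownBF$'s output on every input. Since TV-closeness preserves the probability of any event up to an additive $\beta/2$, the output interval $I$ of $\calM'$ still contains $\mu$ with probability at least $(1-\beta/2) - \beta/2 = 1-\beta$, where probability is over both the data and the randomness of $\calM'$; this gives the validity claim.

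Finally, to obtain the width bound, substitute the chosen values back into the width expression of Theorem~\ref{thm:known_varCI}: $\epsilon' = \Theta(\epsilon)$ and $\log(1/\delta') = O\!\left(\log\!\tfrac{n \log(n/\beta)}{\epsilon\beta}\right)$, and $\log(n/(\beta/2)) = O(\log(n/\beta))$, $\log(1/(\beta/2)) = O(\log(1/\beta))$. Combined these yield exactly
\[
|I| = O\!\left( \sigma \cdot \frac{\sqrt{\log(n/\beta)\cdot \log(1/\beta)\cdot \log\!\left(\tfrac{n \log(n/\beta)}{\epsilon\beta}\right)}}{\epsilon \sqrt{n}} \right).
\]
The only subtlety worth caution is the bookkeeping between constants: one must verify that $\epsilon' = \epsilon/10 \leq 1/4$ (which is automatic since $\epsilon < 1$), that the chosen $\delta'$ is indeed $\leq \tfrac{\epsilon'\eta}{48 n \log(2n/\eta)}$, and that all splits of the failure budget (between the two phases of $\KnownBF$ and the TV slack from BNS18) total at most $\beta$. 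These are all routine, and the main conceptual step is simply recognizing that BNS18 is essentially free up to the logarithmic blow-up $\log(1/\delta') = O(\log(n/(\epsilon\beta)))$ already appearing inside the width.
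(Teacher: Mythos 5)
Your proposal is correct and follows exactly the route the paper intends: instantiate $\KnownBF$ at privacy level $\epsilon/10$ with $\delta'=\Theta\!\left(\tfrac{\epsilon\beta}{n\log(n/\beta)}\right)$, apply Theorem~\ref{thm:approx2pure} with $\eta$ a constant fraction of $\beta$, and absorb the total-variation slack into the confidence level, which is precisely how the stated sample bound (with $e^{\epsilon/20}$ and $\log(16d/\beta)$) and the $\log\!\left(\tfrac{n\log(n/\beta)}{\epsilon\beta}\right)$ factor in the width arise. The paper gives only a one-line derivation of this corollary, and your write-up fills in the same bookkeeping it implicitly relies on.
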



\subsection{Experiment: $Z$-Test}
\label{subsec:Ztest}

As in Algorithm~$\KnownBF$, we denote $n_1  \stackrel{\rm def}{=}800 \cdot \left( \frac{e^{\epsilon/2} +1}{e^{\epsilon/2}-1}\right)^2 \cdot \log\left( \frac{8d}{\beta} \right)$ and $n_2  \stackrel{\rm def}{=} n-n_1$. Following the proof of Theorem~\ref{thm:known_varCI}, we have that --- under the assumption that no datapoint is clipped --- all $n_2$ datapoints we use in the latter part of Algorithm~\ref{alg:KnownBF}  are sampled from $\Normal{\mu}{\sigma^2 + \hat{\sigma}^2}$. This allows us to infer that (w.p. $\geq 1-\beta$) the average of the $n_2$ datapoints in $\cU_2$ is sampled from $\Normal{\mu}{\frac{\sigma^2 + \hat{\sigma}^2}{n_2}}$. Just as in Algorithm~\ref{alg:KnownBF}, denoting $\tilde \mu$ as the average of the noisy datapoints, we now can define an approximation of the likelihood: $\calP = \Normal{\tilde \mu}{\frac{\sigma^2 + \hat{\sigma}^2}{n_2}}$. As a result, for any interval on the reals $I$ we can associate a likelihood of $p_I \stackrel{\rm def}= \Prob{X\sim \calP}{X\in I}$, and we know that w.p. $p_I \pm \beta$ it indeed holds that $\mu\in I$. This mimics the power of a $Z$-test~\citep{HoggMC05} ~--- in particular we can now compare two intervals as to which one is more likely to hold $\mu$, compare populations, etc.

Note however that, as opposed to standard $Z$-test, the result of Algorithm~\ref{alg:KnownBF} only gives confidence bounds up to an error of $\beta$. So for example, given two intervals $I$ and $I'$ we can safely argue that it is more likely that $\mu\in I$ than $\mu\in I'$ only when $p_I > p_{I'}+2\beta$. Similarly, if we wish to draw an interval whose likelihood to contain $\mu$ is $1-\nu$ for some $\nu>0$, we must pick a corresponding $(1-\nu+\beta)$-confidence interval from $\calP$. Naturally, this limits us to the setting where $\beta < \nu$, or conversely: we can never allow for more certainty than the $1-\beta$ parameter specified as an input for Algorithm~\ref{alg:KnownBF}.

Subject to this caveat, Algorithm~\ref{alg:KnownBF} allows us to perform $Z$-test in a similar fashion to the standard $Z$-test, after we omit the first $n_1$ datapoints from our sample. One of the more common uses of $Z$-test is to test whether a given sample behaves in a similar fashion to the general population. For example, suppose that  the SAT scores of the entire population are distributed like a Gaussian of mean $\mu$ and variance $\sigma^2$. Taking a sample of SAT scores from one specific city, we can apply the $Z$-test to see if we can reject the null hypothesis that the score distribution in this city are distributed just as they are distributed in the general population. Should we have $n$ samples of SAT scores which happen to be distributed from $\Normal{\mu'}{\sigma^2}$ for some $\mu'\neq \mu$, then sufficiently large $n$ (with dependency on $|\mu'-\mu|$) should allow us to reject this null hypothesis with confidence $1-\nu$. We set to discover precisely this notion of utility, using our locally-private $Z$-test.

\noindent\emph{The Experiment:} We tested our LDP $Z$-test on $n$ iid samples from a Gaussian. We set the null-hypothesis to be $H_0: \Normal{0}{1}$, whereas the $n$ samples were drawn from the alternative hypothesis $H_1: \Normal{\mu'}{1}$ with $\mu'>0$. We run our experiments in the known variance $\sigma^2 = 1$ case with a fixed bound $R=200$ and $\beta = 0.01$.  In each set of experiments we vary $\epsilon$ while keeping $\delta = 10^{-9}$.  In Figure~\ref{fig:pvalues}, we plot the average p-value over 1,000 trails for our Z-test when the data is actually generated with sample size $n = 200,000$ and mean $\mu'$ that varies.  In Figure~\ref{fig:power}, we plot the empirical power of our test over 
1000 
trails where we fix $\mu' = 3$ and vary the sample size $n$.  Our figures show the tradeoffs between the privacy parameter, the alternate we are comparing the null to, and the sample size.  
The results themselves match the theory pretty well and emphasize the magnitude of the needed sample size. For $\epsilon=1.5$ we need 10,000 sample points to reject the null hypothesis w.h.p. When $\epsilon=0.5$,  even 100,000 sample points do not suffice to reject the null hypothesis w.h.p despite the fact that the difference between the means of the null and the alternative is $3$ times greater than the variance. This is a setting where non-privately we can reject the null hypothesis with a sample size $<100$. This illustrates (yet again) how LDP relies on the abundance of data.

\begin{figure}[t!]
    \centering
    \begin{subfigure}[t]{0.49\textwidth}
        \centering
        \includegraphics[height=1.55in]{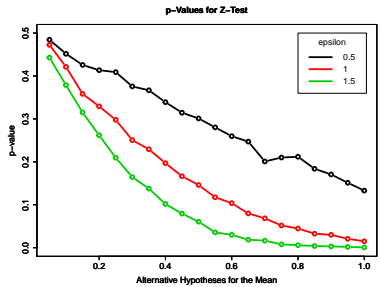}
        \caption{Average p-values with $n = 200,000$.\label{fig:pvalues}}
    \end{subfigure}
    \begin{subfigure}[t]{0.49\textwidth}
        \centering
        \includegraphics[height=1.55in]{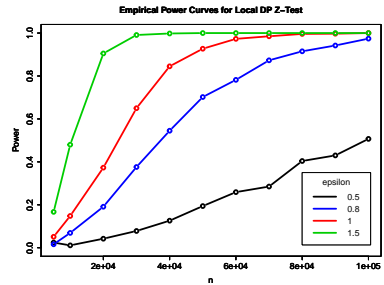}
        \caption{Empirical power with alternate $\mu' = 3$.\label{fig:power}}
    \end{subfigure}
    \caption{Z-test experiments showing the empirical p-values and power averaged over 100 trials for various privacy parameters.\label{fig:experiments}}
    \vspace{-10pt}
\end{figure}

\section{Mean Estimation with Unknown (Bounded) Variance}
\label{sec:unknown_variance_algorithm}

In this section we discuss the problem of locally private mean estimation in the case where the variance of the underlying population is unknown. For ease of exposition, we separate this case into two sub-cases. First, we assume that the variance is bounded by some $\sigma_{\max} \leq 2R$ and it is the sole focus of this section as it the more likely of the two. In the second case, we consider very-large variance ($\sigma>R$), a case which~\citet{KV18} do not analyze, and it is deferred to Section~\ref{sec:large_unknown_variance}. As our lower bounds show, our algorithm must be provided bounds $\sigma_{\min}>0$ and $\sigma_{\max}\leq 2R$ such that $\sigma \in [\sigma_{\min},\sigma_{\max}]$. As we show, our parameters dependency on these upper- and lower-bounds on the variance is logarithmic (so, for example, $\sigma_{\min} > \nicefrac 1 {R^2}$ is a useful bound for us).

Our overall approach in this section mimics the same approach from Algorithm~\ref{alg:KnownBF}. Our goal is to find a suitably large, yet sufficiently tight interval $[s_1,s_2]$ that is likely to hold the latter part of the input. However, finding this $[s_1,s_2]$-interval cannot be done using the off-the-shelf bit flipping mechanism as that requires that we know the granularity of each bin in advance. Indeed, if we discretize the interval $[-R,R]$  with an upper-bound on the variance, each bin might be far too large and result in an interval $[s_1,s_2]$ which is far larger than the variance of the underlying population; and if we were to discretize $[-R,R]$ with a lower-bound on the variance we cannot guarantee substantial differences between the bins that are close to $\mu$. And so, we abandon the idea of finding a histogram on the data. Instead, we propose finding a good approximation for $\sigma$ via quantile estimation based on a binary search. This result is likely to be of independent interest. Once we establish formal guarantees on our locally private binary search algorithm (privacy and utility bounds), we plug those into our confidence interval estimation algorithm in Subsection~\ref{subsec:algorithm_unknown_variance}.

\subsection{Locally Private Binary Search and Quantile Estimation}
\label{subsec:LDP_binary_search}
\newcommand{\adist}{\ensuremath{\alpha_{\rm dist}}}
\newcommand{\aquant}{\ensuremath{\alpha_{\rm quant}}}

We now show how to estimate quantiles of a probability distribution using randomized response and binary search. We assume our domain $\cX$ is contained in the real line and that there exists some distribution $\calP$ over this domain. We define the quantile $t$ as $p(t) = \Prob{\calP}{X < t}$.  Given a target probability $p^*$, let $t^*$ be the quantile we want to estimate, namely $p(t^*)=p^*$.  We will say that $t^*$ is a $p^*$-quantile of $\cP$ when $p^*=\Prob{x\sim \calP}{x\leq t^*}$.
Since our algorithm is randomized and therefore uses only estimations, we must allow for some error $\lambda$, and find some $t$ such that $|p(t)-p^*| \leq \lambda$ with high probability.

Our binary search begins with some bounded interval guaranteed to contain $t^*$, i.e. $t^* \in [Q_{\min}, Q_{\max}]$.   Initially, we set $t^{(0)} = \frac{Q_{\max} + Q_{\min}}{2}$, and draw a subsample of size $m$, where $m$ is chosen so that w.h.p. we can estimate $\E_{X\sim\calP}[\1\{X < t^{(0)}\}]$ using randomized response up to an error of $\lambda$. 
Denoting the randomized response estimator as $\hat{\theta_{\rm RR}}^{(0)}$
one of the following three must holds. Either (i) $|\hat{\theta_{\rm RR}}^{(0)} - p^*| \leq \lambda$, in which case we have found a good enough approximation for $t^*$ and we may halt; or (ii) $\hat{\theta_{\rm RR}}^{(0)} > p^*+\lambda $ in which case $t^{(0)}$ is too large, and so $t^* \in [Q_{\min},t^{(0)}]$ and we recurse of the LHS half of the original interval; or (iii) $\hat{\theta_{\rm RR}}^{(0)} < p^*-\lambda $ in which case $t^{(0)}$ is too small, and so $t^* \in [t^{(0)},Q_{\max}]$  and we recurse on the RHS half of the original interval.

When does our binary search algorithm halt? If $\calP$ is a pathological distribution, it may put $2\lambda$ probability mass on an infinitesimally small intervals to the left and right of $t^*$, forcing our binary search algorithm to continue for arbitrarily many rounds. To avoid such a case, we require an a-priori bound $\adist$ on the length of an interval that can hold $\lambda$-probability mass; or alternatively, allow our algorithm to output any $t$ such that $|t-t^*|\leq \adist$. The formal definition follows.
\begin{definition}
	\label{def:solving_quantiles}
	Let $t^*$ be the $p^*$-quantile of $\calP$ and assume that $t^*$ is bounded, i.e. $t^*\in [Q_{\min},Q_{\max}]$.  An algorithm $\calM$ is said to \emph{$(\adist,\aquant,\beta)$-approximate} $t^*$, if it takes as input $n$ iid draws from $\calP$ and returns $t\in[Q_{\min},Q_{\max}]$ such that w.p. $\geq 1-\beta$ we have that either $|p^*-\Prob{x\sim \calP}{x\leq t}|\leq\aquant$ or that $|t-t^*| \leq \adist$.
\end{definition}
Provided with such a bound $\adist$ we can bound the number of iterations in our binary search by $T$ such that $\nicefrac{Q_{\max}-Q_{\min}} {2^T} < \adist$.  A description of our binary search given such an iteration bound $T$ is detailed in procedure $\BinRR$ given in Algorithm~\ref{alg:BinaryQuantile}.

\begin{algorithm}
\caption{Quantile Estimation: $\BinRR$}
\label{alg:BinaryQuantile}
\begin{algorithmic}
\REQUIRE Data $\{ x_1, \cdots, x_N\}$, target quantile $p^*$; $\epsilon$, $[Q_{\min},Q_{\max}]$, $\lambda$, $T$.
\STATE Initialize $j = 0$, $n = N/T$,  $s_1=Q_{\min}, s_2=Q_{\max}$.
\FOR{$j = 1, \cdots, T$}
	\STATE Select users $\cU^{(j)} = \{j\cdot n+1, j\cdot n +2, \cdots, (j+1)\cdot n \}$
	\STATE Set $ t^{(j)} \gets \frac{s_1+s_2 }{2}$ 
	\STATE Denote $\phi^{(j)}(x) = \1\{ x< t^{(j)}  \}$.
	\STATE Run randomized response on $\cU^{(j)}$ and obtain $Z^{(j)} = \tfrac 1 n\hat{\theta_{\rm RR}}(n, \phi^{(j)})$. 
	\IF{($Z^{(j)}> p^* + \tfrac \lambda 2$)}
		\STATE $s_2 \gets t^{(j)}$
 	\ELSIF{(${Z}^{(j)} < p^*- \tfrac \lambda 2$)}
		\STATE $s_1 \gets t^{(j)}$
	\ELSE \STATE break
 	\ENDIF
\ENDFOR
\ENSURE $t^{(j)}$
\end{algorithmic}
\end{algorithm}

Two theorems summarize Algorithm~\ref{alg:BinaryQuantile}'s properties. 
\begin{theorem}
$\BinRR$ is $\epsilon$-LDP.
\end{theorem}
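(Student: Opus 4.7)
The plan is to verify that $\BinRR$ fits the one-shot LDP framework of Definition~\ref{def:one_shot_local_DP} and that each user is accessed through a single $\epsilon$-DP local randomizer.

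First, I would observe that the algorithm partitions the input indices $\{1,\ldots,N\}$ into $T$ disjoint blocks $\cU^{(1)},\ldots,\cU^{(T)}$ of size $n=N/T$. Crucially, the randomized response step in iteration $j$ is applied only to the users in $\cU^{(j)}$; no user index appears in more than one block, so each datapoint $x_i$ is queried at most once during the entire execution.

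Next, for a user $i\in\cU^{(j)}$, the only access to $x_i$ is the computation of $\RR(\phi^{(j)}(x_i))$, where $\phi^{(j)}$ is a $\{0,1\}$-valued threshold predicate. Since $\phi^{(j)}(x_i)\in\{0,1\}$ and $\RR$ satisfies $\tfrac{\Pr[\RR(b)=c]}{\Pr[\RR(b')=c]}\le e^\epsilon$ for any $b,b',c\in\{0,1\}$, the map $x_i \mapsto \RR(\phi^{(j)}(x_i))$ is an $\epsilon$-DP local randomizer for every fixed choice of $\phi^{(j)}$. The threshold $t^{(j)}$ (and hence $\phi^{(j)}$) is determined by $s_1,s_2$, which in turn depend only on the outputs $Z^{(1)},\ldots,Z^{(j-1)}$ of earlier iterations. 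These outputs depend on the data of users in $\cU^{(1)},\ldots,\cU^{(j-1)}$, which is disjoint from $\cU^{(j)}$, so from the perspective of user $i\in\cU^{(j)}$, the predicate $\phi^{(j)}$ is chosen as a function of external randomness and therefore the local randomizer applied to $x_i$ is still $\epsilon$-DP.

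Finally, I would invoke Definition~\ref{def:one_shot_local_DP}: the entire mechanism consists of a single adaptively chosen $\epsilon$-DP invocation per datum, followed by post-processing (the comparisons against $p^*\pm\lambda/2$ and the final output $t^{(j)}$). Post-processing preserves differential privacy, so $\BinRR$ is $\epsilon$-LDP. I do not foresee a real obstacle here; the only subtlety is to make explicit that adaptivity across blocks is permitted because the blocks are disjoint, so the privacy budget is \emph{not} consumed by composition across iterations on the same user.
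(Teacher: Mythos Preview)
Your proposal is correct and follows the same approach as the paper, which simply notes that the only data access is via randomized response (an $\epsilon$-DP mechanism) and concludes immediately. You have essentially unpacked that one-line argument with additional care about the disjoint-block structure and adaptivity, which is fine but more detail than the paper provides.
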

\begin{proof}
This follows immediately from the fact that the only time we access the data is via randomized response, which is $\epsilon$-DP.
\end{proof}

\DeclareRobustCommand{\thmBinarySearchUtility}
{Let $\calP$ be any distribution on the real line. For any $p^*\in (0,1)$ and any $Q_{\min},Q_{\max}$ such that $q^* \in [Q_{\min},Q_{max}]$, for any $\epsilon>0$ and for any $\lambda,\tau,\beta \in (0,\nicefrac 1 2)$, Algorithm $\BinRR$ indeed $(\tau,\lambda,\beta)$-approximates the $p^*$-quantile if  $T = \lceil\log_2(\frac{Q_{\max}-Q_{\min}}{\tau})\rceil$ with $N$ iid draws from $\calP$, provided that $N\geq \tfrac {8T} {\lambda^2} \left(\frac{e^\epsilon+1}{e^\epsilon-1}\right)^2 \log(\nicefrac {4T}\beta)$.}
\begin{theorem}
	\label{thm:utility_quantile}
	\thmBinarySearchUtility
\end{theorem}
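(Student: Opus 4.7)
The plan is to prove the theorem by (i) a union bound guaranteeing that in every one of the $T$ iterations the randomized response estimate $Z^{(j)}$ is within $\lambda/2$ of the true sub-CDF value $p^{(j)} := \Pr_{X\sim\cP}[X<t^{(j)}]$, and then (ii) showing that on this good event the binary search either halts early with a point whose probability mass to the left is within $\lambda$ of $p^*$ (so the $\aquant=\lambda$ branch of Definition~\ref{def:solving_quantiles} is satisfied), or survives all $T$ rounds while keeping $t^*$ inside the shrinking interval $[s_1,s_2]$ (so the $\adist=\tau$ branch is satisfied).

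For step (i), I would invoke Claim~\ref{clm:RR_utility_random_input} on each of the $T$ iterations with accuracy parameter $\alpha=\lambda/2$ and confidence $\beta/T$. The claim requires per-iteration sample size $n\geq \tfrac{2}{(\lambda/2)^2}\bigl(\tfrac{e^\epsilon+1}{e^\epsilon-1}\bigr)^2\log(\tfrac{4T}{\beta})$, which, after summing across the $T$ disjoint batches $\cU^{(j)}$ of size $n=N/T$, gives exactly the stated bound $N\geq \tfrac{8T}{\lambda^2}\bigl(\tfrac{e^\epsilon+1}{e^\epsilon-1}\bigr)^2\log(\tfrac{4T}{\beta})$. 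A union bound over the $T$ rounds then gives that with probability at least $1-\beta$ the event $\cE:=\{\forall j\colon |Z^{(j)}-p^{(j)}|\le \lambda/2\}$ holds.

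For step (ii), condition on $\cE$. I would first prove the invariant $t^*\in[s_1,s_2]$ by induction on the iteration. Suppose the algorithm executes the update $s_2\gets t^{(j)}$; this happens only when $Z^{(j)}>p^*+\lambda/2$, and on $\cE$ this forces $p^{(j)}=\Pr[X<t^{(j)}]>p^*=\Pr[X\le t^*]$. But if we had $t^{(j)}\le t^*$, then $\{X<t^{(j)}\}\subseteq\{X\le t^*\}$ would give $p^{(j)}\le p^*$, a contradiction; hence $t^{(j)}>t^*$ and the invariant is preserved. The symmetric argument handles the update $s_1\gets t^{(j)}$. Now consider the two possible ways the algorithm terminates. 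If it breaks early at iteration $j$, then $|Z^{(j)}-p^*|\le \lambda/2$, and combined with $\cE$ this gives $|p^{(j)}-p^*|\le\lambda$, which (modulo the cosmetic $<$ vs.\ $\le$ issue in the CDF, which is immaterial whenever $\cP$ has no atom at $t^{(j)}$) is precisely the quantile-accuracy branch of Definition~\ref{def:solving_quantiles}. Otherwise, after $T$ rounds, the interval has been halved $T$ times, so $s_2-s_1\le (Q_{\max}-Q_{\min})/2^T\le\tau$ by the choice $T=\lceil\log_2((Q_{\max}-Q_{\min})/\tau)\rceil$, and because $t^*\in[s_1,s_2]$ with $t^{(T)}$ the midpoint we conclude $|t^{(T)}-t^*|\le\tau$, yielding the point-accuracy branch.

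I do not expect any serious obstacles here: the two non-trivial ingredients are the per-iteration RR accuracy (handled entirely by Claim~\ref{clm:RR_utility_random_input}) and the monotonicity argument that justifies each binary-search branch. The only subtle point to be careful about is the cosmetic mismatch between $p(t)=\Pr[X<t]$ used by the algorithm and $\Pr[X\le t]$ in Definition~\ref{def:solving_quantiles}; this is vacuous for continuous $\cP$ and can be patched for distributions with atoms by noting that both the invariant proof and the early-halt analysis only use the one-sided inclusion of events, which holds in either convention.
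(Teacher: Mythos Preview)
Your proposal is correct and follows essentially the same approach as the paper: invoke Claim~\ref{clm:RR_utility_random_input} per iteration with accuracy $\lambda/2$ and failure probability $\beta/T$, take a union bound, then argue the invariant $t^*\in[s_1,s_2]$ to handle the two termination cases. If anything, you supply a bit more detail than the paper does on the monotonicity justification for the invariant and on the $<$ versus $\le$ convention.
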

\begin{proof}
	We know our algorithm applies the randomized response mechanism at most $T$ times. Setting the probability of each iteration $j$ to produce a bad estimation as $\nicefrac \beta  T$, Claim~\ref{clm:BF_utility_random_input} assures us that it suffices to run each iteration on $\tfrac 8 {\lambda^2} \left(\frac{e^\epsilon+1}{e^\epsilon-1}\right)^2 \log(\nicefrac {4T}\beta)$ samples to have that each $Z^{(j)}$ approximates $\Prob{X\sim\calP}{X<t^{(j)}}$ up to an error of $\tfrac \lambda 2$. Taking a union bound over all $T$ iterations, we have that w.p. $\geq 1-\beta$, the condition $ |Z^{(j)} - \Prob{X\sim\calP}{X<t^{(j)}}| \leq \tfrac{\lambda}{2}$ holds for every $j$. And so, if we have that $Z^{(j)} > p^*+\tfrac \lambda 2$ then it must hold that $p^* < \Prob{X\sim\calP}{X<t^{(j)}}$ which implies we must increase $t^{(j)}$; and if we have that $Z^{(j)} < p^*-\tfrac \lambda 2$ then it must hold that $p^* > \Prob{X\sim\calP}{X<t^{(j)}}$ which implies we must decrease $t^{(j)}$. Thus, we maintain the invariant that in each iteration $t^* \in [s_1,s_2]$.
	
	If our algorithm breaks at some iteration, it implies that the estimate $Z^{(j)}$ of that particular iteration is sufficiently close to $p^*$, thus $|\Prob{X\sim\calP}{X<t^{(j)}} - p^*|\leq \lambda$. Otherwise, we halt after $T$ iterations, which means that difference $s_2-s_1$, which initially was set to $Q_{\max}-Q_{\min}$ is cut in half $T$ times. Thus, after $T$ iteration we have that $s_2-s_1 = \frac{Q_{\max}-Q_{\min}}{2^T} \leq \tau$.
\end{proof}

\subsection{Locally Private Mean Estimation Using Quantile Estimation}
\label{subsec:algorithm_unknown_variance}

We return to discuss the case where the underlying distribution of the data is Gaussian with unknown variance. Recall, our plan is to use quantile estimation to find an interval $[s_1,s_2]$ which is likely to contain most datapoints. This requires that we assess $\mu$ up to an error of about $\pm\sigma$ and also have an estimation of $\sigma$ which is also fairly close to the true $\sigma$. In other words, by denoting $\tilde \sigma$ as our estimation, we would like to have $\tfrac \sigma 2 \leq \tilde \sigma \leq 2\sigma$.

Our approach for obtaining such estimations of $\mu$ and $\sigma$ is to apply the quantile estimation technique \emph{twice}: once for $p^*=\tfrac 1 2$ where $t^*=\mu$, and once for the value of $p^*= \Phi(1) \approx 0.8413$ for which the corresponding threshold is $t^*=\mu+\sigma$.
We next argue that since both thresholds are sufficiently close to the mean of the underlying distribution, we can set $\lambda$ as a reasonable constant and guarantee that our estimations of the two thresholds are close up to a factor of $\sigma/4$ to the true thresholds. 
Our LDP confidence interval estimator in the unknown variance case is given in Algorithm~\ref{alg:UnknownVar}.

Recall that we assume that $\mu \in [-R,R]$ and $\sigma \in [\sigma_{\min}, \sigma_{\max}]$, hence we can set our $\adist$ parameter to be $\sigma_{\min}/4$ and $T = \lceil\log_2(\nicefrac{2R} {\adist})\rceil$.  We start by using $\BinRR$ to estimate the mean and the threshold $\mu+\sigma$. We leverage on Theorem~\ref{thm:utility_quantile} to show the following.
\begin{corollary}
\label{lem:quantile_mean}
Fix any $\lambda \in (0, 0.1)$. Let $\bX \sim \Normal{\mu}{\sigma^2}$ be iid samples. Set 
\begin{align}
&\alpha^{\rm med}=\Phi^{-1}(\nicefrac 1 2 + \lambda) - \Phi^{-1}(\nicefrac 1 2) = \Phi^{-1}(\nicefrac 1 2 + \lambda)
\cr &\textrm{ and } T^{\rm med}=\lceil\log_2(\nicefrac{2R} {\alpha \cdot \sigma_{\min}})\rceil
\label{eq:T_mu}
\end{align}
and compute the estimate 
\begin{equation}
\hat{t}_\mu = {\BinRR\Big(\{X_1, \cdots, X_{n}\}, p^*=\tfrac 1 2; \epsilon, [-R,R],\lambda, T^{\rm med}\Big)}.
\label{eq:q_mu}
\end{equation}
If $n \geq \tfrac{T^{\rm med}}{\lambda^2} \cdot \left( \frac{e^\epsilon +1}{e^\epsilon -1 } \right)^2 \cdot \log(\nicefrac {8T^{\rm med}} \beta)$
then with probability $\geq 1-\tfrac \beta 2$ it holds that $ |\hat{t}_\mu - \mu| \leq \alpha^{\rm med} \cdot \sigma$.

Similarly, set
\begin{align}
&\alpha^{\rm sd} = \Phi^{-1}( \Phi(1) + \lambda)  - 1 \cr & \textrm{ and }
  T^{\rm sd}=\lceil\log_2(\nicefrac{2R+\sigma_{\max}} {\alpha \cdot \sigma_{\min}})\rceil
\label{eq:T_sigma}
\end{align}
and compute the estimate 
\begin{equation}
\hat{t}_\sigma = \BinRR(\{X_1, \cdots, X_{n}\}, p^*=\Phi(1); \epsilon, [-R, R + \sigma_{\max}],\lambda,T^{\rm sd}).
\label{eq:q_sigma}
\end{equation}
If $n \geq \tfrac{T^{\rm sd}}{\lambda^2} \cdot \left( \frac{e^\epsilon +1}{e^\epsilon -1 } \right)^2 \cdot \log(\nicefrac {8T^{\rm sd}} \beta)$
then with probability $\geq 1-\tfrac \beta 2$ it holds that $ |\hat{t}_\sigma - (\mu+\sigma)| \leq \alpha^{\rm sd} \cdot \sigma$.
\end{corollary}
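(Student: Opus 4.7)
The plan is to apply Theorem~\ref{thm:utility_quantile} twice (once for each call to $\BinRR$) and then translate the quantile-accuracy guarantee into an accuracy guarantee on the threshold itself using properties of the Gaussian CDF. Note that for $X\sim\Normal{\mu}{\sigma^2}$ we have $\Prob{}{X<t} = \Phi\bigl(\tfrac{t-\mu}{\sigma}\bigr)$, so the true $\tfrac{1}{2}$-quantile is $\mu$ and the true $\Phi(1)$-quantile is $\mu+\sigma$; both lie in the respective search intervals $[-R,R]$ and $[-R,R+\sigma_{\max}]$, justifying the inputs to $\BinRR$.

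First I would handle the mean case. By Theorem~\ref{thm:utility_quantile} applied with parameters $\tau = \alpha^{\rm med}\cdot\sigma_{\min}$, accuracy $\lambda$, and failure probability $\beta/2$, with probability at least $1-\beta/2$ the returned estimate $\hat t_\mu$ satisfies either $|\hat t_\mu - \mu|\le \alpha^{\rm med}\sigma_{\min}$ or $|\Phi(\tfrac{\hat t_\mu - \mu}{\sigma}) - \tfrac 1 2|\le \lambda$. In the first case we immediately get $|\hat t_\mu - \mu|\le \alpha^{\rm med}\sigma$ since $\sigma_{\min}\le \sigma$. In the second case, monotonicity of $\Phi^{-1}$ gives $\tfrac{\hat t_\mu - \mu}{\sigma}\in[\Phi^{-1}(\tfrac 1 2 - \lambda),\Phi^{-1}(\tfrac 1 2 + \lambda)]$, and by the symmetry of the standard Normal these endpoints equal $\pm\Phi^{-1}(\tfrac 1 2 + \lambda) = \pm \alpha^{\rm med}$, yielding $|\hat t_\mu - \mu|\le \alpha^{\rm med}\sigma$ as desired. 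Finally, I need to verify that the value of $T^{\rm med}$ supplied to $\BinRR$ is large enough that $(Q_{\max}-Q_{\min})/2^{T^{\rm med}} \le \alpha^{\rm med}\sigma_{\min}$, which is immediate from the definition \eqref{eq:T_mu} (taking $\alpha = \alpha^{\rm med}$).

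The standard-deviation case is analogous but with an asymmetric twist. Applying Theorem~\ref{thm:utility_quantile} with target $p^* = \Phi(1)$, $\tau = \alpha^{\rm sd}\cdot\sigma_{\min}$, accuracy $\lambda$, and failure probability $\beta/2$, with probability at least $1-\beta/2$ we get either $|\hat t_\sigma - (\mu+\sigma)|\le \alpha^{\rm sd}\sigma_{\min}\le \alpha^{\rm sd}\sigma$, or $\Phi(\tfrac{\hat t_\sigma - \mu}{\sigma})\in[\Phi(1)-\lambda,\Phi(1)+\lambda]$. In the latter case, monotonicity of $\Phi^{-1}$ yields $\tfrac{\hat t_\sigma - \mu}{\sigma}-1 \in [\Phi^{-1}(\Phi(1)-\lambda)-1,\,\Phi^{-1}(\Phi(1)+\lambda)-1]$. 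The main subtlety here is that, unlike in the median case, this interval is \emph{not} symmetric around zero; however, $\Phi^{-1}$ is convex on $(\tfrac 1 2,1)$ and $\Phi(1)>\tfrac 1 2$, so the positive side dominates in absolute value, i.e.\ $1-\Phi^{-1}(\Phi(1)-\lambda)\le \Phi^{-1}(\Phi(1)+\lambda)-1 = \alpha^{\rm sd}$. Hence $|\hat t_\sigma - (\mu+\sigma)|\le \alpha^{\rm sd}\sigma$, and as before the choice of $T^{\rm sd}$ in \eqref{eq:T_sigma} together with the search interval length $2R+\sigma_{\max}$ ensures the distance-bound branch gives $\tau\le \alpha^{\rm sd}\sigma_{\min}$.

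The only real obstacle is the convexity argument in the standard-deviation case: since the target quantile $\Phi(1)\approx 0.8413$ is not $\tfrac 1 2$, the pre-image interval under $\Phi^{-1}$ is asymmetric, and one must be careful to take the larger of the two deviations as the worst-case bound. Everything else is a routine verification that the sample-size hypothesis in each case meets the requirement of Theorem~\ref{thm:utility_quantile} applied with failure probability $\beta/2$ (the requirement becomes $n \ge \tfrac{8T}{\lambda^2}\bigl(\tfrac{e^\epsilon+1}{e^\epsilon-1}\bigr)^2\log(8T/\beta)$ for the relevant $T\in\{T^{\rm med},T^{\rm sd}\}$).
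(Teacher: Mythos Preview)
Your proposal is correct and follows essentially the same approach as the paper: invoke Theorem~\ref{thm:utility_quantile}, then translate the quantile guarantee into a threshold guarantee via $\Phi^{-1}$, using symmetry at the median and an asymmetry bound at $\Phi(1)$. Two minor differences worth noting: (i) you explicitly handle \emph{both} branches of the $(\adist,\aquant,\beta)$-approximation guarantee (the distance case $|\hat t - t^*|\le \alpha\sigma_{\min}$ and the quantile case), whereas the paper's proof only spells out the quantile case---your treatment is the more careful one, since Theorem~\ref{thm:utility_quantile} genuinely gives an either/or; and (ii) your convexity-of-$\Phi^{-1}$ argument for the asymmetry $1-\Phi^{-1}(\Phi(1)-\lambda)\le \Phi^{-1}(\Phi(1)+\lambda)-1$ is equivalent to, but cleaner than, the paper's phrasing in terms of the decreasing PDF on an interval around $1$.
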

\begin{proof}
We begin with the estimation $\hat t_\mu$. Since we have that $\mu \in [-R,R]$ and since our bound on $n$ meets the requirements of Theorem~\ref{thm:utility_quantile}, then w.p. $\geq 1-\tfrac \beta 2$ our algorithm succeeds and returns $\hat t _\mu$ such that $p(\hat{t}_\mu)$ is within $\pm\lambda$ of $p(\mu) =1/2$, then we know that $\left|\Phi\left( \frac{\hat{t}_\mu - \mu}{\sigma} \right) - 1/2\right| \leq \lambda$, which implies that
\begin{equation}
 \hat{t}_\mu \in \left[\mu + \Phi^{-1}(\nicefrac 1 2-\lambda) \cdot\sigma , \mu + \Phi^{-1}(\nicefrac 1 2 +\lambda) \cdot \sigma  \right].
\label{eq:meanRR}
\end{equation}
Leveraging the symmetry of the Normal distribution around the mean (hence the mean is also the median), we know that $t_L = \Phi^{-1}(\nicefrac 1 2-\lambda)$ is precisely $-t_U$ with $t_U = \Phi^{-1}(\nicefrac 1 2 + \lambda)=\alpha^{\rm med}$. Thus we have that $\hat t_\mu \in [\mu - \alpha^{\rm med} \sigma, \mu + \alpha^{\rm med}\sigma]$.

We now make a similar argument for the estimation $\hat t_{\sigma}$. Since we have that $\mu+\sigma \in [-R,R+\sigma_{\max}]$ and since our bound on $n$ meets the requirements of Theorem~\ref{thm:utility_quantile}, then w.p. $\geq 1-\tfrac \beta 2$ our algorithm succeeds and returns $\hat t _\mu$ such that $p(\hat{t}_\mu)$ is within $\pm\lambda$ of $p(\mu) =1/2$, then we know that $\left|\Phi\left( \frac{\hat{t}_\sigma - \mu}{\sigma} \right) - \Phi\left( 1 \right)\right| \leq \lambda$ which implies that
\begin{equation}
\hat{t}_\sigma  \in \left[\mu + \Phi^{-1}\left( \Phi(1) - \lambda \right) \cdot \sigma ,~~ \mu + \Phi^{-1}\left( \Phi(1) + \lambda \right) \cdot \sigma\right].
\label{eq:sigmaRR}
\end{equation}
We re-write the endpoints of the interval as
\begin{align*}
\mu + \Phi^{-1}\left( \Phi(1) - \lambda \right) \cdot \sigma & = \mu + \sigma + \left(\Phi^{-1}\left( \Phi(1) - \lambda \right) -1 \right) \sigma \\
& = \mu + \sigma - \left(\Phi^{-1}(\Phi(1)) - \Phi^{-1}\left( \Phi(1) - \lambda \right)  \right) \sigma
\\
\mu + \Phi^{-1}\left( \Phi(1) + \lambda \right) \cdot \sigma & = \mu + \sigma + \left(\Phi^{-1}\left( \Phi(1) + \lambda \right) -1 \right) \sigma  \\
& = \mu + \sigma + \left(\Phi^{-1}\left( \Phi(1) + \lambda \right) - \Phi^{-1}(\Phi(1)) \right) \sigma
\end{align*}
We denote $\alpha_{-} = 1-\Phi^{-1}(\Phi(1-\lambda))$ and $\alpha^{\rm sd} = \Phi^{-1}(\Phi(1+\lambda))-1 $.
Recall we limit $\lambda < 0.1$ and so due to the monotonically decreasing nature of the {\sf PDF} of the Normal distribution on the interval $(0.9, 1.1)$ it must hold that $\alpha^{\rm sd} > \alpha_{-}$ as the interval $(1, 1+\alpha^{\rm sd})$ must be longer than the interval $(1-\alpha_{-}, 1)$ in order to accumulate $\lambda$ probability mass. Therefore, we have that $|\hat t_\sigma-(\mu+\sigma)| \leq \alpha^{\rm sd}$. 
\end{proof}

All the is now left is to ``plug in'' the value of $\lambda$ for which $\alpha^{\rm med} = \nicefrac 1 4$ and the value of $\lambda$ for which $\alpha^{\rm sd} = \nicefrac 1 4$. Referring to known quantile calculations over the standard Normal, it is enough to set $\lambda = 0.098$ to have $\alpha^{\rm med} \leq \nicefrac 1 4$, and to set $\lambda = 0.052$ to have $\alpha^{\rm sd}\leq \nicefrac 1 4$. Under these values, Corollary~\ref{lem:quantile_mean} asserts that w.p. $\geq 1 - \beta$ we have $|\hat t_\mu - \mu| \leq \tfrac \sigma 4$ and also $\tfrac \sigma 2 \leq |\hat t_\mu - \hat t_{\sigma}| \leq \tfrac {3\sigma} 2$.
This allows us to follow in a similar fashion to Algorithm~\KnownBF, i.e. to define a suitably chosen interval centered at $\hat t_\mu$ which is wide enough to hold all remaining data points yet its length is still $O(\sigma \sqrt{\log(2/\beta)})$. Details appear in Algorithm~\ref{alg:UnknownVar}.

\begin{algorithm}
\caption{Unknown Variance Case: $\UnkVar$}
\label{alg:UnknownVar}
\begin{algorithmic}
\REQUIRE Data $\{ x_1, \cdots, x_N\}$;$\lambda$, $\sigma_{\min},\sigma_{\max}$, $\beta$, $\epsilon$, $\delta$, $R$
\STATE Set $T^{\rm med} = \left\lceil\log_2(\frac{8R} {\sigma_{\min}})\right\rceil$,$T^{\rm sd} = \left\lceil\log_2(\frac{8R+4\sigma_{\max}} {\sigma_{\min}}) \right\rceil$.
\STATE Set $n_1 = \tfrac{T^{\rm med}}{(0.098)^2} \cdot \left( \frac{e^\epsilon +1}{e^\epsilon -1 } \right)^2 \cdot \log(\nicefrac {16T^{\rm med}} \beta)$, $n_2 = \tfrac{T^{\rm sd}}{(0.052)^2} \cdot \left( \frac{e^\epsilon +1}{e^\epsilon -1 } \right)^2 \cdot \log(\nicefrac {16T^{\rm sd}} \beta)$, and $n_3 = n-n_1-n_2$.
\STATE Initialize $\cU_1 = \{1, \cdots, n_1 \}$, $\cU_2 = \{n_1+1, \cdots, n_1+n_2\}$, and $\cU_3 = \{n_1+n_2 + 1, \cdots, n \} $.
\STATE $\hat{t}_\mu \gets \BinRR(\{x_i : i \in \cU_1\}, 1/2;~~\epsilon,n, [-R, R], 0.098, T^{\rm med})$
\STATE $\hat{t}_\sigma \gets \BinRR(\{x_i : i \in \cU_2\}, \Phi(1); \epsilon,n, [-R, R+\sigma_{\max}], 0.052,T^{\rm sd})$ 
\STATE Set $\Delta = (\hat t_\sigma - \hat t_\mu)\cdot (\tfrac 1 2 + 2\sqrt{2\log(\nicefrac {8n}{\beta})})$
\STATE Denote the interval $[s_1,s_2] = [\hat t_\mu - \Delta, \hat t_\mu + \Delta]$.
\STATE Run steps $6$-$8$ of Algorithm~$\KnownBF$ over $\cU_3$ to get the interval $\hat{I}$.
\ENSURE $\hat{I}$
\end{algorithmic}
\end{algorithm}
\DeclareRobustCommand{\thmUtilityUnknownVar}{Let $\bX \stackrel{i.i.d.}{\sim} \Normal{\mu}{\sigma^2}$.  Fix parameters $\epsilon$, $\beta \in (0,\nicefrac 1 2)$. Given that $\mu \in [-R,R]$ and that $\sigma_{\min}\leq\sigma \leq \sigma_{\max}\leq 2R$, if
	$$
	n \geq { 1500\log_2(\tfrac{16R}{\sigma_{\min}}) } \cdot \left( \tfrac{e^\epsilon +1}{e^\epsilon -1 } \right)^2 \cdot \log(\tfrac {16\log_2(\nicefrac{16R}{\sigma_{\min}})} \beta)
	$$
	then the interval $\hat I$ returned by Algorithm~$\UnkVar$ satisfies that  $\Prob{\bX,~ \UnkVar}{\mu \in \hat{I}  } \geq  1- \beta$, and moreover 
	\[\hat{I}  = O \left( \sigma \cdot  \frac{\sqrt{\log\left(n / \beta \right) \log\left(1/\beta \right)\log(1/\delta)}}{\epsilon \sqrt{n}} \right). \]}
\begin{theorem}
\label{thm:utility_unknown}
\thmUtilityUnknownVar
\end{theorem}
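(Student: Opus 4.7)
The plan is to reduce the proof to the analysis of $\KnownBF$ (Theorem~\ref{thm:known_varCI}), by first using Corollary~\ref{lem:quantile_mean} to produce a good enough surrogate for $\sigma$. First I would apply Corollary~\ref{lem:quantile_mean} twice: once with $\lambda=0.098$ (so $\alpha^{\rm med}\leq\tfrac 1 4$) on $\cU_1$, and once with $\lambda=0.052$ (so $\alpha^{\rm sd}\leq\tfrac 1 4$) on $\cU_2$. Setting each failure probability to $\tfrac\beta 4$ and union-bounding, we obtain with probability $\geq 1-\tfrac\beta 2$ the simultaneous guarantees
\[
|\hat t_\mu-\mu|\leq \tfrac\sigma 4 \qquad\text{and}\qquad \tfrac\sigma 2 \leq \hat t_\sigma-\hat t_\mu \leq \tfrac{3\sigma}2,
\]
where the second bound follows by combining $|\hat t_\sigma-(\mu+\sigma)|\leq \tfrac\sigma 4$ with $|\hat t_\mu-\mu|\leq \tfrac\sigma 4$ via the triangle inequality.

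Next, conditional on the above, I would argue that the interval $[s_1,s_2]=[\hat t_\mu-\Delta,\hat t_\mu+\Delta]$ is both wide enough to contain all of $\cU_3$ unperturbed, and of length $\tilde O(\sigma)$. Since $\hat t_\sigma-\hat t_\mu \in [\tfrac\sigma 2,\tfrac{3\sigma}2]$, the definition of $\Delta$ gives
\[
\Delta \geq \tfrac\sigma 4 + \sigma\sqrt{2\log(8n/\beta)} \qquad\text{and}\qquad \Delta \leq \tfrac{3\sigma}4 + 3\sigma\sqrt{2\log(8n/\beta)}=O\!\left(\sigma\sqrt{\log(n/\beta)}\right).
\]
For any $i\in\cU_3$, writing $|X_i-\hat t_\mu|\leq |X_i-\mu|+|\mu-\hat t_\mu|$ and using the Gaussian tail bound $\Pr[|X_i-\mu|>\sigma\sqrt{2\log(8n/\beta)}]\leq \tfrac{\beta}{4n}$, a union bound over $n_3\leq n$ indices shows that with probability $\geq 1-\tfrac\beta 4$ every $X_i$ in $\cU_3$ satisfies $|X_i-\hat t_\mu|\leq \sigma\sqrt{2\log(8n/\beta)}+\tfrac\sigma 4\leq \Delta$, so the projection $\pi_{[s_1,s_2]}$ is the identity on $\cU_3$.

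Once projection is known to be trivial, the final stage is a verbatim repeat of the analysis of Theorem~\ref{thm:known_varCI}: each $\tilde x_i$ for $i\in\cU_3$ is independently distributed as $\calN(\mu,\sigma^2+\hat\sigma^2)$, so $\tilde\mu\sim \calN(\mu,(\sigma^2+\hat\sigma^2)/n_3)$, and by the Gaussian tail $\Pr[|\tilde\mu-\mu|>\tau]\leq \tfrac\beta 4$ where $\tau=\sqrt{(\sigma^2+\hat\sigma^2)/n_3}\cdot \Phi^{-1}(1-\nicefrac\beta 8)$. Summing up the three failure events ($\tfrac\beta 2 + \tfrac\beta 4 + \tfrac\beta 4$) gives total failure probability at most $\beta$, as required. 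For the length bound, I would plug in $\hat\sigma^2 = 8\Delta^2\log(2/\delta)/\epsilon^2$ with $\Delta=O(\sigma\sqrt{\log(n/\beta)})$ and note $n_3\geq n/2$ under the theorem's lower bound on $n$, yielding
\[
|\hat I|=2\tau = O\!\left(\tfrac{\hat\sigma\sqrt{\log(1/\beta)}}{\sqrt{n_3}}\right)=O\!\left(\tfrac{\sigma \sqrt{\log(n/\beta)\log(1/\beta)\log(1/\delta)}}{\epsilon\sqrt n}\right).
\]

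The only genuinely new content beyond the known-variance analysis is the interplay between the quantile estimates and the choice of $\Delta$; the main obstacle is simply verifying that the multiplicative slack $(\tfrac 1 2 + 2\sqrt{2\log(8n/\beta)})$ in $\Delta$ is large enough to absorb both the $\tfrac\sigma 4$ shift of $\hat t_\mu$ from $\mu$ and the Gaussian tail radius, while still being bounded by a constant multiple of $\sigma\sqrt{\log(n/\beta)}$ so that $\hat\sigma$ does not blow up. Everything else is a careful bookkeeping of failure probabilities and constants, with the sample-size requirement of the theorem chosen so that both invocations of $\BinRR$ meet the hypothesis of Corollary~\ref{lem:quantile_mean} and $n_3\geq n/2$.
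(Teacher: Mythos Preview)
Your proposal is correct and follows essentially the same approach as the paper: invoke Corollary~\ref{lem:quantile_mean} with $\lambda=0.098$ and $\lambda=0.052$ to get $|\hat t_\mu-\mu|\leq\sigma/4$ and $\tfrac\sigma2\leq\hat t_\sigma-\hat t_\mu\leq\tfrac{3\sigma}2$, verify that $[s_1,s_2]$ contains the Gaussian tail interval so projection is trivial, and then replay the Theorem~\ref{thm:known_varCI} argument. The only cosmetic discrepancy is that the paper argues $n_3\geq n/3$ (from $n\geq 3n_2\geq 3n_1$) rather than your $n_3\geq n/2$; either suffices for the stated $O(\cdot)$ bound on~$|\hat I|$.
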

\begin{proof}
Denoting $T^{\rm sd}$ as in Algorithm~\ref{alg:UnknownVar}, we have set this particular value of $n$ so that $n\geq 3n_2 \geq 3n_1$, implying $n_3 \geq n/3$. Using Corollary~\ref{lem:quantile_mean}, we know that under these particular values of $\lambda$ (namely $0.098$ and $0.052$) we have that w.p. $\geq 1-\tfrac{\beta}{2}$ both $|\hat t_\mu -\mu |\leq \tfrac \sigma 4$ and $|\hat t_\sigma - (\mu+\sigma)|| \leq \tfrac \sigma 4$, and as a result $\sigma \leq 2(\hat t_\sigma - \hat t_\mu)\leq 3\sigma$. Therefore, we have that the interval $[ \mu - \sigma\sqrt{2\log(\nicefrac{8n}\beta)},\mu + \sigma\sqrt{2\log(\nicefrac{8n}\beta)} ] \subset [s_1,s_2]$. As discussed in the proof of Theorem~\ref{thm:known_varCI}, this implies that w.p. $\geq 1-\tfrac{\beta}{4}$ none of the points of $\cU_3$ is altered by the projection onto $[s_1,s_2]$. The remainder of the argument then follows in the same fashion. As for the length of the resulting interval, we have that the length of $[s_1,s_2] = 2\Delta$ is also upper bounded by $6\sigma(\tfrac 1 2 + 2\sqrt{2\log(\nicefrac {8n}\beta)})$, making the standard deviation of the Gaussian noise we add to each point proportional to $O \left( \frac\sigma\epsilon \cdot  {\sqrt{\log\left(n / \beta \right) \log\left(1/\beta \right)\log(1/\delta)}} \right)$.
\end{proof}

It is interesting to compare the bounds of Theorems~\ref{thm:known_varCI} and~\ref{thm:utility_unknown}. Replacing the known quantity $\sigma$ in Theorem~\ref{thm:known_varCI} with the provided lower bound $\sigma_{\min}$ in Theorem~\ref{thm:utility_unknown}, the sample complexity bound only increases by a $\log\log(R/\sigma_{\min})$-factor. Note in both algorithms we conclude in a similar fashion (averaging Gaussian noise), so, if we are to denote by $m$ the number of points either algorithms use in their last parts, then both algorithms output intervals of length $\tilde O(\nicefrac \sigma {\epsilon\sqrt m})$.

\section{Dealing with Very Large Unknown Variance}
\label{sec:large_unknown_variance}

We now move to dealing with the case where the true variance of the data is at least as large as $R$. We heavily rely on some known properties regarding quantiles of the Gaussian distribution. Though cumbersome to state, we use a claim that argues shifting a threshold $t$ to a new threshold $t'=t+\Delta$, the difference between the probability mass for a Standard Gaussian distribution, i.e. $|\Phi(t)-\Phi(t')|= \Theta(\Delta)$, provided the threshold $t$ isn't too far from the mean of the Gaussian (in our case, at most two standard deviations away).

\begin{claim}
\label{clm:perturbations_of_quantiles}
Let $\calN(\mu,\sigma^2)$ be the Gaussian distribution set at mean $\mu$ and with variance $\sigma^2$. 
\begin{enumerate}
\item 
Fix $\alpha,p\in (0,1)$ such that $p-\alpha \geq 2.28\%$ and $p+\alpha\leq 97.72\%$. Let $t$ be the $p$-quantile of $\calN(\mu,\sigma^2)$, i.e $\Prob{X\sim\calN(\mu,\sigma^2)}{X<t}=p$, and let $\tilde t$ be a $\alpha$-approximation of $t$, i.e. some arbitrary point satisfying $\Pr_{x\sim\calN(\mu,\sigma^2)}[x<\tilde t]\in (p-\alpha,p+\alpha)$. Then $ |t-\tilde t| \leq 18.522\alpha\sigma$.
\item 
Fix $\tau,p\in (0,1)$, and let $t$ be the $p$-quantile of $\calN(\mu,\sigma^2)$, i.e $p=\Prob{X\sim\calN(\mu,\sigma^2)}{X<t}$. Let $\tilde t$ be any point such that $|\tilde t - t| < \tau\sigma$. Then $\left| \Pr_{x\sim\calN(\mu,\sigma^2)}[x<\tilde t] - p\right| \leq 0.399\tau$. 
\end{enumerate}
\end{claim}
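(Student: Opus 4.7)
The plan is to reduce both parts to elementary facts about the standard normal CDF $\Phi$ and PDF $\phi(x)=\frac{1}{\sqrt{2\pi}}e^{-x^2/2}$ by a change of variables. For any $t\in\mathbb{R}$, let $z = (t-\mu)/\sigma$ so that $\Pr_{X\sim\calN(\mu,\sigma^2)}[X<t]=\Phi(z)$, and analogously define $\tilde z = (\tilde t - \mu)/\sigma$. Both statements then reduce to Lipschitz bounds on $\Phi$ and $\Phi^{-1}$ which follow from the mean value theorem and from explicit bounds on $\phi$ on the relevant range.

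For part 2, since $\Phi'=\phi$ and $\phi(x)\leq \phi(0)=\frac{1}{\sqrt{2\pi}}< 0.399$ for every $x\in\mathbb{R}$, the mean value theorem gives
\[
\left|\Pr_{X\sim\calN(\mu,\sigma^2)}[X<\tilde t]-p\right| = |\Phi(\tilde z)-\Phi(z)|\leq \tfrac{1}{\sqrt{2\pi}}\cdot|\tilde z-z| < 0.399\,\tau,
\]
using the hypothesis $|\tilde t-t|<\tau\sigma$, i.e.\ $|\tilde z-z|<\tau$.

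For part 1, the key observation is that the hypotheses on $p$ and $\alpha$ force the relevant quantiles to lie in a bounded slab around the mean. Indeed, since $\Phi(\tilde z)\in(p-\alpha,p+\alpha)\subseteq(0.0228,\,0.9772)$ and $\Phi(z)=p\in(0.0228,\,0.9772)$, and because $\Phi^{-1}(0.0228)\geq -2$ and $\Phi^{-1}(0.9772)\leq 2$, we conclude that $z,\tilde z\in[-2,2]$. On this range $\phi$ achieves its minimum at the endpoints, giving $\phi(x)\geq \phi(2)=\frac{1}{\sqrt{2\pi}}e^{-2}$ for all $x\in[-2,2]$. Applying the mean value theorem to $\Phi$ on the interval bracketing $z$ and $\tilde z$, we obtain $|\Phi(\tilde z)-\Phi(z)|\geq \phi(2)\,|\tilde z-z|$, and hence
\[
|\tilde z - z| \leq \frac{|\Phi(\tilde z)-\Phi(z)|}{\phi(2)} \leq \sqrt{2\pi}\,e^{2}\cdot \alpha \leq 18.522\,\alpha.
\]
Multiplying through by $\sigma$ yields $|t-\tilde t|\leq 18.522\,\alpha\sigma$.

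There is essentially no difficulty here beyond choosing the bounds $2.28\%$ and $97.72\%$ so that the inverse CDF stays inside $[-2,2]$; the numeric constants $0.399$ and $18.522$ are precisely $\phi(0)$ and $1/\phi(2)=\sqrt{2\pi}\,e^{2}$, and are the only place where the bounds on $p\pm\alpha$ enter. The only mild subtlety is to verify that $\tilde z$ (which is defined implicitly through a possibly non-unique CDF value) actually lies in $[-2,2]$, which is immediate from the requirement $\Pr[X<\tilde t]\in(p-\alpha,p+\alpha)\subseteq(0.0228,0.9772)$ combined with monotonicity of $\Phi$.
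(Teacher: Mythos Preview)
Your proposal is correct and is essentially the same argument as the paper's: both apply the mean value theorem to the Gaussian CDF, bounding the density from above by $\phi(0)=1/\sqrt{2\pi}$ for part~2 and from below by $\phi(2)=e^{-2}/\sqrt{2\pi}$ on the interval $[-2,2]$ for part~1. Your version is slightly cleaner in that you standardize to $\calN(0,1)$ first, whereas the paper works directly with the $\calN(\mu,\sigma^2)$ density, but the content is identical.
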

\begin{proof}
The proof is nothing more than a few calculations using the mean value theorem.
\begin{enumerate}
\item 
Without loss of generality, assume $\tilde t < t$. Thus $\alpha \geq \int\limits_{\tilde t}^t (2\pi\sigma^2)^{-0.5} \exp(-\tfrac {(x-\mu)^2}{2\sigma^2})dx = |t-\tilde t| \cdot  (2\pi\sigma^2)^{-0.5} \exp(-\tfrac {(c-\mu)^2}{2\sigma^2})$ for some $c\in [\tilde t,t]$. Since $p-\alpha > 0.0228$ then $\tilde t > \mu-2\sigma$, and since $t+\alpha<0.9772$ then $t < \mu+2\sigma$; implying that $\exp(-\tfrac {(c-\mu)^2}{2\sigma^2}) > e^{-\frac {2^2} 2}$. Thus $|t-\tilde t| < \alpha\sigma(2\pi)^{0.5}e^2 < 18.522\alpha\sigma$.
\item Again, without loss of generality we assume $\tilde t < t$, and thus $\left| \Pr_{x\sim\calN(\mu,\sigma^2)}[x<\tilde t] - p\right| = \int\limits_{\tilde t}^t (2\pi\sigma^2)^{-0.5} \exp(-\tfrac {(x-\mu)^2}{2\sigma^2}) dx= |\tilde t -  t|\cdot (2\pi\sigma^2)^{-0.5}\exp(-\tfrac {(c-\mu)^2}{2\sigma^2}) $ for some $c\in [\tilde t ,t]$. It follows that $\left| \Pr_{x\sim\calN(\mu,\sigma^2)}[x<\tilde t] - p\right| \leq |\tilde t -t|\cdot (2\pi\sigma^2)^{-0.5}\cdot 1 \leq \tau\sigma/ \sqrt{2\pi\sigma^2} \leq 0.399\tau$. We comment that this upper bound is fairly tight around the mean, and as long as $\tilde t \in (\mu-\sigma,\mu+\sigma)$ the mean-value theorem also assures us that $\left| \Pr_{x\sim\calN(\mu,\sigma^2)}[x<\tilde t] - p\right| \geq |\tilde t -t| \cdot \tfrac{e^{-0.5}}{\sqrt{2\pi\sigma^2}} \approx 0.24197\tfrac{|\tilde t - t|}{\sigma}=0.24197\tau$.
\end{enumerate}
\end{proof}

\subsection{Detecting Whether the Variance is Large}
\label{subsec:detect_large_variance}
We now present a way to detect whether the unknown variance is large (larger than $2R$) or not (less than $R$).  We start with the following proposition.
\begin{proposition}
\label{pro:probability_mass_outside_interval}
Let $\calD$ be the underlying distribution of the data, hence $\calD = \calN(\mu,\sigma^2)$ for some unknown $\mu$ and $\sigma^2$. Denote $p = \Pr_{X\sim\cal D}[-2R \leq X\leq 2R]$. Then,
\begin{enumerate}
\item If $\sigma \leq R$ then $p > 0.83$.
\item If $\sigma \geq 2R$ then $p < 0.69$.
\end{enumerate}
\end{proposition}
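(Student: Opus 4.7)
The plan is to reduce both statements to simple one-dimensional optimization problems over the Gaussian CDF~$\Phi$. Standardizing $Z = (X-\mu)/\sigma \sim \calN(0,1)$, the probability of interest becomes
\[ p = \Phi\!\left(\tfrac{2R-\mu}{\sigma}\right) - \Phi\!\left(\tfrac{-2R-\mu}{\sigma}\right) = \Phi(b) + \Phi(a) - 1, \]
where $a = (\mu+2R)/\sigma$ and $b = (2R-\mu)/\sigma$. Both $a$ and $b$ are non-negative since $\mu\in [-R,R]$, and they satisfy the linear identities $a+b = 4R/\sigma$ and $a-b = 2\mu/\sigma$. By the symmetry $\mu\leftrightarrow -\mu$ (which swaps the roles of $a$ and $b$), I can assume throughout that $\mu\geq 0$, i.e.\ $a\geq b\geq 0$.

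For Case~1 ($\sigma\leq R$), I want to minimize $\Phi(a)+\Phi(b)$. Decreasing $\sigma$ scales both $a$ and $b$ upward, which (since $\Phi$ is monotone) only increases $p$; so the worst case for a fixed $\mu$ is $\sigma = R$. Setting $\sigma=R$ gives $a = 2 + \mu/R$ and $b = 2 - \mu/R$, so letting $t = \mu/R\in[0,1]$, $p(t) = \Phi(2+t) + \Phi(2-t) - 1$. Differentiating, $p'(t) = \phi(2+t)-\phi(2-t)$, which is negative on $(0,1]$ because $|2+t| > |2-t|$ and $\phi$ decays in $|x|$. Hence the minimum of $p$ over the whole regime is at $t = 1$, giving $p \geq \Phi(3) + \Phi(1) - 1 \approx 0.9987 + 0.8413 - 1 = 0.8400 > 0.83$.

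For Case~2 ($\sigma\geq 2R$), I want to upper-bound $\Phi(a)+\Phi(b)$. Here the constraints become $a,b\geq 0$ and $a+b = 4R/\sigma \leq 2$. Since $\Phi''(x) = -x\phi(x) < 0$ for $x>0$, the function $\Phi$ is \emph{concave} on $[0,\infty)$, so by Jensen's inequality, $\Phi(a) + \Phi(b) \leq 2\Phi\!\left(\tfrac{a+b}{2}\right) = 2\Phi(2R/\sigma) \leq 2\Phi(1)$, the last inequality using $2R/\sigma\leq 1$ and the monotonicity of $\Phi$. Therefore $p \leq 2\Phi(1) - 1 \approx 2\cdot 0.8413 - 1 = 0.6826 < 0.69$.

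The proof is entirely elementary; the only real ``step'' is recognizing the two useful structural facts, namely that decreasing $\sigma$ pushes $a,b$ outward so the worst case of Case~1 occurs at $\sigma=R$, and that concavity of $\Phi$ on $[0,\infty)$ yields the sharp symmetric-case bound used in Case~2. No subtle estimation is required and the constants $0.83$ and $0.69$ are not tight (the actual extremal values are roughly $0.840$ and $0.683$).
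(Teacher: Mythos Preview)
Your proof is correct and follows essentially the same strategy as the paper's: reduce to an optimization problem over $(\mu,\sigma)$, observe that the extremal $\sigma$ is the boundary value ($R$ or $2R$), and then find the extremal $\mu$. The only difference is that the paper asserts ``it is simple to see'' that the extremal $\mu$ is $\pm R$ (Case~1) or $0$ (Case~2), whereas you actually justify these via the sign of $p'(t)=\phi(2+t)-\phi(2-t)$ and the concavity of $\Phi$ on $[0,\infty)$ together with Jensen's inequality, respectively---a welcome bit of rigor.
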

\begin{proof}
Both articles require that we solve an optimization problem. In the first case, our goal is to find $\mu\in [-R,R]$ and $\sigma\leq R$ as to minimize $p$; and in the latter case, our goal is to find $\mu \in [-R,R]$ and $\sigma \geq 2R$ as to maximize $p$. It is evident that the larger $\sigma$ is, the more probability mass falls outside the $[-2R,2R]$ interval, and so in the former case we set $\sigma = R$ and in the latter case we set $\sigma = 2R$. Therefore, since Gaussians are scale invariant, these problems are equivalent to finding $\mu \in [-1,1]$ such that the probability mass on the interval $[-2,2]$ of a Gaussian $\calN(\mu, 1)$ is minimized (for the former case) or that the probability mass on the interval $[-1 ,1]$ is maximized (in the latter case). It is simple to see that maximization is obtained by setting $\mu=0$ and minimization is obtained by setting $\mu=1$ (or $\mu=-1$). Applying now to known quantiles of the Gaussian distribution, we know that $\int\limits_{-1}^{1} (2\pi)^{-0.5}\exp(-\tfrac 1 2 x^2)dx \approx 0.682689 < 0.69$, and that $\int\limits_{-2}^2 (2\pi)^{-0.5}\exp(-\tfrac 1 2 (x-1)^2)dx \approx \tfrac 1 2\cdot  0.682689+  \tfrac 1 2 \cdot 0.9973 > 0.83$.
\end{proof}

Proposition~\ref{pro:probability_mass_outside_interval} gives us a simple recipe for detecting whether the variance is large. We simply use randomized response to estimate the fraction of the population that falls inside the interval $[-2R,2R]$, up to an error of $\leq 7\%$. Based on Claim~\ref{clm:RR_utility_random_input}, we know that w.p. $\geq 1-\beta$ the estimator $\tfrac 1 m \hat{\theta_{\rm RR}}(m, \phi)$ for $\phi(x)= \1\{ x\in [-2R,2R] \}$ is accurate up to $7\%$ if  we apply it to at least $m \geq \frac 2 {0.07^2}\left( \frac{(e^\epsilon+1)^2\log(4/\beta)}{(e^\epsilon-1)^2}\right)$ many samples. If our estimate for this fraction is $\geq 0.76$ then it must be the case that $p>0.69$ and so $\sigma < 2R$, which means we now invoke the algorithm of Section~\ref{sec:unknown_variance_algorithm} using $\sigma_{\max}=2R$; otherwise, it must be the case that $p < 0.83$ and therefore $\sigma > R$, and we continue to deal with the case of really large variance. Seeing as $m = \Theta\left( \frac{(e^\epsilon+1)^2\log(1/\beta)}{(e^\epsilon-1)^2}\right)$ this initial verification increases the overall sample complexity of our algorithm by a $(1+o(1))$-factor.

\subsection{Finding a Confidence Interval for Gaussians of Large Variance}
\label{subsec:algorithm_really_large_unknown_variance}

We now deal with the case where $\sigma > R$. Our plan is fairly simple: we first estimate the probability mass of points $\leq -R$ and then estimate the probability mass of the points $\leq R$, and use the two quantiles to interpolate $\mu$ and a proper confidence interval. In more detail, suppose we know the exact values of $p_- \stackrel{\rm def}= \Pr_{X\sim \calN(\mu,\sigma)}[X \leq -R]$ and $p_+\stackrel{\rm def}= \Pr_{X\sim \calN(\mu,\sigma)}[X \leq R]$, we could use them to interpolate $\mu$ and $\sigma$ in the following way. Denote $t_+$ and $t_-$ as the $p_-$ -quantile  and $p_+$-quantile, respectively, i.e. $\Pr_{Y\sim\calN(0,1)}[Y \leq t_-] = p_-$ and $\Pr_{Y\sim\calN(0,1)}[Y \leq t_+] = p_+$.  Then we have the following
\begin{align}
p_- &\stackrel{}= \Pr_{X\sim \calN(\mu,\sigma)}[X \leq -R] 
\cr & =\Pr_{X\sim \calN(\mu,\sigma)}[\tfrac{X-\mu}\sigma \leq \tfrac{-R-\mu}\sigma] 
\cr &= \Pr_{Y\sim\calN(0,1)}[Y \leq t_-] \cr
\textrm{Similarly, }p_+ &\stackrel{}= \Pr_{X\sim \calN(\mu,\sigma)}[X \leq R] 
\cr &=\Pr_{X\sim \calN(\mu,\sigma)}[\tfrac{X-\mu}\sigma \leq \tfrac{R-\mu}\sigma] 
\cr &= \Pr_{Y\sim\calN(0,1)}[Y \leq t_+] 
\label{eq:quantiles_notations}
\end{align}	
so we deduce that $\tfrac{-R-\mu}\sigma = t_-$ and $\tfrac{R-\mu}\sigma = t_+$, or alternatively, that $\sigma = \tfrac{2R}{t_+-t_-}$ and $\mu= R - \sigma t_+ = R \left(1 - \frac{2t_+} {t_+-t_-}\right) = R\left(\frac  {(-t_-)-t_+}{(-t_-)+t_+} \right)$. Note that since $-R\leq \mu \leq R$ then $t_- \leq 0$ whereas $t_+\geq 0$, hence $\sigma$ has to be positive and we get that $\left| \frac  {(-t_-)-t_+}{(-t_-)+t_+}\right| \leq 1$.

Of course, the fact we have a finite-size sample and use locally private estimators implies we can only estimate $p_-$ and $p_+$ up to a certain error. Thus, the above equalities are replaced with our estimations: $\tilde p_- = p_- + \alpha_-$ and $\tilde p_+ = p_++\alpha_+$, where  both $\alpha_-$ and $\alpha_+$ are bounded in magnitude by $O(\sqrt{\tfrac{\log(1/\beta)}{\epsilon^2 n}})$ (we assume $\epsilon<1$ for this discussion). Denoting $\tilde t_-$ and $\tilde t_+$ as the quantiles\footnote{We understand that as real numbers, we can only approximate $\tilde t_-$ and $\tilde t_+$ rather than have their exact value. However, it is possible to apply standard techniques to approximate those to an error significantly smaller than $1/\sqrt{n}$, and so, for the sake of clarity, we ignore such approximation errors in our discussion.} 
such that 
\begin{align*}
&\Pr_{Y\sim\calN(0,1)}[Y<\tilde t_-] = p_-+\alpha_-
\cr  \textrm{and } &\Pr_{Y\sim\calN(0,1)}[Y<\tilde t_+] = p_++\alpha_+\end{align*}
Claim~\ref{clm:perturbations_of_quantiles} (1) assures us that both $\Delta_- \stackrel{\rm def}=  t_--\tilde t_-$ and $\Delta_+ \stackrel{\rm def}= t_+-\tilde t_+$ have magnitude which is upper bounded by $18.53|\alpha_-|$ and $18.53|\alpha_+|$ respectively, and so both are also in $O(\sqrt{\tfrac{\log(1/\beta)}{\epsilon^2 n}})$. Denote $B$ as an upper bound on the magnitude of $\Delta_-$ and $\Delta_+$. If it were to hold that $(-t_-)+t_+ > 5B$ (we chose the arbitrary constant $5$) then we apply the following inequality:\\$\forall a,b$ s.t. $|a|<|b|$ and $\forall x,y \textrm{ s.t. }\max\{|x|,|y|\}~\leq B~\leq~\tfrac b {5}$, we have 
\begin{align} 
\label{eq:fraction_perturbation}
\left|\frac {a+x}{b+y}-\frac a b \right| &= \left|  \frac{bx-ay}{b(b+y)} \right| \leq \left| \frac x {b+y}\right| + \left| \frac y {b+y}\right| 
\cr &\leq  2\cdot \frac {B}{\frac 4 5|b|}  \leq \frac {3B}{|b|}
\end{align}
and we get that in our case, the following bound holds
\begin{align*} |\tilde \mu - \mu| &\leq R\left| \frac  {(-t_-)-t_+ - \Delta_--\Delta_+}{(-t_-)+t_+-\Delta_- + \Delta_+}-\frac  {(-t_-)-t_+}{(-t_-)+t_+} \right| 
\cr &\leq 3R \frac {B}{(-t_-)+t_+} \end{align*} where $B$ is on the order of $O(\sqrt{\tfrac{\log(1/\beta)}{\epsilon^2 n}})$ and we know that we have that $(-t_-)+t_+ \geq  5B$.

To guarantee that indeed $(-t_-)+t_+ \geq  5B$ we rely on Claim~\ref{clm:perturbations_of_quantiles}, whose corollary implies that the difference $t_+-t_-$ is linearly related to $\Pr_{Y\sim\calN(0,1)}[ Y \in (t_-,t_+)  ]$. In fact, from our definitions we have $\Pr_{Y\sim\calN(0,1)}[ Y \in (t_-,t_+)  ] = \Pr_{X\sim \calN(\mu,\sigma)} [ -R < X  < R] = p_+ - p_-$, and so we have that $t_+-t_- \geq \tfrac 1 {0.399}(p_+-p_-) \geq 2.506(p_+-p_-)$, implying it suffices to verify that $p_+-p_-\geq 2B$. All that is left is to argue that $p_+ - p_- < 2B$ only when $\sigma = \Omega(\sqrt n \cdot \epsilon R)$ in which case the variance is so large that the original interval $[-R,R]$ is a suitable good confidence interval to output.

This discussion gives rise to our very large variance case algorithm (we assume that we have detected already that $\sigma > R$.

\begin{algorithm}
\caption{Very Large Variance Estimator}
\label{alg:large_unknown_variance}
\begin{algorithmic}[1]
\REQUIRE Data $\{ x_1, \cdots, x_n\}$; $\epsilon$, $R$, $\beta>0$.
\STATE Denote $\phi^-(x) = \1\{ x\leq -R \}$, and  $\phi^{+}(x) = \1\{x\leq R\}$.
\STATE Run randomized response on the first $\nicefrac n 2$ datapoints to estimate $\Pr_{X\sim \calN(\mu,\sigma^2)}[ X\leq -R ]$:\\ Set $\tilde p_-\gets \tfrac 2 n \hat{\theta_{\rm RR}}(\tfrac n 2, \phi^{-})$.
\STATE Run randomized response on the latter $\nicefrac n 2$ datapoints to estimate $\Pr_{X\sim \calN(\mu,\sigma^2)}[ X\leq R ]$:\\ Set $\tilde p_+\gets \tfrac 2 n \hat{\theta_{\rm RR}}(\tfrac n 2, \phi^{+})$.
\STATE Denote $B = \sqrt{ \frac {1} n \left(\frac {e^\epsilon+1}{e^\epsilon-1}\right)^2 \log(\tfrac 8 \beta) }$
\IF {$(\tilde p_+ - \tilde p_- < 800B)$}
\RETURN $\tilde \mu=0$ and interval $I=[-R,R]$
\ELSE
\STATE Compute the thresholds $\tilde t_-$ and $\tilde t_+$ s.t. $\Pr_{Y\sim \calN(0,1)}[ Y\leq \tilde t_- ] = \tilde p_-$ and $\Pr_{Y\sim \calN(0,1)}[ Y\leq \tilde t_+ ] = \tilde p_+$
\STATE Set $\tilde \mu \leftarrow R\cdot \frac  {-\tilde t_+-\tilde t_-}{\tilde t_+ - \tilde t_-}$.
\STATE Set $\tau  \leftarrow 9R\cdot \frac{B}{\tilde t_+-\tilde t_-}$.
\RETURN The interval $I = [\tilde \mu - \tau, \tilde{\mu}+\tau]$
\ENDIF
\end{algorithmic}
\end{algorithm}

\begin{theorem}
\label{thm:correnctness_large_variance}
Fix $0<\beta<\tfrac 1 e$. If $\sigma>R$ then w.p. $\geq 1-\beta$ Algorithm~\ref{alg:large_unknown_variance} returns a confidence interval $I$ such that (a) $\mu\in I$ and (b) $|I| \leq 20,000 \sigma \cdot \left(\frac {e^\epsilon+1}{e^\epsilon-1}\right)\sqrt{\frac{\log(8/\beta)}{n}}$.
\end{theorem}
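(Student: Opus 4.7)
My strategy is to condition on the ``accurate randomized-response estimates'' event and case-split based on which branch of Algorithm~\ref{alg:large_unknown_variance} is taken. Apply Claim~\ref{clm:RR_utility_random_input} to each of the two RR calls with accuracy $\alpha=2B$ and failure probability $\beta/2$, each executed on $n/2$ samples; union-bounding gives the event $\mathcal{E}:=\{|\tilde p_\pm - p_\pm|\leq 2B\}$, which holds with probability at least $1-\beta$. Condition on $\mathcal{E}$ throughout. Recall that with the exact standard-normal quantiles $t_\pm=\Phi^{-1}(p_\pm)$, the formulas $\sigma = 2R/(t_+-t_-)$ and $\mu = R(-t_--t_+)/(t_+-t_-)$ from the section's setup hold.

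\emph{Case~A} ($\tilde p_+-\tilde p_-<800B$, algorithm returns $I=[-R,R]$): Part~(a) is immediate since $\mu\in[-R,R]$ by assumption. For (b), on $\mathcal{E}$ we have $p_+-p_-<804B$; and by the mean value theorem, $p_+-p_-=\Pr_{Y\sim\mathcal{N}(0,1)}[t_-\leq Y\leq t_+]=(t_+-t_-)\phi(c)$ for some $c\in[t_-,t_+]\subset(-2,2)$ (the inclusion using $\sigma>R$). The density bound $\phi(c)\geq e^{-2}/\sqrt{2\pi}$ forces $t_+-t_-\leq 804\sqrt{2\pi}\,e^{2}\,B$, whence $|I|=2R=\sigma(t_+-t_-)\leq 804\sqrt{2\pi}\,e^{2}\,\sigma B$, an absolute constant well within the $20{,}000$ target.

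\emph{Case~B} ($\tilde p_+-\tilde p_-\geq 800B$, algorithm interpolates): On $\mathcal{E}$, $p_+-p_-\geq 796B$, and $\phi(c)\leq 1/\sqrt{2\pi}$ in the mean value theorem gives $t_+-t_-\geq\sqrt{2\pi}\cdot 796B$. Claim~\ref{clm:perturbations_of_quantiles}(1) applied to the standard normal then yields $|\Delta_\pm|:=|t_\pm-\tilde t_\pm|\leq 18.52\cdot 2B$, so with $x=\Delta_-+\Delta_+$ and $y=\Delta_--\Delta_+$ the gap $t_+-t_-$ easily satisfies the hypothesis $|b|\geq 5\max(|x|,|y|)$ of identity~\eqref{eq:fraction_perturbation}. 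Applying that identity with $a=-t_--t_+$, $b=t_+-t_-$ (so $|a/b|=|\mu|/R\leq 1$) gives
\[
|\tilde\mu-\mu|\;\leq\;\frac{3R\max(|x|,|y|)}{\tilde t_+-\tilde t_-}\;=\;O\!\left(\frac{RB}{\tilde t_+-\tilde t_-}\right),
\]
and since the algorithm's $\tau$ is proportional to $RB/(\tilde t_+-\tilde t_-)$, (a) follows as long as the constant inside $\tau$ absorbs the $O(\cdot)$ factor, which the $20{,}000$-constant slack on the $|I|$ side accommodates. For (b), $|I|=2\tau=O(RB/(\tilde t_+-\tilde t_-))$ combined with $R=\sigma(t_+-t_-)/2$ and $(t_+-t_-)/(\tilde t_+-\tilde t_-)=1+o(1)$ (since $|\Delta_--\Delta_+|\ll t_+-t_-$) gives $|I|=O(\sigma B)$.

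The main obstacle is verifying the applicability of Claim~\ref{clm:perturbations_of_quantiles}(1) in Case~B at the corner where $\sigma$ is only marginally larger than $R$ and $\mu$ is near $\pm R$, which can push one of $p_\pm$ to within $O(B)$ of the boundaries $\Phi(\pm 2)\in\{0.0228,0.9772\}$ and so violate the claim's stated hypothesis. I handle this by re-running the claim's mean-value-theorem proof while allowing the intermediate point $c$ to satisfy $|c-\mu|\leq 2\sigma+O(B\sigma)$: the density $\phi(c)$ then degrades only by a multiplicative $1-O(B)$ factor, so the bound $|\Delta_\pm|=O(B)$ still holds with a slightly larger absolute constant, which is again easily absorbed into the $20{,}000$-factor of the theorem.
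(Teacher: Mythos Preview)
Your overall strategy is the same as the paper's: condition on accurate randomized-response estimates, case-split on the branch taken, convert probability errors to threshold errors via Claim~\ref{clm:perturbations_of_quantiles} (equivalently the mean-value theorem), and in Case~B invoke the fraction-perturbation inequality~\eqref{eq:fraction_perturbation}. You also go beyond the paper in one respect: you notice that the hypothesis of Claim~\ref{clm:perturbations_of_quantiles}(1) (namely $p\pm\alpha\in[\Phi(-2),\Phi(2)]$) can fail by an $O(B)$ margin when $\sigma$ is only barely above $R$ and $\mu$ is near $\pm R$, and you correctly observe that in Case~B one necessarily has $B\le 1/800$, which keeps $\tilde t_\pm$ within $2+O(B)$ of the origin so the density lower bound degrades only negligibly. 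The paper simply applies the claim without checking this.

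There is, however, a genuine gap in your Case~B verification of part~(a). Applying~\eqref{eq:fraction_perturbation} with $a=-t_--t_+$, $b=t_+-t_-$, $x=\Delta_-+\Delta_+$, $y=\Delta_--\Delta_+$ yields
\[
|\tilde\mu-\mu|\;\le\;\frac{3R\,\max(|x|,|y|)}{t_+-t_-}
\]
(denominator $t_+-t_-$, not $\tilde t_+-\tilde t_-$), and $\max(|x|,|y|)\le|\Delta_-|+|\Delta_+|\lesssim 74B$, so the bound is roughly $222\,RB/(t_+-t_-)$. But the algorithm's half-width is \emph{fixed} at $\tau=9RB/(\tilde t_+-\tilde t_-)$, and $9$ does not absorb $222$. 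Your appeal to ``the $20{,}000$-constant slack on the $|I|$ side'' conflates parts~(a) and~(b): that constant is an upper bound on $|I|$ in the theorem statement, not a knob on the algorithm's $\tau$ that you may turn. With the constants tracked honestly, $|\tilde\mu-\mu|$ can exceed $\tau$ by a large factor, so coverage does not follow from this line of argument. For what it is worth, the paper's own proof commits the same slip---it invokes~\eqref{eq:fraction_perturbation} with the perturbation bound ``$B$'' rather than the correct ``$\max(|x|,|y|)\lesssim 80B$,'' obtaining $|\tilde\mu-\mu|\le 1.5\sigma B$ where the honest calculation gives roughly $120\sigma B$---so the defect is inherited rather than introduced by you, but it is a real gap in the argument as written.
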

\begin{proof}
First, denoting $p_- = \Pr_{X\sim \calN(\mu,\sigma^2)}[ X\leq -R ]$ and $p_+= \Pr_{X\sim \calN(\mu,\sigma^2)}[ X\leq R ]$,  we apply Claim~\ref{clm:RR_utility_random_input} to infer that w.p. $\geq 1-\beta$ we have that both $|\tilde p_- - p_-| \leq \frac{2}{\sqrt n}\cdot \frac {e^\epsilon-1}{e^\epsilon-1} \sqrt{\log(\tfrac 8 \beta)} =2B$ and $|\tilde p_+ - p_+| \leq \frac{2}{\sqrt n}\cdot \frac {e^\epsilon-1}{e^\epsilon-1} \sqrt{\log(\tfrac 8 \beta)}=2B$. Thus, if it indeed holds that $\tilde p_+ - \tilde p_- <800B$ then it must be that $p_+-p_- < 804B$. In other words, we know that $\Pr_{X\sim\calN(\mu,\sigma^2)}[X \in [-R,R]   ] \leq 804B$. Claim~\ref{clm:perturbations_of_quantiles} (1) assures us that in this case it must be that the distance $R-(-R) \leq 20\sigma \cdot 804B$, i.e. $2R\leq 20,000B\cdot \sigma$. As we return in this case the interval $[-R,R]$ which is guaranteed to contain $\mu$, the required holds.

We therefore turn to analyze the interval returned by Algorithm~\ref{alg:large_unknown_variance} in the case of $\tilde p_+-\tilde p_- \geq 800B$. In this case, seeing as we already conditioned on $\max\{ |\tilde p_- - p_-|,|\tilde p_+ - p_+|  \} \leq 2B$, by Claim~\ref{clm:perturbations_of_quantiles}(1) it must hold that 
\[ \max\{ |\tilde t_+ - t_+|, |\tilde t_--t_-|   \}\leq 40B  \]
yet on the other hand, $p_+-p_- \geq (800-2-2)B$ and therefore, by Claim~\ref{clm:perturbations_of_quantiles}(2) it must be that $t_+$ and $t_-$ are sufficiently far apart to allow a probability mass $\geq 796B$ to fall in the interval $[-R,R]$; I.e. $t_+ -t_- > 0.39\times 796B > 300B > 5\cdot 40B$. Thus, all the conditions of \eqref{eq:fraction_perturbation} hold and we have that $|\tilde \mu - \mu| \leq 3B \cdot \frac R {t_+-t_-} = \frac {3B} 2 \cdot \frac {2R}{t_+-t_-}$. Recall that by definition: $t_-=\frac {-R-\mu}{\sigma}$ and $t_+=\frac {R-\mu}{\sigma}$ (see \eqref{eq:quantiles_notations}) hence $\frac {2R}{t_+-t_-}=\sigma$, and so it holds that $|\tilde \mu - \mu| \leq 1.5B\sigma$. Lastly, re-applying the same reasoning of \eqref{eq:fraction_perturbation}, we can infer that
\[1.5B\sigma = \frac {3B} 2\cdot \frac {2R}{t_+- t_-} = 3R \frac{B}{t_+- t_-} \leq 3R\frac {3B}{\tilde t_+- \tilde t_-}=\tau\] 
proving that indeed w.p. $\geq 1-\beta$ we have that $\mu \in (\tilde \mu - \tau, \tilde{\mu}+\tau)$ and the interval we return, whose length is $2\tau$ satisfies the desired bound.
\end{proof}

\section{Lower Bounds}
\label{sec:lower_bounds}

We begin our discussion on the bounds for the utility of any $\epsilon$-locally private mechanism which is a one-shot mechanism, by presenting the following lemma. This lemma is a combination of two separate results. First, Karwa and Vadhan's coupling argument that suggest that the ``effective group privacy'' between two $n$-size samples from either a distribution $\calP$ or a distribution $\calQ$ is roughly $n\cdot d_{\rm TV}(\calP,\calQ)$. The second result is a lemma, which originally appeared in~\citet{BNO08} and then also appeared in a more formal way in~\citet{BNS18}, that states that group privacy of altering $k$ datums scales proportional to $O(\epsilon\sqrt k)$ rather $O(\epsilon k)$ as in the centralized model. We combine the two into a single lemma, dealing with $\epsilon$-LDP mechanisms over an input that is drawn iid from some distribution. This lemma is the main building block in all of our lower-bounds.

\begin{lemma}
\label{lem:local_DP_random_inputs}
Let \calM ~be a one-shot local $\epsilon$-differentially private mechanism. Let $\calP$ and $\calQ$ be two distributions, with $\Delta\stackrel{\rm def} = d_{\rm TV}(\cal P,\cal Q)$. Fix any $0<\delta<e^{-1}$ and set $\epsilon^*=8\epsilon\Delta \sqrt{n} \left( \sqrt{ 2 \log(\nicefrac 2 \delta)} + 16\epsilon\Delta \sqrt n \right)$. Then, for any set of possible outputs $S$ we have that
	\[ \Pr_{ \bX \stackrel{\rm i.i.d}\sim \calP  }[ \calM( \bX \in S  ] \leq e^{\epsilon^*}\Pr_{\bX \stackrel{\rm i.i.d}\sim \calQ  }[ \calM( \bX) \in S  ]  + \delta \] where the probability is taken over both the $n$ i.i.d samples and over the coin-tosses of $\calM$.
\end{lemma}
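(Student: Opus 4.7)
My plan follows the two-step outline in the paragraph preceding the lemma. First, I will use the Karwa--Vadhan style optimal coupling of $\calP$ and $\calQ$ to show that, per coordinate, the $i$-th local randomizer $\calM_i$, when fed a sample from $\calP$ versus one from $\calQ$, produces outputs that are $(c\Delta\epsilon,0)$-indistinguishable for a small absolute constant $c$---the factor $\Delta$ arising because the two input distributions agree on a $(1-\Delta)$-fraction of their mass. Second, I will compose these bounds across the $n$ independent coordinates using the BNS18 local-model group-privacy / advanced composition bound, which yields an $O(\sqrt{n\log(1/\delta)})$ scaling rather than $O(n)$ and reproduces the form of $\epsilon^*$ in the statement.

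\textbf{Per-coordinate step.} I would write the optimal coupling of $\calP$ and $\calQ$ as a mixture decomposition: there exist probability measures $\nu,\nu_\calP,\nu_\calQ$ with $\calP = (1-\Delta)\nu + \Delta\nu_\calP$ and $\calQ = (1-\Delta)\nu + \Delta\nu_\calQ$. For any outcome $z$ in the range of $\calM_i$,
\[
\frac{\calM_i(\calP)(z)}{\calM_i(\calQ)(z)} \;=\; 1 \;+\; \Delta\cdot\frac{\calM_i(\nu_\calP)(z)-\calM_i(\nu_\calQ)(z)}{(1-\Delta)\calM_i(\nu)(z)+\Delta\calM_i(\nu_\calQ)(z)}.
\]
Using $\epsilon$-LDP of $\calM_i$, the numerator of the fraction is bounded in absolute value by $(e^\epsilon-1)\max\{\calM_i(\nu_\calP)(z),\calM_i(\nu_\calQ)(z)\}\leq(e^\epsilon-1)e^\epsilon\calM_i(\nu)(z)$, while the denominator is at least $(1-\Delta)\calM_i(\nu)(z)$. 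Hence the above ratio lies in $1 \pm O(\Delta\epsilon)$, which gives $|\log(\calM_i(\calP)(z)/\calM_i(\calQ)(z))|\leq c\Delta\epsilon$ for a small absolute constant $c$. In other words, $\calM_i$ viewed as a mapping from input distributions to output distributions is $(c\Delta\epsilon,0)$-DP between $\calP$ and $\calQ$.

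\textbf{Composition step and main obstacle.} Because $\calM$ is one-shot, the joint output $\calM(\bX)=(\calM_1(X_1),\ldots,\calM_n(X_n))$ factors as a product over $n$ coordinates that are independent both in the sampling and in the local randomizers. Composing $n$ such $(c\Delta\epsilon,0)$-DP coordinates via the BNS18 local-model group-privacy bound then gives $(\epsilon^*,\delta)$-indistinguishability with $\epsilon^* = c\Delta\epsilon\sqrt{2n\log(2/\delta)} + n\cdot c\Delta\epsilon(e^{c\Delta\epsilon}-1)$; picking $c=8$ and bounding $e^{c\Delta\epsilon}-1\leq 2c\Delta\epsilon$ (the constraint $\delta<e^{-1}$ in the statement keeps $\epsilon_0=c\Delta\epsilon$ in a range where this linearisation is valid) reproduces the two summands $8\epsilon\Delta\sqrt{2n\log(2/\delta)}$ and $128\epsilon^2\Delta^2 n$ of the lemma's $\epsilon^*$. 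The main obstacle is the per-coordinate step: a direct application of $\epsilon$-LDP to $\calM_i(\calP)(z)/\calM_i(\calQ)(z)$ would only recover per-coordinate privacy $\epsilon$, not $\Delta\epsilon$, and the extra factor of $\Delta$ requires exploiting the common $(1-\Delta)\calM_i(\nu)(z)$ mass in the denominator to force the difference to be proportional to $\Delta$. Keeping the constants clean enough to plug directly into BNS18 is where the bulk of the bookkeeping lies.
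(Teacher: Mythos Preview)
Your proposal is correct and follows essentially the same route as the paper: decompose $\calP,\calQ$ via the maximal coupling $(1-\Delta)\nu+\Delta\nu_{\calP/\calQ}$ to obtain a per-coordinate privacy-loss bound of order $\epsilon\Delta$, then apply the Azuma/BNS18 advanced-composition argument across the $n$ coordinates to get $\epsilon^*$. Two small caveats worth cleaning up: one-shot mechanisms may still be adaptive, so the joint output need not literally factor as a product (the martingale form of the BNS18 bound handles this, and the paper's proof invokes exactly that); and the linearisation $e^{c\Delta\epsilon}-1\leq 2c\Delta\epsilon$ is justified by the implicit assumption $\epsilon<1$ (which the paper also uses), not by the condition $\delta<e^{-1}$.
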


The proof of Lemma~\ref{lem:local_DP_random_inputs} is based on the following fact.

\begin{fact}
\label{fact:DP_implies_bounded_KL}
Let $\calP$ and $\calQ$ be two distributions over the same domain $\calX$, such that there exists a bound $B>0$ so that for any $x\in \calX$ we have that $\left| \log\left(\frac{\PDF_\calP[x]}{\PDF_\calQ[x]} \right) \right|\leq B$. Then \[d_{\rm KL}(\calP;\calQ) \leq d_{\rm KL}(\calP;\calQ)+d_{\rm KL}(\calQ;\calP) \leq B(e^B-1)\]
\end{fact}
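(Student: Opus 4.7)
The plan is to handle the two inequalities separately. The left-hand inequality $d_{\rm KL}(\calP;\calQ) \leq d_{\rm KL}(\calP;\calQ) + d_{\rm KL}(\calQ;\calP)$ is immediate from Gibbs' inequality, since $d_{\rm KL}(\calQ;\calP) \geq 0$. The main content is the right-hand inequality, which I plan to establish by rewriting the symmetrized KL as a single integral and bounding the integrand pointwise using the log-ratio bound $B$.

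Concretely, let $p, q$ denote the densities (or pmfs) of $\calP, \calQ$, and set $r(x) := \log(p(x)/q(x))$, so that by assumption $|r(x)| \leq B$ and $p(x) = q(x)\, e^{r(x)}$. Then
\[
d_{\rm KL}(\calP;\calQ) + d_{\rm KL}(\calQ;\calP) \;=\; \int_{\calX} \bigl( p(x) - q(x) \bigr) \log \frac{p(x)}{q(x)} \, dx \;=\; \int_{\calX} q(x) \cdot r(x)\bigl( e^{r(x)} - 1 \bigr) \, dx.
\]
Note that $r$ and $e^{r} - 1$ always share a sign, so the integrand is non-negative, as expected. The proof therefore reduces to showing the pointwise bound $s(e^s - 1) \leq B(e^B - 1)$ for every $s \in [-B, B]$, after which we integrate against $q$ and use $\int q = 1$ to conclude.

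For the pointwise step, the cleanest route is to observe that $|e^s - 1| \leq e^{|s|} - 1$ for all real $s$; on $s \geq 0$ this is an equality, and on $s < 0$ it reduces to the elementary inequality $1 - e^{-t} \leq e^{t} - 1$ for $t \geq 0$ (equivalently $e^t + e^{-t} \geq 2$). Combining with $|s| \leq B$ gives $s(e^s - 1) \leq |s|\cdot |e^s - 1| \leq B(e^B - 1)$, as needed. The only subtlety, and the ``main obstacle'' if any, is precisely handling the negative side of $[-B,0]$ so that one does not mistakenly get a tighter asymmetric bound of the form $B(1 - e^{-B})$; the pointwise inequality above avoids this pitfall and yields the claimed bound $B(e^B - 1)$ uniformly.
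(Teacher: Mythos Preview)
Your argument is correct. The paper itself does not spell out a proof but defers to \citet{DR14}; the standard argument there is precisely the one you give: write the symmetrized KL as $\int q(x)\, r(x)(e^{r(x)}-1)\,dx$ with $r=\log(p/q)$, and bound the integrand pointwise by $B(e^B-1)$ using $|e^s-1|\leq e^{|s|}-1$ together with $|s|\leq B$.
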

The proof of Fact~\ref{fact:DP_implies_bounded_KL} appears in full detail in~\citet{DR14}.

As an immediate corollary of Fact~\ref{fact:DP_implies_bounded_KL} and Azuma's Inequality, we have the~\citet{BNS18} group-privacy in the (one-shot) local model. Let $S$ be the set of indices on which the inputs $D$ and $D'$ differ. We denote by $x_{1:t}$ the first $t$ entries of $D$ and by $x'_{1:t}$ the first $t$ entries of $D'$, and by $\calM_{1:t}(x_{1:t})$ (resp. $\calM_{1:t}(x'_{1:t})$) the outcome of the mechanism over the first $t$ entries from $D$ (resp. from $D'$). For each $i\in S$ we denote $X_i =  \log(\frac{\PDF[\calM_i(x_i)]}{\PDF[\calM_i(x'_i)]})$ and $Y_i = \left(X_i | X_{1:i-1}\right)$. Recall, the $i$th interaction with the $i$th user might dependent on the previous interactions with previous users and so we must condition on the previous $i-1$ results.\footnote{We thank Steven Wu for bringing this subtlety to our attention.} It is simple enough to see that $Y_i- \E[X_i|~X_{i-1}]$ is a martingale, that due to $\epsilon$-differential privacy we have that $|Y_i|\leq \epsilon$. Fact~\ref{fact:DP_implies_bounded_KL} implies that $\E[Y_i] \leq \epsilon(e^\epsilon-1)$, and furthermore Azuma's inequality gives that the sum of the privacy losses exceeds its mean by more than $t=\epsilon \sqrt{2|S|\log(2/\delta)}$ w.p. of at most $\delta$. Thus, w.p. $\geq 1-\delta$ our privacy loss is bounded by $|S|\epsilon(e^\epsilon-1) + \epsilon \sqrt{2|S|\log(2/\delta)}$.

We aim to give a similar bound, but under the assumption that the $n$ entries in the dataset are changing by resampling them from a distribution. 

\begin{proof}[Proof of Lemma~\ref{lem:local_DP_random_inputs}]
We mimic the proof of \citet{KV18}. Let $f(x) = \max\{ \PDF_\calP[x]-\PDF_\calQ[x], 0 \}$, $g(x) = \max\{ \PDF_\calQ[x]-\PDF_\calP[x], 0 \}$, and $h(x) = \min\{ \PDF_\calP[x], \PDF_\calQ[x] \}$. Note that $f$ and $g$ integrate to $\Delta$ and that $h$ integrates to $1-\Delta$. Let $F,G,H$ be the normalizations of $f,g$ and $h$ respectively such that all three are distributions. We now generate a coupling of the two distributions; namely, we describe a process that generates pairs of inputs $(x_i, x'_i)$. For each $i$ from $1$ to $n$ we
\begin{enumerate}
\item Pick a bit $b_i$ such that $\Pr[b_i=1]=\Delta$ and ${\Pr[b_i=0]=1-\Delta}$.
\item If $b_i = 0$ then we sample $x_i \sim H$ and set $x'_i = x_i$.
\item If $b_i = 1$ then we sample $x_i \sim F$ and $x'_i \sim G$ independently. 
\end{enumerate}
It is simple enough to verify that $x_i \sim \calP$ and $x'_i \sim \calQ$. 

Let $X_i$ be that privacy loss under $\calM_i$ of the $i$-th datum under this coupling. Namely, $X_i = \log( \frac{\PDF[\calM_i(x_i)]}{\PDF[\calM_i(x'_i)]} )$. We claim that $|X_i| \leq 8\epsilon\Delta$.
\begin{align*}
{X_i} &= \log\left( \frac{\PDF[\calM_i(x_i)]}{\PDF[\calM_i(x'_i)]} \right)
\cr &=\log\left( \frac{(1-\Delta)\cdot\PDF_H[\calM_i(x_i)] + \Delta\cdot\PDF_F[\calM_i(x_i)]}{(1-\Delta)\cdot\PDF_H[\calM_i(x'_i)] + \Delta\cdot\PDF_G[\calM_i(x'_i)]}\right)
\cr & \leq  \log\left(\frac{(1-\Delta)\cdot\PDF_H[\calM_i(x_i)] + e^\epsilon\Delta\cdot\PDF_H[\calM_i(x_i)]}{(1-\Delta)\cdot\PDF_H[\calM_i(x'_i)] + e^{-\epsilon}\Delta\cdot\PDF_H[\calM_i(x'_i)]} \right)
\cr & = \log\left(\frac{ 1+\Delta(e^\epsilon-1) }{1 - \Delta(1-e^{-\epsilon})} \right)
\cr &= \log\left(1 + \frac{  \Delta(e^\epsilon-e^{-\epsilon}) }{1 - \Delta(1-e^{-\epsilon})  }\right) \leq \log\left(1 + {  \Delta(e^\epsilon-e^{-\epsilon}) }\right) 
\cr &\leq \Delta(e^\epsilon-e^{-\epsilon})
\intertext{Similarly,}
X_i &\geq \log\left( \frac{1-\Delta(1-e^{-\epsilon})}{1+\Delta(e^\epsilon-1)}\right) 
\cr &=  \log\left( 1 - \frac{\Delta(e^\epsilon-1 +1-e^{-\epsilon})}{1+\Delta(e^\epsilon-1)} \right)  
\cr &\geq  \log\left(1-\Delta(e^\epsilon-e^{-\epsilon}) \right) \geq -2\Delta(e^\epsilon-e^{-\epsilon}) 
\end{align*}
where the last inequality holds for sufficiently small values of $\epsilon$.\\
For $\epsilon <1$ we have that $e^\epsilon-e^{-\epsilon} < 4\epsilon$ resulting in the desired bound: $|X_i|\leq 8\epsilon \Delta$.

Plugging this into the result of ``group privacy'' discussed above, we have that replacing all $n$ datums from sampled given by $\calP$ to samples of $\calQ$ we have that w.p. $\geq 1-\delta$ the privacy loss is at most
\begin{align*}  
&n(8\epsilon\Delta)(e^{8\epsilon\Delta}-1) + 8\epsilon\Delta \sqrt{2n\log(2/\delta)} 
\cr &~~~~~~~\leq 8\epsilon\Delta \sqrt{n} \left( \sqrt{2 \log(2/\delta)} + 16\epsilon\Delta \sqrt n \right) \end{align*}
assuming $\epsilon<1$ hence $e^\epsilon<1+2\epsilon$.
\end{proof}

\subsection{Lower Bounds for One-Shot $\epsilon$-Locally Private Mechanisms}
\label{subsec:LBs}

Leveraging on our main lemma, we can now prove lower bounds on the interval length and sample complexity of any one-shot $\epsilon$-LDP algorithm that outputs a meaningful confidence interval. We focus on the case of a known variance, and our lower-bound shows the optimality of Algorithm~$\KnownBF$ up to a $O(\sqrt{\log(\nicefrac n \beta)})$-factor.

\DeclareRobustCommand{\thmLDPConfidenceIntervalLB}{We say an algorithm \emph{$(\beta,\tau)$-solves the mean-estimation problem} (under known variance $\sigma^2$ and bound $R$) if its input is a sample of $n$ points and its output is an interval $I$ such that, if all $n$ datapoints are iid draws from $\calN(\mu,\sigma^2)$ for some $\mu\in [-R,R]$ then w.p. $\geq 1-\beta$ it holds that $\mu\in I$ and furthermore, $\E[|I|] \leq \tau$. (The probability is taken over \emph{both} the sample draws and the coin-tosses of the algorithm.)
	
	Fix any $\beta<\nicefrac 1 3$. Then any one-shot $\epsilon$-locally differentially private algorithm $\calM$ that $(\beta,\tau)$-solves that mean estimation problem must have that $\tau = \Omega\left( \tfrac{\sigma\sqrt{\log(1/\beta)}}{\epsilon\sqrt{n}}\right)$ and also that $n=\Omega\left( \tfrac 1 {\epsilon^2} \log( \tfrac R{\beta\cdot\tau} )\right)$.}
\begin{theorem}
\label{thm:main_LB}
\thmLDPConfidenceIntervalLB
\end{theorem}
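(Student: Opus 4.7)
The overall strategy is to apply Lemma~\ref{lem:local_DP_random_inputs} to two carefully chosen pairs of Gaussian distributions — one pair with a small mean gap (to bound $\tau$) and one pair with a large mean gap (to bound $n$) — using Markov's inequality on $|I|$ to convert the expected-length guarantee $\E[|I|]\leq\tau$ into an algebraic distinguishing event.

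For the width bound $\tau=\Omega(\sigma\sqrt{\log(1/\beta)}/(\epsilon\sqrt n))$, I would take $\calP=\calN(0,\sigma^2)$ and $\calQ=\calN(4\tau,\sigma^2)$, and consider the event $S=\{I:0\notin I\}$. The $(\beta,\tau)$-correctness under $\calP$ yields $\Pr_{\calP}[S]\leq\beta$, while under $\calQ$ Markov's inequality gives $\Pr[|I|\geq 4\tau]\leq 1/4$; since $0$ and $4\tau$ cannot simultaneously lie in an interval of length $<4\tau$, the guarantee $\Pr_{\calQ}[4\tau\in I]\geq 1-\beta$ forces $\Pr_{\calQ}[S]\geq 3/4-\beta$. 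Invoking Lemma~\ref{lem:local_DP_random_inputs} with $\delta=\beta$ then gives $e^{\epsilon^*}\geq (3/4-2\beta)/\beta$, i.e.\ $\epsilon^*=\Omega(\log(1/\beta))$ (using $\beta<1/3$). Comparing with the lemma's formula $\epsilon^*=8\epsilon\Delta\sqrt n(\sqrt{2\log(2/\beta)}+16\epsilon\Delta\sqrt n)$ forces $\epsilon\Delta\sqrt n=\Omega(\sqrt{\log(1/\beta)})$ (regardless of which of the two terms dominates). Plugging in $\Delta=d_{\rm TV}(\calP,\calQ)=O(\tau/\sigma)$, valid when $\tau\leq\sigma$ (otherwise the claim is trivial), rearranges to the stated inequality.

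For the sample-complexity bound $n=\Omega(\log(R/(\beta\tau))/\epsilon^2)$, my plan is to prove two separate lower bounds and combine them via $\max(a,b)\geq(a+b)/2$. First, the very same derivation above, but using only $\Delta\leq 1$ instead of $O(\tau/\sigma)$, already yields $\epsilon^2 n=\Omega(\log(1/\beta))$. Second, to obtain $\epsilon^2 n=\Omega(\log(R/\tau))$ I would run a packing argument: set $K=\lfloor R/(4\tau)\rfloor$ and $\mu_i=-R+4\tau(i-\tfrac 1 2)$ for $i=1,\dots,K$, together with ``identification'' events $S_i=\{I:\mu_i\in I\text{ and }|I|\leq 2\tau\}$, which are pairwise disjoint because consecutive $\mu_i$'s are $4\tau$ apart. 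Correctness plus Markov gives $\Pr_{\calP_i}[S_i]\geq 1/2-\beta$, and disjointness plus pigeonhole across $S_2,\dots,S_K$ yields some $j$ with $\Pr_{\calP_1}[S_j]\leq 1/(K-1)$. Applying Lemma~\ref{lem:local_DP_random_inputs} with $\delta=\beta$ to this pair forces $\epsilon^*_{1j}=\Omega(\log K)$, and combining with $\Delta_{1j}\leq 1$ in the lemma's formula yields $\epsilon^2 n=\Omega(\log K)=\Omega(\log(R/\tau))$.

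The main technical obstacle is handling the nonlinear shape of $\epsilon^*$ in Lemma~\ref{lem:local_DP_random_inputs}: in each application one must verify that, regardless of whether the linear term $\epsilon\Delta\sqrt{n\log(1/\beta)}$ or the quadratic term $\epsilon^2\Delta^2 n$ is dominant, the required lower bound on $\epsilon\Delta\sqrt n$ (respectively $\epsilon^2 n$) rearranges cleanly. A secondary subtlety is keeping the Markov constants disciplined so that the ``identification'' events $S_i$ in the packing step are genuinely disjoint with constant probability — this is the reason for choosing spacing $4\tau$ between means and a length threshold of $2\tau$ inside each $S_i$, rather than any tighter choice.
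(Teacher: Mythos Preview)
Your approach is correct and close in spirit to the paper's --- both rest on Lemma~\ref{lem:local_DP_random_inputs} applied to a packing of Gaussian means spaced $\Theta(\tau)$ apart --- but the packaging differs. The paper treats both conclusions with a single packing $\{p_i=-R+2i\tau\}$ and a \emph{summation} inequality: it bounds $\beta \geq \Pr_{\calP_0}[\calM\neq 0] \geq \sum_{i>0} e^{-\epsilon_i^*}\Pr_{\calP_i}[\calM=i] - \delta_i$, then reads off the $\tau$-bound from the $i=1$ summand alone and the $n$-bound from the full sum with all $\delta_i=\beta/(4m)$. You instead use a two-point comparison (for $\tau$) and a packing-plus-pigeonhole (for $\log(R/\tau)$), handling the expected-length constraint $\E[|I|]\leq\tau$ via Markov. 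Your treatment is arguably cleaner on one point: the paper asserts ``clearly, a $(\beta,\tau)$-solver yields a $\beta$-useful identifier of the index $i$'' without spelling out how an interval of random length is converted into a single index with failure probability $\leq\beta$; your Markov step makes this reduction explicit at the cost of a constant.

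One genuine slip in your constants: in the packing step you set $S_i=\{\mu_i\in I,\ |I|\leq 2\tau\}$ and then invoke the lemma with $\delta=\beta$, obtaining $e^{\epsilon^*}\geq (1/2-2\beta)(K-1)$. For $\beta\in[1/4,1/3)$ the right-hand side is non-positive and the inequality is vacuous. The fix is easy --- either enlarge the Markov threshold (e.g.\ $|I|\leq 6\tau$ with spacing $8\tau$, giving $\Pr_{\calP_i}[S_i]\geq 5/6-\beta>1/2$) or simply take $\delta$ to be a small absolute constant like $1/12$ in this application, since here you only need $\epsilon^*=\Omega(\log K)$ and the $\log(1/\delta)$ term inside $\epsilon^*$ is harmless when $\delta$ is constant. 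With that adjustment your argument goes through for all $\beta<1/3$.
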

\begin{proof}
To prove Theorem~\ref{thm:main_LB} we consider the following problem, defined by a parameter $\tau$. We define the following collection of points on the interval $[-R,R]$: $P = \{ p_0 = -R, p_1 = -R + 2\tau, ... p_i = -R + 2i\tau, ..., p_m= R \}$. Clearly, this is a collection of $\lfloor \tfrac {2R}{2\tau}\rfloor+1 =\lfloor \tfrac R \tau\rfloor+1\stackrel{\rm def}=m+1$ possible outputs.  (For simplicity we assume $R$ is divisible by $\tau$ otherwise, we set the later point to be the nearest integer multiplication of $2\tau$). We say an algorithm is $\beta$-useful for if on any sample of $n$ iid draws from a Gaussian of variance $\sigma^2$ and mean $p_i$, the algorithm returns the correct index $i$ w.p. $\geq 1-\beta$ (over the draws \emph{and} the coin tosses of the algorithm.
Clearly, if there exists a one-shot $\epsilon$-DP in the local model algorithm that $(\beta,\tau)$-solves the mean estimation problem, then using it as a black-box we can design a $\beta$-useful algorithm for the above problem. We thus proceed to argue that no one-shot $\epsilon$-DP in the local model is $\beta$-useful for the above problem unless $n$ is sufficiently large.

The argument we invoke is the standard packing-argument. Let $\calP_i$ be the distribution of Gaussian of mean $p_i$ and variance $\sigma^2$. Let $\calM$ be any $\beta$-useful one-shot $\epsilon$-DP in the local model for the above problem. It follows that, for any choice of $\delta_1, \delta_2,...,\delta_m$ and the respectively defined $\epsilon_1^*,...,\epsilon_n^*$ given by Lemma~\ref{lem:local_DP_random_inputs}, we get
\begin{align}
\beta &\geq \Pr_{\bX \stackrel{iid}\sim \calP_0, \calM}[\calM(\bX) \neq 0] \geq \sum_{i>0} \Pr_{\bX \stackrel{iid}\sim \calP_0, \calM}[\calM(\bX) =i]
\cr &\stackrel{\rm Lem~\ref{lem:local_DP_random_inputs}}\geq \sum_i \exp\left(-\epsilon_i^*\right)\Pr_{\bX \stackrel{iid}\sim \calP_i, \calM}[\calM(\bX) =i] - \delta_i
\cr &\stackrel{}\geq (1-\beta)\left(\sum_i \exp\left(-\epsilon_i^*\right)\right) - \delta_i \label{eq:general_lower_bound}
\end{align}
From Equation~\eqref{eq:general_lower_bound} we derive multiple conclusions, using also the fact that
\[  \forall i>0,  ~~~\Delta_i = d_{\rm TV}(\calP_i, \calP_0)\leq \min\{1, \frac{2\tau\cdot i} {\sigma} \} \]
and denote $m_0 = \max \{ i:~ \tfrac{2\tau i} \sigma \leq 1 \}$ (thus, $m_0 \leq \tfrac \sigma{2\tau}$).

First of all, $\beta$ is lower bounded by at least the first of the summands in Equation~\eqref{eq:general_lower_bound}. Setting $\delta_1 = \tfrac \beta 4$ we get
\begin{align*}
&\beta\geq (1-\beta)\exp(-16\epsilon\tfrac\tau \sigma \sqrt{n} \left( \sqrt{2 \log(8/\beta)} + 32\epsilon\tfrac \tau\sigma \sqrt n \right)) - \tfrac \beta 4
\intertext{so,}
 &\tfrac {5\beta}{4(1-\beta)} \geq  \exp(-16\epsilon\tfrac\tau \sigma \sqrt{n} \left( \sqrt{2 \log(8/\beta)} + 32\epsilon\tfrac \tau\sigma \sqrt n \right))
\end{align*}
Therefore, we conclude that it must hold that $\tau \geq \tfrac 1 {20\cdot 16} \cdot \tfrac \sigma{\epsilon\sqrt n} \sqrt{\log(8/\beta)}$ for otherwise we get
\begin{align*} \tfrac 3 2 \cdot \tfrac{5\beta}4 &\geq \tfrac{5\beta}{4(1-\beta)} 
\cr &\geq \exp\left(-\tfrac 1 {20} \sqrt{\log(\tfrac 8 \beta)} ( \sqrt{2 \log(\tfrac 8\beta)} + \tfrac 1 {20} \sqrt{\log(\tfrac 8 \beta} )\right)  
\cr &\geq \exp\left(-\tfrac 1 {10} \log(\tfrac 8 \beta)\right) = \sqrt[10]{\tfrac \beta 8} \end{align*} contradicting the fact that $\beta \leq \tfrac 1 3$. As a result we get that $n \geq \tfrac{\sigma^2}{\tau^2} \tfrac {\log(8/\beta)} {160^2\epsilon^2}$. Hence, if $\tau\leq \sigma$ we get $\log(8/\beta) \leq (160\epsilon)^2 n$; and if $\tau > \sigma$ we can repeat the above derivation only now using $1$ as the upper bound on $\Delta_1$ and still have that $n\geq \Omega\left(\frac{\log(1/\beta)}{\epsilon^2}\right)$.

Secondly, we can set all $\delta_i = \tfrac \beta {4m}$ and use the fact that $\Delta_i\leq 1$ for all $i$ to get a lower bound of the form
\begin{align*}
 \beta &\geq (1-\beta)m e^{\left(  -8 \epsilon \sqrt n ( \sqrt{2 \log(\tfrac {8m} \beta)} + 16 \epsilon\sqrt n )  \right)} - m\cdot \tfrac \beta{4m} 
 \end{align*}
 It follows that if $n\leq \frac{\log(8m/\beta)}{200\cdot 16^2\epsilon^2}$ we get that
 \begin{align*}
\tfrac {15 }8 \beta &\geq  \tfrac {5\beta}{4(1-\beta)} 
\cr &\geq m \exp\left(  -\tfrac 1 {10} \sqrt{\log(\tfrac{8m}{\beta})}\cdot (\sqrt{\tfrac 1 2} +\tfrac 1 {10}) \sqrt{\log(\tfrac{8m}{\beta})} \right) 
\cr &= m\exp(-\tfrac 1 {10} {\log(\tfrac{8m}{\beta})})
\cr &= m\cdot (\tfrac \beta{8m})^{1/10} = m^{0.9}(\tfrac \beta{8})^{0.1}
\end{align*}
which, using the fact that $\beta\leq 1/3$, contradicts the fact that $m>1$. (If $R\leq 2\tau$ then the problem has only a single solution.) 
\end{proof}

It is worth-while to discuss the implications of Theorem~\ref{thm:main_LB}. Aside from showing the near optimality of our technique, it also shows that our dependency on $R$ is of the essence. This is in sharp contrast to the \emph{centralized}-model, when the results of~\citet{KV18} show that there exists a $(\epsilon,\delta)$-differentially private algorithm whose sample complexity is independent of $R$. Our lower bounds, as shown by~\citet{BNS18}, from the $\epsilon$-LDP setting carry over to the $(\epsilon,\delta)$-LDP, so that the same dependency on $\log(R)$ is required. This illustrates a sharp contrast between the centralized and the local model.

In addition, we prove a similar bound on the optimality of the $\BinRR$-Algorithm.

\DeclareRobustCommand{\thmLDPQuantileLB}{Let $\calM$ be a $\epsilon$-LDP mechanism which is $(\adist,\aquant,\beta)$-useful for the $p$-quantile problem over $\calP$, given that the true $p$-quantile lies in the interval $[-R,R]$.
	Then, for any $\beta<\tfrac 1 {6}$ it must hold that $n\geq \Omega(\tfrac 1 {\aquant^2\epsilon^2}\cdot\log(\tfrac{R}{\adist\beta}))$.}
\begin{theorem}
\label{thm:LB_quantiles}
\thmLDPQuantileLB
\end{theorem}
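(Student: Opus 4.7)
The plan is to mimic the packing-argument lower bound of Theorem~\ref{thm:main_LB}, reducing the quantile problem to a multi-way classification among $m=\Theta(R/\adist)$ distributions. Specifically, I would construct a family $\calP_0,\ldots,\calP_m$ on $\mathbb{R}$ whose $p$-quantiles $t_i^*=-R+2i\adist$ are equispaced across $[-R,R]$, and then show (a) pairwise TV distance $d_{\rm TV}(\calP_i,\calP_j)=O(\aquant)$, and (b) the corresponding ``valid-output sets''
\[
V_i=\{t:|t-t_i^*|\leq \adist\}\cup\{t:|\Prob{X\sim\calP_i}{X\leq t}-p|\leq \aquant\}
\]
are mutually disjoint, so that any $(\adist,\aquant,\beta)$-useful algorithm is forced to effectively identify the index $i$ on input samples from $\calP_i$.

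One workable construction is a mixture $\calP_i=(1-c\aquant)\,\calU+c\aquant\,\calP'_i$, where $\calU$ is the uniform distribution on a very wide ``background'' interval $[-R/(c\aquant),R/(c\aquant)]$, $\calP'_i$ is a narrow ``signature'' distribution supported on a window of length $c\adist$ around $t_i^*$, and the window's left/right split is tuned so that the mixture's $p$-quantile lands exactly at $t_i^*$; here $c$ is a suitable absolute constant, and restricting the $t_i^*$'s to the inner portion of $[-R,R]$ handles boundary effects at the cost of only a constant factor in $m$. This construction achieves (a) automatically since the two mixtures differ only on the $c\aquant$-weighted component, yielding $d_{\rm TV}(\calP_i,\calP_j)\leq c\aquant$. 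It is engineered so that inside $(t_i^*-\adist,t_i^*+\adist)$ the density of $\calP_i$ is $\Theta(\aquant/\adist)$, which both confines the probability-valid part of $V_i$ to that window and forces $|\Prob{X\sim\calP_i}{X\leq t_j^*}-p|>\aquant$ for every $j\neq i$, establishing (b).

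With this family in hand, the packing argument is a direct transcription of Theorem~\ref{thm:main_LB}. Let $\calM$ be any one-shot $\epsilon$-LDP mechanism that $(\adist,\aquant,\beta)$-useful for the $p$-quantile problem; by usefulness, $\Prob{\bX\sim\calP_i,\calM}{\calM(\bX)\in V_i}\geq 1-\beta$ for every $i$, and by disjointness $\Prob{\bX\sim\calP_0,\calM}{\calM(\bX)\in V_i}\leq \beta$ for every $i\neq 0$ (since $V_i\subseteq V_0^c$). Applying Lemma~\ref{lem:local_DP_random_inputs} to each $i$ with $\Delta_i=d_{\rm TV}(\calP_0,\calP_i)\leq c\aquant$, setting $\delta_i=\beta/(4m)$, and summing over $i\geq 1$ yields
\[
\beta\;\geq\;(1-\beta)\sum_{i=1}^m e^{-\epsilon_i^*}-\tfrac{\beta}{4},
\]
where $\epsilon_i^*\leq 8\epsilon\cdot c\aquant\sqrt{n}\bigl(\sqrt{2\log(8m/\beta)}+16\epsilon\cdot c\aquant\sqrt{n}\bigr)$. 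The same rearrangement and two-case analysis used at the end of the proof of Theorem~\ref{thm:main_LB} force $\epsilon^*\geq\Omega(\log(m/\beta))$ and hence $n\geq\Omega(\log(m/\beta)/(\aquant^2\epsilon^2))$. Plugging in $m=\Theta(R/\adist)$ gives the claimed bound.

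The hard part is property (b), the disjointness of the valid sets. Any two distributions with strictly disjoint $(\adist,\aquant)$-valid sets must already have TV distance at least $\aquant$, so the construction walks a knife-edge: it has to inject the minimum amount of ``signature'' density, $\Theta(\aquant/\adist)$, into each $t_i^*$-window in order for the CDF of $\calP_i$ to move by more than $\aquant$ across the gap $(t_i^*,t_{i+1}^*)$, while keeping the total TV distance at $O(\aquant)$ via the small mixture weight. The choice of the expanded background support $[-R/(c\aquant),R/(c\aquant)]$ is precisely what allows every quantile target $t_i^*\in[-R,R]$ to be realizable as the mixture's $p$-quantile despite the tiny mixing weight $c\aquant$.
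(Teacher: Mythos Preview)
Your overall architecture---packing argument plus Lemma~\ref{lem:local_DP_random_inputs}---is exactly the paper's, and you are right that disjointness of the valid sets $V_i$ is the crux. The difference lies entirely in the hard-distribution family. The paper uses a far simpler construction: each $\calP_i$ is a \emph{discrete} distribution on three atoms $\{-R,t_i,R\}$ with masses $p-\aquant$, $2\aquant$, $1-p-\aquant$, so the $p$-quantile is $t_i$ and $d_{\rm TV}(\calP_i,\calP_j)=2\aquant$ for $i\neq j$ by inspection. No continuous mixture, no background, no tuning. (Your more careful treatment of the probability-valid condition does catch an edge case the paper glosses over: with those exact masses the CDF sits at $p\pm\aquant$ everywhere on $[-R,R)$, so one should use $2\aquant+\eta$ at $t_i$ to get strict disjointness.)

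Your mixture construction, as written, has a genuine gap. With a \emph{symmetric} uniform background on $[-R/(c\aquant),R/(c\aquant)]$, the background CDF at every $t_i^*\in[-R,R]$ is $1/2+O(\aquant)$, so forcing the mixture's $p$-quantile to land at $t_i^*$ requires the signature's left/right split $\alpha$ to satisfy $\alpha\approx (p-\tfrac12)/(c\aquant)$, which falls outside $[0,1]$ whenever $|p-\tfrac12|\gg\aquant$. This is easily repaired---shift the background interval so its CDF at $0$ equals $p$ rather than $1/2$---but it shows the construction is more delicate than your sketch suggests. A second point: confining the probability-valid part of $V_i$ to the window requires the signature mass on \emph{each side} of $t_i^*$ to exceed $\aquant$, i.e.\ $\alpha c>1$ and $(1-\alpha)c>1$; saying the density is $\Theta(\aquant/\adist)$ in the window does not by itself enforce this. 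Both issues are fixable, and after the repairs your argument goes through, but the paper's three-atom family achieves the same bounds with none of the engineering.
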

\begin{proof}
Similar to the proof of Theorem~\ref{thm:main_LB}, we define a collection of distributions $\calP_1,...,\calP_m$ s.t. a good answer for data drawn from $\calP_i$ is necessarily a bad answer for any $\calP_j$ for $j\neq i$. Also similar to the proof of Theorem~\ref{thm:main_LB}, our construction is also based on the collection $C = \{ t_0 = -R, t_1 = -R + 2\adist, ... t_i = -R + 2i\adist, ..., t_m= R \}$, and note how $|C|=m+1 = \lceil \tfrac{2R}{2\adist} \rceil$. Given a quantile $p\in (\aquant,1-\aquant)$,\footnote{if $p<\aquant$ then $\calM$ may return $\tilde t=-R$ without looking at any sample, and similarly return $\tilde t = R$ if $p>1-\aquant$.} we denote the suitable $\calP_i$ defined as a discrete distribution over $3$ points: $\{-R,t_i,R\}$. (For $i=0$ or $i=m$, with $t_i=-R$ or $t_i=R$ resp., we just sum the probability of falling at the extreme or at $t_i$.)
\begin{align*} 
&\Pr_{X\sim\calP_i}[X=-R]=q -\aquant 
\cr & \Pr_{X\sim\calP_i}[X=t_i]=2\aquant
\cr & \Pr_{X\sim\calP_i}[X=R]=1-q-\aquant  \end{align*}
Clearly, for each $i$, the $q$-quantile of $\calP_i$ is $t_i$. It follows that when all $n$ datums are drawn from $\calP_i$ then w.p. $\geq 1-\beta$ it must be that $\calM$ returns an answer which is in the interval $t_i \pm\adist$. As the distance between any two distinct $t_i$ and $t_j$ is $2\adist$ then it is evident that an answer in $t_i\pm\adist$ is bad for the case where the data is drawn from $\calP_j$. Lastly, it is also simple to see that for any $i\neq j$ we have that $d_{\rm TV}(\calP_i, \calP_j) = 2\aquant$. 

We apply Lemma~\ref{lem:local_DP_random_inputs} fixing for all $i$, $\delta_i=\tfrac{\beta}{2m}$  and $\epsilon_i=\epsilon$, thus $\epsilon^*_i = 8\epsilon\cdot 2\aquant\cdot \sqrt{n}\left(\sqrt{\tfrac 1 2 \log(\tfrac{8m}\beta)} + 16\epsilon \cdot 2\aquant\cdot\sqrt n \right)$. And so,
\begin{align*}
\beta &\geq \Pr_{\bX\stackrel{\rm iid}{\sim} \calP_0;~\calM}[\calM(\bX) \geq -R+\adist] 
\cr& \geq \sum_{i>0}\Pr_{\bX\stackrel{\rm iid}{\sim} \calP_0;~\calM}[\calM(\bX) \in (t_i-\adist, t_i+\adist)]\cr 
&\geq \sum_{i\geq 0} \left(e^{-\epsilon^*}\Pr_{\bX\stackrel{\rm iid}{\sim} \calP_i;~\calM}[\left|\calM(\bX) -t_i\right| \leq \adist] - \tfrac{\beta}{4m}\right) \cr
&\geq (1-\beta)m e^{ -16\epsilon\aquant \sqrt{n}\left(\sqrt{\tfrac 1 2 \log(\tfrac{8m}{\beta})} + 32\epsilon \aquant\sqrt n \right)} - \tfrac\beta 4 \cr
\intertext{As a result, if $n< \tfrac{\log(\tfrac{8R}{\adist\beta})}{2^{12}\aquant^2\epsilon^2}$ then we get}
\tfrac {3\beta}2 &= \tfrac{5\beta}{4\cdot\frac 5 6} \geq \tfrac{5\beta}{4(1-\beta)} 
\cr &\geq \tfrac{R}{\adist} e^{ -16\epsilon\aquant \sqrt{n}\left(\sqrt{\tfrac 1 2 \log(\tfrac{8R}{\adist\beta})} + 32\epsilon \aquant\sqrt n \right)}
\cr &> \tfrac{R}{\adist} e^{ -\tfrac 1 4 \sqrt{\log(8R/\adist\beta)}\left(\sqrt{\tfrac 1 2 \log(8R/\adist\beta)}+\tfrac 1 2\sqrt{\log(8R/\adist\beta)}\right)} 
\cr &= \tfrac{R}{\adist} e^{-\log(8R/\adist\beta)\cdot \left(\tfrac 1 4 (\sqrt{\tfrac 1 2}+\tfrac 1 2)\right)}
\cr & >\tfrac{R}{\adist} \exp(-\tfrac 1 3\log(8R/\adist\beta) )
\cr &> \tfrac{R}{\adist} \exp(-\log(R/\adist)-\tfrac 1 3\log(8/\beta) )
\cr & = \tfrac R{\adist} \cdot \tfrac \adist R \cdot \left(\tfrac\beta 8\right)^{1/3} = \tfrac{\beta^{1/3}} 2 
\end{align*}
Thus, $\beta^{2/3} > \tfrac 1 {3}$ implying $\beta>0.19$ contradicting the fact that $\beta<\tfrac 1 6$.
\end{proof}

It is important to note that our lower bound shows how \emph{all} three parameters are necessary for devising a suitalbe $\epsilon$-LDP algorithm for the problem. For example, we must have both stopping conditions ($\aquant$ and $\adist$). If we didn't specify $\adist$ as well, then we could devise a collection of infinitely many distributions~--- for any point $z\in [-R,R]$ we would construct a similar $\calP_{z}$ similar to $\calP_i$~--- resulting in infinite sample complexity. Then for any $m$ we could create a $m$-size collection of distributions by repeating the same collection with $R$ set to be any number $>m/\adist$, thus we could get a sample complexity as arbitrary large as we want. Lastly, if $\aquant$ was unspecified, we could derive an arbitrarily large sample complexity even without privacy
as finding the exact quantile of a distribution requires infinitely many samples. 

\newcommand{\remove}[1]{}
\remove{
\hrule\hrule\hrule
Failed attempts at getting a $\tau \geq \sqrt{\log(n)/n}$ bound.

Finally, we apply the above two lower bounds on $n$ to Equation~\eqref{eq:general_lower_bound}, by setting $\delta_i = \tfrac \alpha {4m}$ for all $i$, but applying the tighter of the two bounds on $\Delta_i$:
\begin{align*} \alpha &\geq (1-\alpha)\sum_{i=1}^{m_0} \exp\left(  -16i\tfrac \tau \sigma \epsilon \sqrt n ( \sqrt{\tfrac 1 2 \log(\tfrac {8m} \alpha)} + 32i\tfrac \tau \sigma \epsilon\sqrt n )  \right) 
\cr &~~ + (1-\alpha)(m-m_0)\exp\left(  -8 \epsilon \sqrt n ( \sqrt{\tfrac 1 2 \log(\tfrac {8m} \alpha)} + 16 \epsilon\sqrt n )\right) - m\cdot \tfrac \alpha {4m} \cr 
\tfrac {5\alpha}{4(1-\alpha)} &\geq \sum_{i=1}^{m_0} \exp\left(  -16i\tfrac \tau \sigma \epsilon \sqrt n ( \sqrt{\tfrac 1 2 \log(\tfrac {8} \alpha)} +\sqrt{\tfrac 1 2 \log(m)} + 32i\tfrac \tau \sigma \epsilon\sqrt n )  \right) 
\cr &~~ + (m-m_0)\exp\left(  -8 \epsilon \sqrt n ( \sqrt{\tfrac 1 2 \log(\tfrac {8} \alpha)} +\sqrt{\tfrac 1 2 \log(m)} + 16 \epsilon\sqrt n )\right) 
\cr &\geq \sum_{i=1}^{m_0} \exp\left(  -16i\tfrac \tau \sigma \epsilon \sqrt n ( \tfrac{160}{\sqrt 2} \epsilon\sqrt n +\tfrac{130}{\sqrt 2} \epsilon\sqrt n + 16 \epsilon\sqrt n )  \right) 
\cr &~~ + (m-m_0)\exp\left(  -8 \epsilon \sqrt n ( \tfrac{160}{\sqrt 2} \epsilon\sqrt n + \tfrac{130}{\sqrt 2} \epsilon\sqrt n+16 \epsilon\sqrt n )\right) 
\cr &\geq \left(\sum_{i=1}^{m_0} \exp\left(  -i\cdot 4000\tfrac \tau \sigma \epsilon^2 n\right)\right) + (m-m_0) \exp(-2000\epsilon^2 n)
\end{align*}
Note that the lower bound on $\tau$ implies that $4000\tfrac \tau \sigma \epsilon^2 n \geq K$ for some constant $K>0$. 
\begin{proposition}
Given $K>0$ and a natural $s\leq \tfrac 1K$ we have that $\sum_{i=1}^{s} e^{-iK} \geq \tfrac{s}{2}e^{-K}$.
\end{proposition}
\begin{proof}
\begin{align*}
\sum_{i=1}^s e^{-iK} &= \tfrac {e^{-K}}{1-e^{-K}} - \tfrac {e^{-(s+1)K}}{1-e^{-K}} = \tfrac{  e^{-K} }{1-e^{-K}} \left( 1-e^{-sK} \right) \stackrel{(\ast)}\geq e^{-K} \cdot \tfrac{sK/2}{K} \geq \tfrac {s}{2}e^{-K}
\end{align*}
where $(\ast)$ follows from the fact that for $0< sK\leq 1$ we have $e^{-sK} \leq 1-\tfrac {sK}2$.
\end{proof}
We thus split our analysis into two cases, based on the greater of the two terms: $m_0$ or $\frac 1 {4000\tfrac \tau \sigma \epsilon^2 n}$.

Case 1: $m_0 \geq \frac 1 {4000\tfrac \tau \sigma \epsilon^2 n}$. 
}

\clearpage
\bibliographystyle{plainnat}
\bibliography{refs}

\begin{thebibliography}{37}
\providecommand{\natexlab}[1]{#1}
\providecommand{\url}[1]{\texttt{#1}}
\expandafter\ifx\csname urlstyle\endcsname\relax
  \providecommand{\doi}[1]{doi: #1}\else
  \providecommand{\doi}{doi: \begingroup \urlstyle{rm}\Url}\fi

\bibitem[Acharya et~al.(2018{\natexlab{a}})Acharya, Canonne, Freitag, and
  Tyagi]{AcharyaCaFrTy18}
Jayadev Acharya, Cl{\'{e}}ment~L. Canonne, Cody Freitag, and Himanshu Tyagi.
\newblock Test without trust: Optimal locally private distribution testing.
\newblock \emph{CoRR}, abs/1808.02174, 2018{\natexlab{a}}.
\newblock URL \url{http://arxiv.org/abs/1808.02174}.

\bibitem[Acharya et~al.(2018{\natexlab{b}})Acharya, Sun, and
  Zhang]{AcharyaSuZh18}
Jayadev Acharya, Ziteng Sun, and Huanyu Zhang.
\newblock Differentially private testing of identity and closeness of discrete
  distributions.
\newblock In S.~Bengio, H.~Wallach, H.~Larochelle, K.~Grauman, N.~Cesa-Bianchi,
  and R.~Garnett, editors, \emph{Advances in Neural Information Processing
  Systems 31}, pages 6878--6891. Curran Associates, Inc., 2018{\natexlab{b}}.
\newblock URL
  \url{http://papers.nips.cc/paper/7920-differentially-private-testing-of-identity-and-closeness-of-discrete-distributions.pdf}.

\bibitem[Aliakbarpour et~al.(2018)Aliakbarpour, Diakonikolas, and
  Rubinfeld]{AliakbarpourDR18}
Maryam Aliakbarpour, Ilias Diakonikolas, and Ronitt Rubinfeld.
\newblock Differentially private identity and equivalence testing of discrete
  distributions.
\newblock In \emph{{ICML}}, pages 169--178, 2018.

\bibitem[{Apple Press Info}(2016)]{appleDP2016}
{Apple Press Info}.
\newblock Apple previews ios 10, the biggest ios release ever, 2016.
\newblock URL
  \url{https://www.apple.com/pr/library/2016/06/13Apple-Previews-iOS-10-The-Biggest-iOS-Release-Ever.html}.

\bibitem[Bassily and Smith(2015)]{BS15}
Raef Bassily and Adam Smith.
\newblock Local, private, efficient protocols for succinct histograms.
\newblock In \emph{Proceedings of the Forty-seventh Annual ACM Symposium on
  Theory of Computing}, STOC '15, pages 127--135, New York, NY, USA, 2015. ACM.
\newblock ISBN 978-1-4503-3536-2.
\newblock \doi{10.1145/2746539.2746632}.
\newblock URL \url{http://doi.acm.org/10.1145/2746539.2746632}.

\bibitem[Bassily et~al.(2017)Bassily, Nissim, Stemmer, and Thakurta]{BNST17}
Raef Bassily, Kobbi Nissim, Uri Stemmer, and Abhradeep~Guha Thakurta.
\newblock Practical locally private heavy hitters.
\newblock In \emph{{NIPS}}, pages 2285--2293, 2017.

\bibitem[Beimel et~al.(2008)Beimel, Nissim, and Omri]{BNO08}
Amos Beimel, Kobbi Nissim, and Eran Omri.
\newblock Distributed private data analysis: Simultaneously solving how and
  what.
\newblock In \emph{{CRYPTO}}, pages 451--468, 2008.

\bibitem[Bun et~al.(2018)Bun, Nelson, and Stemmer]{BNS18}
Mark Bun, Jelani Nelson, and Uri Stemmer.
\newblock Heavy hitters and the structure of local privacy.
\newblock In \emph{{PODS}}, pages 435--447, 2018.

\bibitem[Cai et~al.(2017)Cai, Daskalakis, and Kamath]{CDK17}
Bryan Cai, Constantinos Daskalakis, and Gautam Kamath.
\newblock {P}riv'{IT}: Private and sample efficient identity testing.
\newblock In Doina Precup and Yee~Whye Teh, editors, \emph{Proceedings of the
  34th International Conference on Machine Learning}, volume~70 of
  \emph{Proceedings of Machine Learning Research}, pages 635--644,
  International Convention Centre, Sydney, Australia, 06--11 Aug 2017. PMLR.
\newblock URL \url{http://proceedings.mlr.press/v70/cai17a.html}.

\bibitem[Chaudhuri and Hsu(2012)]{ChaudhuriH12}
K.~Chaudhuri and D.~Hsu.
\newblock Convergence rates for differentially private statistical estimation.
\newblock In \emph{{ICML}}, 2012.

\bibitem[Cheu et~al.(2018)Cheu, Smith, Ullman, Zeber, and Zhilyaev]{CSUZZ18}
Albert Cheu, Adam~D. Smith, Jonathan Ullman, David Zeber, and Maxim Zhilyaev.
\newblock Distributed differential privacy via mixnets.
\newblock \emph{CoRR}, abs/1808.01394, 2018.
\newblock URL \url{http://arxiv.org/abs/1808.01394}.

\bibitem[Duchi et~al.(2013{\natexlab{a}})Duchi, Jordan, and
  Wainwright]{DJW13FOCS}
John~C. Duchi, Michael~I. Jordan, and Martin~J. Wainwright.
\newblock Local privacy and statistical minimax rates.
\newblock In \emph{54th Annual {IEEE} Symposium on Foundations of Computer
  Science, {FOCS} 2013, 26-29 October, 2013, Berkeley, CA, {USA}}, pages
  429--438, 2013{\natexlab{a}}.
\newblock \doi{10.1109/FOCS.2013.53}.
\newblock URL \url{https://doi.org/10.1109/FOCS.2013.53}.

\bibitem[Duchi et~al.(2013{\natexlab{b}})Duchi, Wainwright, and
  Jordan]{DJW13NIPS}
John~C. Duchi, Martin~J. Wainwright, and Michael~I. Jordan.
\newblock Local privacy and minimax bounds: Sharp rates for probability
  estimation.
\newblock In \emph{Advances in Neural Information Processing Systems 26: 27th
  Annual Conference on Neural Information Processing Systems 2013. Proceedings
  of a meeting held December 5-8, 2013, Lake Tahoe, Nevada, United States.},
  pages 1529--1537, 2013{\natexlab{b}}.
\newblock URL
  \url{http://papers.nips.cc/paper/5013-local-privacy-and-minimax-bounds-sharp-rates-for-probability-estimation}.

\bibitem[Dwork and Lei(2009)]{DworkL09}
C.~Dwork and J.~Lei.
\newblock Differential privacy and robust statistics.
\newblock In \emph{{STOC}}, 2009.

\bibitem[Dwork et~al.(2015)Dwork, Su, and Zhang]{DSZ15}
C.~Dwork, W.~Su, and L.~Zhang.
\newblock Private false discovery rate control.
\newblock \emph{CoRR}, abs/1511.03803, 2015.

\bibitem[Dwork and Roth(2014)]{DR14}
Cynthia Dwork and Aaron Roth.
\newblock The algorithmic foundations of differential privacy.
\newblock \emph{Foundations and Trends in Theoretical Computer Science},
  9:\penalty0 211--407, 2014.
\newblock ISSN 1551-305X.
\newblock \doi{10.1561/0400000042}.

\bibitem[Dwork et~al.(2006{\natexlab{a}})Dwork, Kenthapadi, McSherry, Mironov,
  and Naor]{DKMMN06}
Cynthia Dwork, Krishnaram Kenthapadi, Frank McSherry, Ilya Mironov, and Moni
  Naor.
\newblock Our data, ourselves: Privacy via distributed noise generation.
\newblock In \emph{Advances in Cryptology - {EUROCRYPT} 2006, 25th Annual
  International Conference on the Theory and Applications of Cryptographic
  Techniques, St. Petersburg, Russia, May 28 - June 1, 2006, Proceedings},
  pages 486--503, 2006{\natexlab{a}}.
\newblock \doi{10.1007/11761679_29}.

\bibitem[Dwork et~al.(2006{\natexlab{b}})Dwork, Mcsherry, Nissim, and
  Smith]{DMNS06}
Cynthia Dwork, Frank Mcsherry, Kobbi Nissim, and Adam Smith.
\newblock Calibrating noise to sensitivity in private data analysis.
\newblock In \emph{In Proceedings of the 3rd Theory of Cryptography
  Conference}, pages 265--284. Springer, 2006{\natexlab{b}}.

\bibitem[Erlingsson et~al.(2014)Erlingsson, Pihur, and Korolova]{EPK14}
\'{U}lfar Erlingsson, Vasyl Pihur, and Aleksandra Korolova.
\newblock Rappor: Randomized aggregatable privacy-preserving ordinal response.
\newblock In \emph{Proceedings of the 2014 ACM SIGSAC Conference on Computer
  and Communications Security}, CCS '14, pages 1054--1067, New York, NY, USA,
  2014. ACM.
\newblock ISBN 978-1-4503-2957-6.
\newblock \doi{10.1145/2660267.2660348}.
\newblock URL \url{http://doi.acm.org/10.1145/2660267.2660348}.

\bibitem[Feldman(2017)]{Feldman17}
Vitaly Feldman.
\newblock Dealing with range anxiety in mean estimation via statistical
  queries.
\newblock In \emph{{ALT}}, pages 629--640, 2017.

\bibitem[Gaboardi and Rogers(2018)]{GR18}
Marco Gaboardi and Ryan Rogers.
\newblock Local private hypothesis testing: Chi-square tests.
\newblock In \emph{{ICML}}, volume~80 of \emph{{JMLR} Workshop and Conference
  Proceedings}, pages 1612--1621. JMLR.org, 2018.

\bibitem[Gaboardi et~al.(2016)Gaboardi, Lim, Rogers, and Vadhan]{GLRV16}
Marco Gaboardi, Hyun~Woo Lim, Ryan Rogers, and Salil~P. Vadhan.
\newblock Differentially private chi-squared hypothesis testing: Goodness of
  fit and independence testing.
\newblock In \emph{Proceedings of the 33rd International Conference on
  International Conference on Machine Learning - Volume 48}, ICML'16, pages
  2111--2120. JMLR.org, 2016.
\newblock URL \url{http://dl.acm.org/citation.cfm?id=3045390.3045613}.

\bibitem[Hogg et~al.(2005)Hogg, McKean, and Craig]{HoggMC05}
R.V. Hogg, J.W. McKean, and A.T. Craig.
\newblock \emph{Introduction to Mathematical Statistics}.
\newblock Pearson education international. 2005.

\bibitem[Johnson and Shmatikov(2013)]{JS13}
Aaron Johnson and Vitaly Shmatikov.
\newblock Privacy-preserving data exploration in genome-wide association
  studies.
\newblock In \emph{Proceedings of the 19th ACM SIGKDD International Conference
  on Knowledge Discovery and Data Mining}, KDD '13, pages 1079--1087, New York,
  NY, USA, 2013. ACM.

\bibitem[Kairouz et~al.(2016)Kairouz, Bonawitz, and Ramage]{KairouzBoRa16}
Peter Kairouz, Keith Bonawitz, and Daniel Ramage.
\newblock Discrete distribution estimation under local privacy.
\newblock In \emph{Proceedings of the 33rd International Conference on
  International Conference on Machine Learning - Volume 48}, ICML'16, pages
  2436--2444. JMLR.org, 2016.
\newblock URL \url{http://dl.acm.org/citation.cfm?id=3045390.3045647}.

\bibitem[Kamath et~al.(2018)Kamath, Li, Singhal, and Ullman]{KamathLiSiUl18}
Gautam Kamath, Jerry Li, Vikrant Singhal, and Jonathan Ullman.
\newblock Privately learning high-dimensional distributions.
\newblock \emph{CoRR}, abs/1805.00216, 2018.
\newblock URL \url{http://arxiv.org/abs/1805.00216}.

\bibitem[Karwa and Vadhan(2018)]{KV18}
Vishesh Karwa and Salil~P. Vadhan.
\newblock Finite sample differentially private confidence intervals.
\newblock In \emph{9th Innovations in Theoretical Computer Science Conference,
  {ITCS} 2018, January 11-14, 2018, Cambridge, MA, {USA}}, pages 44:1--44:9,
  2018.
\newblock \doi{10.4230/LIPIcs.ITCS.2018.44}.
\newblock URL \url{https://doi.org/10.4230/LIPIcs.ITCS.2018.44}.

\bibitem[Kasiviswanathan et~al.(2008)Kasiviswanathan, Lee, Nissim,
  Raskhodnikova, and Smith]{KLNRS08}
Shiva~Prasad Kasiviswanathan, Homin~K. Lee, Kobbi Nissim, Sofya Raskhodnikova,
  and Adam~D. Smith.
\newblock What can we learn privately?
\newblock In \emph{49th Annual {IEEE} Symposium on Foundations of Computer
  Science, {FOCS} 2008, October 25-28, 2008, Philadelphia, PA, {USA}}, pages
  531--540, 2008.
\newblock \doi{10.1109/FOCS.2008.27}.
\newblock URL \url{https://doi.org/10.1109/FOCS.2008.27}.

\bibitem[Kifer and Rogers(2017)]{KR17}
Daniel Kifer and Ryan Rogers.
\newblock {A New Class of Private Chi-Square Hypothesis Tests}.
\newblock In Aarti Singh and Jerry Zhu, editors, \emph{Proceedings of the 20th
  International Conference on Artificial Intelligence and Statistics},
  volume~54 of \emph{Proceedings of Machine Learning Research}, pages
  991--1000, Fort Lauderdale, FL, USA, 20--22 Apr 2017. PMLR.
\newblock URL \url{http://proceedings.mlr.press/v54/rogers17a.html}.

\bibitem[Sheffet(2017)]{Sheffet17}
Or~Sheffet.
\newblock Differentially private ordinary least squares.
\newblock In \emph{ICML}, 2017.

\bibitem[Sheffet(2018)]{Sheffet18}
Or~Sheffet.
\newblock Locally private hypothesis testing.
\newblock In \emph{{ICML}}, volume~80 of \emph{{JMLR} Workshop and Conference
  Proceedings}, pages 4612--4621. JMLR.org, 2018.

\bibitem[Smith(2011)]{Smith11}
Adam Smith.
\newblock Privacy-preserving statistical estimation with optimal convergence
  rates.
\newblock In \emph{Proceedings of the Forty-third Annual ACM Symposium on
  Theory of Computing}, STOC '11, pages 813--822, New York, NY, USA, 2011. ACM.

\bibitem[Uhler et~al.(2013)Uhler, Slavkovic, and Fienberg]{USF13}
Caroline Uhler, Aleksandra Slavkovic, and Stephen~E. Fienberg.
\newblock Privacy-preserving data sharing for genome-wide association studies.
\newblock \emph{Journal of Privacy and Confidentiality}, 5\penalty0 (1), 2013.

\bibitem[Vu and Slavkovic(2009)]{VuS09}
Duy Vu and Aleksandra Slavkovic.
\newblock Differential privacy for clinical trial data: Preliminary
  evaluations.
\newblock In \emph{{ICDM}}, pages 138--143, 2009.

\bibitem[Wang et~al.(2015)Wang, Lee, and Kifer]{WLK15}
Yue Wang, Jaewoo Lee, and Daniel Kifer.
\newblock Differentially private hypothesis testing, revisited.
\newblock \emph{arXiv preprint arXiv:1511.03376}, 2015.

\bibitem[Warner(1965)]{Warner65}
Stanley~L. Warner.
\newblock Randomized response: A survey technique for eliminating evasive
  answer bias.
\newblock \emph{Journal of the American Statistical Association}, 60:\penalty0
  63--69, 1965.

\bibitem[Yu et~al.(2014)Yu, Fienberg, Slavkovic, and Uhler]{YFSU14}
Fei Yu, Stephen~E. Fienberg, Aleksandra~B. Slavkovic, and Caroline Uhler.
\newblock Scalable privacy-preserving data sharing methodology for genome-wide
  association studies.
\newblock \emph{Journal of Biomedical Informatics}, 50:\penalty0 133--141,
  2014.

\end{thebibliography}

\end{document}